\documentclass[12pt]{article}
\pdfoutput=1
\usepackage[utf8]{inputenc} 
\usepackage[T1]{fontenc} 
\usepackage[english]{babel}
\usepackage[margin=1.25in]{geometry}
\usepackage{amsmath}
\usepackage{amssymb, amsthm, mathrsfs}
\usepackage{mathtools, nccmath}
\usepackage{bbm}
\usepackage{hyperref}
\usepackage{pdfpages}
\usepackage{graphicx}
\usepackage{subfig}
\usepackage{caption}
\usepackage{verbatim}
\usepackage{multirow}
\usepackage{rotating}
\usepackage{tabularx}
\usepackage{xcolor}
\usepackage{array}
\usepackage{makecell}
\usepackage{setspace}
\usepackage{natbib}
\usepackage[capitalize]{cleveref}
\usepackage{enumerate}
\usepackage{thm-restate}
\usepackage{csquotes}
\usepackage[capposition=top]{floatrow}
\usepackage{soul}
\usepackage{tikz}
\usetikzlibrary{calc,decorations}

\usepackage[nocomma]{optidef}

\graphicspath{ {./images/} }

\newcommand{\N}{\mathbb{N}}

\newcommand{\cP}{\mathcal{P}}

\theoremstyle{definition}
\newtheorem{definition}{Definition}

\newtheorem{assumption}{Assumption}

\theoremstyle{plain}
\newtheorem{theorem}{Theorem}
\newtheorem{corollary}{Corollary}
\newtheorem{lemma}{Lemma}
\newtheorem{proposition}{Proposition}

\theoremstyle{remark}

\usepackage{color-edits}
\addauthor{yl}{blue}    
\addauthor{xq}{brown}    

\providecommand{\keywords}[1]
{
  \small	
  \textbf{\textit{Keywords---}} #1
}
\providecommand{\JEL}[1]
{
  \small	
  \textbf{\textit{JEL---}} #1
}

\title{Allocating Resources under Strategic Misrepresentation\thanks{We thank Asher Wolinsky, Bruno Strulovici, and Wojciech Olszewski for advice, helpful conversations, and comments. We also thank Ian Ball, Dirk Bergemann, Alex Bloedel, Simon Board, Eddie Dekel, Piotr Dworczak, Jeff Ely, Kira Goldner, Carl-Christian Groh, Yingni Guo, Marina Halac, Andrei Iakovlev, Annie Liang, Bart Lipman, Brendan Lucier, Deniz Kattwinkel, Joshua Mollner, Kyohei Okumura, Alessandro Pavan, Marcin Pęski,  Abhishek Sarkar, James Schummer, Christopher Snyder, Philipp Strack, Matthew Thomas, and participants at the UCLA Theory Seminar, 34th Stony Brook International Conference on Game Theory, Australasian Economic Theory Week, NUS Theory Lunch for helpful suggestions, comments, and discussions. Yingkai Li acknowledges financial support from NUS Start-up Grant. Part of the work was done while Yingkai Li was a Postdoc at Yale University under the support of the Sloan Research Fellowship, grant no.~FG-2019-12378.}}
\author{Yingkai Li\thanks{Department of Economics, National University of Singapore.
Email: \texttt{yk.li@nus.edu.sg}}
\and
Xiaoyun Qiu\thanks{Department of Economics, Dartmouth College.
Email:~\texttt{xiaoyun.qiu@dartmouth.edu}}}

\date{}

\begin{document}
\onehalfspacing
\newcommand{\setsize}[1]{{\left|#1\right|}}

\newcommand{\floor}[1]{
{\lfloor {#1} \rfloor}
}
\newcommand{\bigfloor}[1]{
{\left\lfloor {#1} \right\rfloor}
}


%
%
\newcommand{\given}{\,|\,}
\newcommand{\wgiven}{\,\mid\,}

\newcommand{\prob}[2][]{\text{\bf Pr}\ifthenelse{\not\equal{}{#1}}{_{#1}}{}\!\left[{\def\givenn{\middle|}#2}\right]}
\newcommand{\expect}[2][]{\text{\bf E}\ifthenelse{\not\equal{}{#1}}{_{#1}}{}\!\left[{\def\givenn{\middle|}#2}\right]}

\newcommand{\tparen}{\big}
\newcommand{\tprob}[2][]{\text{\bf Pr}\ifthenelse{\not\equal{}{#1}}{_{#1}}{}\tparen[{\def\given{\tparen|}#2}\tparen]}
\newcommand{\texpect}[2][]{\text{\bf E}\ifthenelse{\not\equal{}{#1}}{_{#1}}{}\tparen[{\def\given{\tparen|}#2}\tparen]}

\newcommand{\sprob}[2][]{\text{\bf Pr}\ifthenelse{\not\equal{}{#1}}{_{#1}}{}[#2]}
\newcommand{\sexpect}[2][]{\text{\bf E}\ifthenelse{\not\equal{}{#1}}{_{#1}}{}[#2]}

\newcommand{\lbr}[1]{\left\{#1\right\}}
\newcommand{\rbr}[1]{\left(#1\right)}
\newcommand{\cbr}[1]{\left[#1\right]}

\newcommand{\suchthat}{\,:\,}

\newcommand{\partialx}[2][]{{\tfrac{\partial #1}{\partial #2}}}
\newcommand{\nicepartialx}[2][]{{\nicefrac{\partial #1}{\partial #2}}}
\newcommand{\dd}{{\,\mathrm d}}
\newcommand{\ddx}[2][]{{\tfrac{\dd #1}{\dd #2}}}
\newcommand{\niceddx}[2][]{{\nicefrac{\dd #1}{\dd #2}}}
\newcommand{\grad}{\nabla}

\newcommand{\symdiff}{\triangle}
\newcommand{\abs}[1]{\left|#1\right|}
\newcommand{\indicate}[1]{{\bf 1}\left[#1\right]}
\newcommand{\reals}{\mathbb{R}}
\newcommand{\posreals}{\reals_+}
\newcommand{\naturals}{\mathbb{N}}
\newcommand{\posnaturals}{\naturals_+}
\newcommand{\supp}{\text{supp}}

\newcommand{\alloc}{Q}
\newcommand{\allocs}{\boldsymbol{Q}}
\newcommand{\signalalloc}{x}
\newcommand{\signalallocs}{\boldsymbol{x}}
\newcommand{\mechalloc}{y}
\newcommand{\mechallocs}{\boldsymbol{y}}
\newcommand{\contestalloc}{q}
\newcommand{\contestallocs}{\boldsymbol{q}}
\newcommand{\efficientalloc}{\alloc_{\rm E}}
\newcommand{\efficientutil}{\util_{\rm E}}
\newcommand{\optimalalloc}{\alloc_{\alpha}}
\newcommand{\optimalutil}{\util_{\alpha}}
\newcommand{\allocspace}{X}

\newcommand{\intalloc}{\mathcal{Q}}

\newcommand{\type}{\theta}
\newcommand{\types}{\boldsymbol{\theta}}
\newcommand{\typespace}{\Theta}
\newcommand{\Typespace}{\boldsymbol{\typespace}}
\newcommand{\reporttype}{\hat{\theta}}
\newcommand{\reporttypes}{\boldsymbol{\hat{\theta}}}
\newcommand{\signal}{s}
\newcommand{\signals}{\boldsymbol{\signal}}
\newcommand{\signalspace}{S}
\newcommand{\Signalspace}{\boldsymbol{\signalspace}}
\newcommand{\reportsignal}{\hat{s}}
\newcommand{\reportsignals}{\boldsymbol{\reportsignal}}
\newcommand{\generalsignal}{\tilde{s}}
\newcommand{\generalsignals}{\boldsymbol{\generalsignal}}

\newcommand{\effort}{e}
\newcommand{\util}{U}
\newcommand{\utils}{\boldsymbol{U}}
\newcommand{\expostutil}{u}
\newcommand{\expostutils}{\boldsymbol{u}}
\newcommand{\mech}{\mathcal{M}}
\newcommand{\strategy}{\sigma}
\newcommand{\dist}{F}
\newcommand{\dists}{\boldsymbol{F}}
\newcommand{\density}{f}
\newcommand{\hazard}{\frac{f}{1-F}}
\newcommand{\cost}{c}
\newcommand{\Cost}{C}
\newcommand{\prize}{v}
\newcommand{\prizes}{\boldsymbol{\prize}}
\newcommand{\quant}{q}

\newcommand{\dista}{D}
\newcommand{\utild}{\tilde{\util}}

\newcommand{\signalrecommend}{\tilde{\signal}}
\newcommand{\signalrecommends}{\tilde{\signals}}
\newcommand{\ability}{\eta}
\newcommand{\obj}{{\rm Obj}}
\newcommand{\primed}{^\dagger}
\newcommand{\dprimed}{^\ddagger}
\newcommand{\contest}{\mathcal{T}}
\newcommand{\instance}{\mathcal{I}}
\newcommand{\opt}{{\rm OPT}}

\newcommand{\efficientallocn}[1]{\alloc_{{\rm E},#1}}
\newcommand{\optimalallocn}[1]{\alloc_{\alpha,#1}}
\newcommand{\efficientutiln}[1]{\util_{{\rm E},#1}}
\newcommand{\optimalutiln}[1]{\util_{\alpha,#1}}
\newcommand{\ranking}{r}
\newcommand{\tie}{z}
\newcommand{\ce}{\Sigma}

\maketitle

\begin{abstract}
We study how to allocate resources to participants who can strategically misrepresent their deservingness at a cost. A principal assigns item(s) (or money) among multiple agents on the basis of their costly signals. Each agent's signal reflects their private type in the absence of misrepresentation but can be inflated above their true type at a cost. The principal is a social planner who aims to maximize the weighted average of \emph{matching efficiency} and a \emph{utilitarian objective}. Strategic misrepresentation introduces novel incentive-compatibility constraints, under which we characterize the optimal mechanism. We apply our characterization to two kinds of markets, distinguished by resource scarcity, and show that the principal strictly benefits from randomizing the allocations based on costly signals when the population of participants is large enough. Interestingly, in large markets with scarce resources, the format of the optimal mechanism converges to a winner-takes-all contest; however, there is a non-diminishing value in randomizing allocations to \emph{middle} types as the population of participants grows.
\end{abstract}

\keywords{Centralized market, strategic misrepresentation, contests, competition, public programs.}

\JEL{D47, D61, D82}

\section{Introduction} \label{sec:intro}
In many real-world scenarios, the designer aims to allocate scarce public resources to the most deserving agents to maximize social value. Ideally, in an environment with strong institutions and effective oversight, resources would be allocated based on verifiable evidence of deservingness, which is essentially unobservable characteristics made manifest through rigorous, truthful reporting. However, in settings characterized by weak oversight, the signals used to identify merit become malleable. Instead of providing truthful evidence, agents can exert costly strategic effort to misrepresent their underlying characteristics.
This creates a fundamental tension: while these signals convey information, they are susceptible to inflation. Agents are incentivized to invest in socially wasteful ways to inflate their signals without improving their underlying merit. Consequently, the designer faces a dual challenge: they must distinguish between genuine deservingness and manufactured signals, knowing that the ``evidence'' provided is a function of both an agent's true type and their strategic effort.\footnote{Given that the public resources could be money itself, it is usually not effective to use monetary transfers to provide incentives to solicit information.}

Examples of such real-world scenarios include the following: In public procurement, rather than the Request for Proposal (RFP) capturing a firm's true capacity (truthful evidence), weak oversight allows firms to manufacture ``paper capacity'' via shell companies (costly inflation)\citep{sba2025, usdoj2023}.
In public housing, rather than a transparent audit of financial need, the system relies on self-reporting that the designer cannot perfectly verify, leading to strategic underreporting.\footnote{For example, \citet{gao2005} reports that applicants underreported income and conceal assets to meet eligibility thresholds for public housing.}
Similarly, in scientific funding, peer review is intended to be a mechanism for truthful evidence of promise, but in a ``publish or perish'' environment with high stakes and limited verification of long-term impact, it devolves into the costly inflation of research claims \citep{NIHgrant}.

Although these applications are reminiscent of the classic signaling framework \citep[e.g.,][]{spence1973job}, they differ in two major ways. First, the signaling technology is different. In the classic framework, effort is the only way to make the underlying characteristics visible. In our setting, characteristics would be observable and verifiable under perfect oversight; however, under the weak oversight we model, agents gain the ability to inflate these signals through private effort. We explicitly model such strategic misrepresentation behavior: agents may produce inflated signals at a cost, while downward misreporting is costless. Second,  the classic signaling framework considers a decentralized setting where agents are paid a fair wage based on their signal, whereas we consider a centralized setting where agents compete for limited resources from a centralized planner. These two departures interact with each other and allow us to uncover interesting economic effects from costly signaling and competition.  

Motivated by these applications, we study resource allocation under strategic misrepresentation and apply these insights to the design of public programs.  A principal aims to allocate item(s) among multiple agents. The objective consists of two parts. The first part is \emph{matching efficiency}, a reduced form notion to capture how far away an allocation rule is from assortative matching. Assortative matching can be viewed as an achievement-based meritocracy, where merit is defined purely as the realized signal, regardless of how it was produced. Under this purely output-oriented view, the ``best'' (highest signal) deserves the item because they demonstrate the highest performance. The second part is a \emph{utilitarian objective} \`{a} la Harsanyi, the sum of the equally weighted utilities of the agents. Since it gives equal weight to every agent's utility, the utilitarian objective captures the principal's preference for fairness.\footnote{This utilitarian objective is also fair in the procedural sense, because it uses an unbiased "lottery of birth" where everyone has the same chance to be in any position.} The principal seeks to maximize a weighted combination of these objectives, accounting for endogenous misrepresentation incentives.

We first characterize the optimal mechanisms. Incorporating strategic misrepresentation introduces novel incentive compatibility (IC) constraints under which the principal's optimization problem is not convex. The conventional extremal point approach fails. Yet, we circumvent this challenge by showing that the optimal mechanism is symmetric when agents are ex-ante identical. This greatly simplifies the problem, enabling us to apply optimal control techniques to characterize the optimal mechanism. It exhibits the following form: each agent’s type space is partitioned into disjoint intervals falling into one of three categories: (1) a \emph{no-tension} interval, (2) the \emph{no-effort} interval, and (3) the \emph{efficient} interval. Types in the no-effort interval are pooled and assigned a coarse ranking, while types in the no-tension and efficient intervals preserve their strict ordering. Allocation is then determined efficiently based on this composite ranking. In equilibrium, agents in the no-tension and no-effort intervals exert no effort, whereas types in the efficient interval engage in positive effort. Importantly, while these three categories of intervals partition the type space, each category of interval may occur multiple times, and their ordering depends on primitives such as the type distribution and the number of available items.

Intuitively, the principal seeks to allocate items efficiently whenever doing so does not incentivize costly misrepresentation. This is feasible in the \emph{no-tension} interval, where truthfully revealing one's type is incentive-compatible, making efficient allocation optimal. However, when efficient allocation induces agents to misrepresent their types despite of the effort cost, the optimal mechanism depends on the principal's objective. If the weight on agents' utilities is sufficiently high, it may still be optimal to allocate efficiently while inducing effort (as in the \emph{efficient} interval).
Alternatively, the principal may randomize allocations to flatten the marginal benefit of effort, thereby eliminating incentives for misrepresentation. This occurs in the \emph{no-effort} interval, where all types in the interval exert no effort. A key insight from our analysis is that these extreme treatments, either fully suppressing effort or allocating efficiently, are optimal. Intermediate designs that induce small but positive signaling costs are never optimal.

To further illustrate the optimal mechanism, consider a simple setting where there are two agents, one item, and the distribution of each agent's type is identical and convex (See \Cref{fig:implement_convex_efficient} for a more detailed explanation). In this example, there is exactly one interval for each of the three regions, where the no-tension interval is followed by the no-effort interval, which is followed by the efficient interval. Hence, each agent's type space is divided into three categories: low, middle, and high. The notable feature of the optimal mechanism is that when both agents' types fall into the middle range, the item is allocated randomly, but the agent with a higher type has a higher probability (which is strictly less than $1$) of getting the item. When at least one agent's type falls outside the middle range, the item is allocated to the agent with a higher signal. Randomizing the item allocation is optimal among the middle types. This is because (1) it directly eliminates the incentive to exert signaling effort from middle types, and (2) it indirectly lowers the signaling effort required for high types. The total reduction in effort, and thus improvement in utilities, exceeds the loss in matching efficiency by deviating from assortative matching.

We apply this characterization to study large markets, distinguished by the scarcity of resources, which will inform us about the design of public programs.
In the first case, the principal allocates a single item,\footnote{The analysis extends naturally to any constant number of items and the allocation of money.} reflecting settings such as contracting for procurement or research funding, where the number of winners is small relative to the pool of applicants. As the number of agents grows, the optimal mechanism converges to a winner-takes-all (WTA) contest in structure: an efficient interval expands to cover the entire type space. However, the principal’s expected payoff under the WTA contest does not converge to the optimal payoff in the large-market limit.
The intuition is that, for any sufficiently large but finite number of agents, the WTA contest puts excessive pressure on agents with types close to the highest in the support in order to win the item, leading to significant costly efforts from those types. In contrast, by introducing a small but non-empty no-effort interval in the optimal mechanism for these high types, which randomizes the allocation, the principal can significantly reduce the cost of signaling with only a negligible loss in efficiency, as all these high types are almost equally qualified for receiving the item. Our finding contrasts with results in the large-contest literature that use a continuum of agents to approximate a finite market \citep[e.g.,][]{OS2016large}, where WTA contests often appear optimal.

In the second case, the number of items grows proportionally with the number of agents. This reflects programs such as college admissions or government benefit programs like public housing or food subsidies, where a sizable share of participants receive an item.
In this case, if the items were allocated efficiently, all agents with types above a certain cutoff would receive an item.
However, we find that in the optimal mechanism, the principal randomizes the allocation for types around the cutoff ( the ``middle'' types) to eliminate their incentives to exert costly effort. This increases the expected utilities of both the ``middle'' types and slightly higher types (those whose types are above the ``middle'' types, but not significantly above the cutoff), at the cost of only slightly reducing matching efficiency for the ``middle'' types.
Our result resonates with the Director's Law, which posits that public programs often disproportionately benefit the middle class. It also aligns with empirical findings by \citet{KLOST2022pareto}, who show that randomly assigning college seats to lower-scoring students in Turkey reduces stress across the board.\footnote{In our model, “low” types correspond to students with no realistic admission prospects. Randomizing among middle types in our setting mirrors the random allocation to low-scoring students in their empirical context.}

\subsection{Related Work}

Our paper is closely related to the literature on costly signaling and money burning \citep[e.g.,][]{chakravarty2006manna,hartline2008,condorelli2012money,finkelstein2019take,akbarpour2024redistributive,yang2024comparison}.
We have two main departures from this literature: the designer's objective and the agents' misrepresentation technology.
First, in our paper,  the principal seeks to maximize the weighted averages of matching efficiency (a preference over allocations) and the agents' utilities. If the weight on matching efficiency is zero, the principal's objective is purely from redistributive concerns \citep[e.g.,][]{dworczak2021redistribution,akbarpour2024redistributive}.
Second, prior work in this literature imposes a restrictive quasi-linear structure on agents' preferences, while our work relaxes this restriction. Cases that have been studied in the literature include  (1) agents' utilities are linear in transfers \citet{hartline2008}, (2) agents' utilities are linear in transfers \citep{akbarpour2024redistributive,yang2024comparison}, and (3) the designer can choose the cost function, essentially resulting in a  
linear utility function \citep{condorelli2012money}. 
In contrast, our paper considers an exogenous, type-dependent cost structure that is more realistic and does not follow a linear structure. As shown in \citet{li2025mechanism}, the standard payoff equivalence fails in our environment, and the implementation of allocation rules and the (non-)coordination of costly signaling choices play a crucial role in reducing the cost of screening. 
Our paper provides a characterization of the optimal mechanism under this more realistic cost structure and applies these theoretical insights to the design of public programs.

Additionally, our analysis connects mechanism design under costly signaling to contests \citet{li2025mechanism}. The non-coordination mechanism is essentially an all-pay contest. Our characterization of optimal contests in large markets sharply contrasts with the results in large contests by \citet{olszewski2020performance}, as our paper provides an environment where the optimal payoff in a finite large contest cannot be approximated by the limiting case.

Our paper is related to the literature on falsification and costly lying \citep[e.g.,][]{green1986partially,hardt2016strategic,perez2022test,perez2023fraud,perez2024score}. 
The paper most closely related to ours is the contemporaneous work by \citet{perez2024score}. They focus on a single-agent model and a different design objective.  In contrast, we focus on a multi-agent environment and  the principal's main objective is to maximize the weighted averages of matching efficiency and the agents' utilities. As a result, the competition effect is absent in \citet{perez2024score} but is captured by our model. This allows us to study competition for limited resources.  

The signals in our model can be viewed as messages sent by the agents to the principal, and the corresponding effort costs can be viewed as lying costs. This ties our model to the literature on partial verification, where, however, lying costs are assumed to be either $0$ or $\infty$, depending on the agent's true type and his message \citep[e.g.,][]{green1986partially}.
In our paper, we allow for a more general form of lying cost, where the cost of lying can increase with the magnitude of the lie. 
Furthermore, in contrast to the literature on evidence \citep[e.g.,][]{BDL2014,mylovanov2017optimal}, our model involves evidence that is not ``hard,'' since an agent can (at a cost) fabricate a signal different from his true type.

Our paper is also related to the literature on grant making\citep[e.g.,][]{carnehl2025designing,adda2024grantmaking}. \citet{adda2024grantmaking} study a setting similar to our large market model with abundant resources. They study a general equilibrium model where the competition effect is absent. Thus, resources are allocated by a cut-off rule. In contrast, we adopt a mechanism design approach that incorporates the competition effect in grant allocation. Due to the non-negligible competition effect, our optimal mechanism randomizes allocation near the cut-off.

The costly signaling aspect of our paper resembles the literature on signaling \citep[e.g.,][]{spence1973job} and gaming \citep[e.g.,][]{FK2019muddled,Ball2019}, which studies manipulative behaviors in signaling games. We depart to these lines of literature in two ways. The first distinction is that in signaling games, there is a competitive market that pays each agent a wage corresponding to their estimated type, whereas we adopt a mechanism design perspective in these markets and characterize the optimal mechanisms for allocating resources using costly signals as screening devices. The second distinction is that we study a different signaling technology that captures the commonly observed strategic misrepresentation behaviors in practice.

\section{Model}
\label{sec:model}
The principal has $k\geq 1$ identical items to allocate among $n> k$ agents based on their costly signals.
Let $\allocspace \subseteq [0,1]^n$ be the space of feasible allocations where $\sum_{i=1}^n\signalalloc_i\leq k$ and $0\leq \signalalloc_i \leq 1$.

Each agent $i$ has a private type~$\type_i$ drawn independently from a publicly known distribution~$\dist_i$ supported on $\Theta_i\subseteq \reals_+$.
For agent $i$ with type $\type_i$, his cost of generating a costly signal $\signal_i\in \Signalspace=\reals_+$ is $c(\signal_i,\type_i) = \ability\cdot(\signal_i-\type_i)^+$, where $\eta$ is a parameter that captures the agent's marginal cost of inflating the signals.
For every agent, he receives a utility $1$ for receiving an item and $0$ for not receiving anything. 
Thus, his utility from obtaining an item with probability $\signalalloc_i$ and generating a costly signal $\signal_i$ is 
\begin{align*}
\expostutil_i(\signalalloc_i,\signal_i,\type_i)=\signalalloc_i - \eta\cdot(\signal_i-\type_i)^+.
\end{align*}

\paragraph{Mechanisms.} 
It is without loss of generality to focus on direct mechanisms. A direct mechanism $(\signalrecommends,\mechallocs)$ consists of a signal recommendation rule $\signalrecommends: \prod_{i=1}^n \Typespace_i\to\Signalspace^n$, where $\signalrecommends=(\signalrecommend_i)_{i=1}^n$, and an allocation rule $\mechallocs:\Signalspace^n\to \allocspace$, where $\mechallocs=(\mechalloc_i)_{i=1}^n$. 
By \citet{li2025mechanism}, it is without loss of optimality to focus on direct non-coordination mechanisms.\footnote{
\citet{li2025mechanism} shows that if the principal favors assortative matching, which is the case in our setting, then it is without loss of optimality to focus on direct non-coordination mechanisms.} 
In a direct non-coordination mechanism, the signal recommendation policy is such that $\signalrecommends_i:\Typespace_i\to \Signalspace_i$ for every $i$, and the allocation rule is a function of the observed signals.
Based on the recommendation, each agent $i$ produces a signal $\signal'_i$.
The principal observes the signal profile $\signals'=(\signal_i')_{i=1}^n$.
Each agent $i$ receives an item with probability $\mechalloc_i(\signals')$.  
For convenience, from now on, when it does not cause confusion, we will omit the term non-coordination. Thus, any direct mechanism is, by default, a direct non-coordination mechanism.

\paragraph{Interim approach.}
For analytical convenience, we adopt the following interim approach to characterize the optimal mechanism. It is equivalent to work in the space of  \emph{interim allocation} and  \emph{interim utility}  as long as they satisfy the interim feasibility constraint \citep[e.g.,][]{border1991implementation,che2013generalized}. 

Given any direct mechanism $(\signalrecommends,\mechallocs)$, the interim allocation and the interim utility of agent~$i$ with private type $\type_i$ are
\begin{align}\label{interim}
\alloc_i(\type_i)&=\expect[\types_{-i}]{\mechalloc_i(\signalrecommend_i(\type_i),\signalrecommends_{-i}(\types_{-i}))},\tag{consistency}\\
\util_i(\type_i) &=
\expect[\types_{-i}]{\expostutil_i(\mechalloc_i(\signalrecommend_i(\type_i),\signalrecommends_{-i}(\types_{-i})),\signalrecommend_i(\type_i),\type_i)}.\nonumber
\end{align}
 
A direct mechanism $(\signalrecommends,\mechallocs)$ is incentive compatible (IC) if for any agent $i$ and any types $\type_i,\type'_i$, 
his expected utility from truthfully reporting his type and following the signal recommendation is not lower than that from misreporting as $\type'_i$ and producing the recommended signal $\signalrecommend_i(\type'_i)$. That is,
\begin{align}\label{eq:IC1}
\util_i(\type_i) \geq \max_{\type'_i} \expect[\types_{-i}]{
\expostutil_i(\mechalloc_i(\signalrecommend_i(\type'_i),\signalrecommends_{-i}(\types_{-i})),\signalrecommend_i(\type'_i),\type_i)}.\tag{IC}
\end{align}
Denote the interim allocation profile and interim utility profile by $\allocs=(\alloc_i)_{i=1}^n$ and $\utils=(\util_i)_{i=1}^n$, respectively. 
We say $(\allocs,\utils)$ is implementable if there exists an IC non-coordination mechanism that implements it, i.e., the consistency constraints are satisfied. 

The principal aims to maximize a weighted average between matching efficiency and the agents' utilities. Given an allocation profile $\signalallocs=(\signalalloc_i)_{i=1}^n$, a signal profile $\signals=(\signal_i)_{i=1}^n$ and the agents' type profile $\types=(\type_i)_{i=1}^n$, the principal's payoff is 
$\alpha \cdot \sum_i \type_i\cdot \signalalloc_i 
+ (1-\alpha)\cdot\sum_i \expostutil_i(\signalalloc_i,\signal_i,\type_i)$, for any arbitrary  $\alpha\in[0,1]$.
Thus, the principal's problem is equivalent to one where she chooses $(\allocs,\utils)$ subject to \eqref{interim} and \eqref{eq:IC1} to maximize 
\begin{equation}\label{exp:principal_payoff}
\text{Obj}_{\alpha}(\allocs,\utils)= \expect[\types]{\alpha \cdot \sum_i \type_i\cdot \alloc_i(\type_i) 
+ (1-\alpha)\cdot\sum_i \util_i(\type_i)}.
\end{equation}

\paragraph{Symmetric environment.}
We assume that the density function $\density_i$ exists for all $i$. We assume that the agents are ex-ante homogeneous, i.e., $\Theta_i=\Theta=[\underline{\type},\bar{\type}]$, $\dist_i=\dist$ and $\density_i=\density$ for all $i$. 
Moreover, $\density(\type_i) >0$ for any $\type_i\in \Theta$.

\section{Optimal Mechanism}
\label{sec:optimal}
To characterize the optimal mechanism, we first provide a characterization of the incentive compatibility (IC) constraints. It is quite different from traditional settings because of the way we model strategic misrepresentation. This novel IC condition introduces non-convexity into the principal's optimization problem.
Nonetheless, we establish that the optimal mechanism is symmetric when agents are ex-ante symmetric. 
Finally, we characterize the optimal mechanism shaped by the novel IC condition.

\subsection{Incentive compatibility}
We first provide a closed-form characterization of the incentive compatibility conditions in any direct non-coordination mechanism that implements a monotone allocation rule.
Note that focusing on monotone allocations remains without loss of optimality \citep[c.f.,][]{li2025mechanism}.

\begin{lemma}[\citealp{li2025mechanism}]
\label{lmm:monotone optimal}
For any interim allocation--utility pair $(\allocs,\utils)$ that is implementable,
there exists another interim allocation--utility pair $(\allocs\primed,\utils\primed)$ with monotone allocation $\allocs\primed$ that is implementable and yields a weakly higher objective value for all $\alpha\in [0,1]$, where $\alpha$ is the welfare weight on \emph{matching efficiency} defined in \eqref{exp:principal_payoff}.
\end{lemma}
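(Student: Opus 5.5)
The plan is to start from an implementing IC non-coordination mechanism $(\signalrecommends,\mechallocs)$ and modify it into a mechanism whose interim allocations are monotone. We want every agent to be weakly better off, while the allocation is reshuffled so that higher types get more. The key observation concerns a signal $\signal$ recommended to some type of agent $i$. Any type $\type_i$ can mimic it at cost $\eta(\signal-\type_i)^+$, which is weakly decreasing in $\type_i$. Hence higher types face weakly lower costs for every signal.

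\textbf{Step 1: reduced single-agent problem.} Fix $\signalrecommends_{-i}$ and $\mechallocs$. Agent $i$ faces a menu of pairs $(\alloc_i(\signal),\signal)$, where $\alloc_i(\signal)=\expect[\types_{-i}]{\mechalloc_i(\signal,\signalrecommends_{-i}(\types_{-i}))}$. By IC, type $\type_i$ picks a maximizer of $\alloc_i(\signal)-\eta(\signal-\type_i)^+$.

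\textbf{Step 2: single crossing yields a monotone selection.} The function $\alloc-\eta(\signal-\type)^+$ has increasing differences in $(\signal,\type)$, since $-\eta(\signal-\type)^+$ is supermodular. By Topkis, the largest maximizer is nondecreasing in $\type_i$. Among a type's optimal choices, some may give a lower $\alloc_i$. We replace each type's recommendation by the optimal choice with the highest signal, and hence, after a tie-break, the highest interim allocation. Utilities $\util_i$ are unchanged because each new choice is still optimal, and IC is preserved because the menu is unchanged. The $\alloc_i$ along maximizers is then nondecreasing in type, since a higher signal must buy a weakly higher allocation whenever it is chosen.

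\textbf{Step 3: interaction across agents (main obstacle).} Changing agent $i$'s recommendations changes the distribution of signals that the other agents face. That in turn changes their menus $\alloc_j(\cdot)$ and possibly their incentives. I would attack this at the interim level. Define $\alloc_i^\dagger$ as the monotone rearrangement of $\alloc_i$, namely the decreasing-to-increasing reallocation across types of agent $i$. Border's condition is invariant under measure-preserving rearrangement within an agent, since it depends on the allocation only through quantile sums. So $\allocs^\dagger$ is interim feasible, and matching efficiency $\expect{\type_i\alloc_i^\dagger(\type_i)}$ weakly increases by the rearrangement inequality.

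For utilities, I would construct $\util_i^\dagger$ from the envelope-type characterization. Because signals can only be inflated, the minimal cost of implementing a monotone $\alloc_i^\dagger$ is attained by recommending to each type the smallest signal consistent with IC. I would argue that this yields $\util_i^\dagger\ge \util_i$ pointwise in expectation by a swap argument. Whenever a lower type received more allocation than a higher type, exchanging their bundles saves cost for the high type, and the low type can simply report truthfully at zero cost.

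Finally, I would implement $\allocs^\dagger$ ex post by a symmetric-rank allocation over the new signal profile. Each agent's new signal is monotone in type, so signals carry the type ranking and Border-implementation can be realized as a function of signals. I expect this implementation step, ensuring that ex post allocations depending only on signals reproduce $\allocs^\dagger$ while keeping IC, to be the delicate part. As a fallback, I would rely on the non-coordination result of \citet{li2025mechanism} cited above.
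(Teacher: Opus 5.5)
The paper states this lemma without proof, importing it from \citet{li2025mechanism}, so I assess your argument on its own merits. Your Step~3 skeleton is the right one. The increasing rearrangement $Q_i^\dagger$ of $Q_i$ is interim feasible: the right side of \eqref{eq:feasibility} depends on each $A_i$ only through $F_i(A_i)$, and for sets of a given measure the left side is maximized by top quantiles. Matching efficiency weakly rises by the rearrangement inequality.

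\textbf{The gap.} The step that carries the entire lemma, $U_i^\dagger\ge U_i$, is not established. Your swap heuristic does not prove it, and its key claim is false in general. In the utility-maximal implementation of $Q_i^\dagger$, the low type need not get its bundle at zero cost. Indeed, \eqref{eq:utils} gives $U_i^\dagger(\theta)<Q_i^\dagger(\theta)$ whenever $Q_i^\dagger(\theta)-Q_i^\dagger(\theta')>\eta(\theta-\theta')$ for some $\theta'<\theta$, and the rearrangement can easily create such steepness. Moreover, exchanges between individual types do not describe the quantile rearrangement, so they cannot deliver the comparison by themselves.

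Step~2 should be discarded. Moving every type to its largest maximizer raises $Q_i$ pointwise and shifts the law of agent $i$'s signal upward. The menus of agents $j\neq i$ therefore change, so the claim that IC survives because the menu is unchanged is false. A largest maximizer also need not exist.

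\textbf{The missing idea.}
\begin{enumerate}
\item The \emph{only if} half of \cref{lem:monotone alloc-util implementation} never uses monotonicity of $Q_i$. By direct mimicking, any implementable pair has $U_i\le Q_i$, $U_i$ nondecreasing, and $U_i(\theta)-U_i(\theta')\le\eta(\theta-\theta')$ for $\theta'<\theta$.
\item Since $Q_i(\theta'')\ge U_i(\theta'')\ge U_i(\theta)$ for every $\theta''\ge\theta$, you get $\Pr[Q_i<U_i(\theta)]\le F_i(\theta)$. Hence the increasing rearrangement, with the right-continuous quantile convention, satisfies $Q_i^\dagger(\theta)\ge U_i(\theta)$.
\item Take $U_i^\dagger(\theta)=\inf_{\theta'\le\theta}\{Q_i^\dagger(\theta')+\eta(\theta-\theta')\}$, which is \eqref{eq:utils} with $\underline{u}_i=Q_i^\dagger(\underline\theta)$. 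Then $U_i^\dagger(\theta)\ge\inf_{\theta'\le\theta}\{U_i(\theta')+\eta(\theta-\theta')\}\ge U_i(\theta)$. So the comparison holds pointwise, not merely in expectation, and independently of $\alpha$.
\item By the proof of \cref{cor:payoff equiv}, $(\allocs^\dagger,\utils^\dagger)$ satisfies \eqref{eq:IC}. The \emph{if} half of \cref{lem:monotone alloc-util implementation} then already implements it by a non-coordination mechanism: a Border ex post rule composed with the inverse of the signal map.
\end{enumerate}

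A symmetric-rank rule is neither needed nor generally available, since the starting $\allocs$ need not be symmetric. Falling back on \citet{li2025mechanism} would be circular, as that is the source of the lemma itself.
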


\begin{restatable}{lemma}{lemMonotoneImplement}\label{lem:monotone alloc-util implementation}
An interim allocation--utility pair $(\allocs,\utils)$ with monotone $\allocs$ is implementable by a non-coordination mechanism
if and only if $\allocs$ is interim feasible,
and for any agent $i$ with type $\type_i$,\footnote{The function $\util_i$ may not be differentiable everywhere. 
For any type $\type_i$ such that $\util_i$ is not differentiable at $\type_i$, we let $\util'_i(\type_i)$ denote any subgradient (or simply the left and right derivative) of the function $\util_i$. It is not hard to show that $\util_i$ is a continuous and monotone function and hence is differentiable almost everywhere.}
\begin{align}\label{eq:IC}
\text{(1)}\ \ \util'_i(\type_i)\in [0, \ability];
\quad \text{(2)}\ \ \util_i(\type_i)\leq \alloc_i(\type_i);
\quad \text{(3)}\ \ \util'_i(\type_i)=\ability \text{ if }
\util_i(\type_i) < \alloc_i(\type_i).
\tag{IC}
\end{align}
\end{restatable}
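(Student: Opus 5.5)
The plan is to work entirely with the \emph{interim signal-allocation} function. Fix a non-coordination mechanism and the others' recommendation rules $\signalrecommends_{-i}$. For every signal $\signal\in\Signalspace$ define
\begin{align*}
X_i(\signal)=\expect[\types_{-i}]{\mechalloc_i(\signal,\signalrecommends_{-i}(\types_{-i}))}.
\end{align*}
A type $\type_i$ who produces signal $\signal$ earns $X_i(\signal)-\ability(\signal-\type_i)^+$. Consistency then reads $\alloc_i(\type_i)=X_i(\signalrecommend_i(\type_i))$ and $\util_i(\type_i)=\alloc_i(\type_i)-\ability(\signalrecommend_i(\type_i)-\type_i)^+$. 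The equivalence will be proved in two directions: necessity by envelope-type arguments on this representation, and sufficiency by explicitly constructing $\signalrecommend_i$ and $\mechallocs$.

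\textbf{Necessity.} Condition (2) is immediate because the cost is nonnegative. For (1), \eqref{eq:IC1} gives
\begin{align*}
\util_i(\type_i)=\max_{\type'_i}\big[\alloc_i(\type'_i)-\ability(\signalrecommend_i(\type'_i)-\type_i)^+\big].
\end{align*}
Each function inside the maximum is nondecreasing in $\type_i$ and $\ability$-Lipschitz. Hence their upper envelope $\util_i$ is monotone and $\ability$-Lipschitz, so every subgradient lies in $[0,\ability]$.

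For (3), suppose $\util_i(\type_i)<\alloc_i(\type_i)$. Then $\signalrecommend_i(\type_i)>\type_i$. Any type $\type''\in(\type_i,\signalrecommend_i(\type_i)]$ can copy the signal $\signalrecommend_i(\type_i)$ and save cost, which gives
\begin{align*}
\util_i(\type'')\ge\util_i(\type_i)+\ability(\type''-\type_i).
\end{align*}
Together with the Lipschitz bound, this forces the right derivative of $\util_i$ at $\type_i$ to equal $\ability$. This is all that is needed under the paper's subgradient convention, and it also yields equality almost everywhere.

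\textbf{Sufficiency.} Given monotone, interim-feasible $\allocs$ and $\utils$ satisfying (1)--(3), set
\begin{align*}
\signalrecommend_i(\type_i)=\type_i+\frac{\alloc_i(\type_i)-\util_i(\type_i)}{\ability}.
\end{align*}
This signal is nonnegative by (2), and it makes consistency of $\util_i$ automatic once $X_i(\signalrecommend_i(\type_i))=\alloc_i(\type_i)$.

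I will first show that $\signalrecommend_i$ is nondecreasing. On the region where $\util_i<\alloc_i$, condition (3) gives derivative $\alloc_i'/\ability\ge 0$, and upward jumps of $\alloc_i$ only increase $\signalrecommend_i$. On the region where $\util_i=\alloc_i$ we have $\signalrecommend_i=\type_i$, which is strictly increasing. Moreover, if $\signalrecommend_i$ is constant on an interval, then (1) forces $\alloc_i$ to be constant there as well. So any pooling induced by $\signalrecommend_i$ happens only where $\alloc_i$ is flat.

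Next I use Border's theorem to obtain an ex post rule in type space that implements $\allocs$. I transport it to signal space by letting each on-path signal stand for its (pooled) preimage, averaging over the pool; this preserves both $\alloc_i$ and feasibility. Each off-path signal $\signal$ is treated as the highest on-path signal not exceeding it, so that $X_i(\signal)=\sup\{\alloc_i(\type):\signalrecommend_i(\type)\le\signal\}$, and a signal below all on-path signals is treated as the lowest one (the recommendation of $\underline{\type}$).

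Finally I verify \eqref{eq:IC1}. Deviating from type $\type_i$ to type $\type'$'s signal yields
\begin{align*}
\alloc_i(\type')-\ability(\signalrecommend_i(\type')-\type_i)^+.
\end{align*}
For $\type'>\type_i$, I integrate $\util_i'\le\ability$ and use that cost rises at rate exactly $\ability$ while $\signalrecommend_i(\type')\ge\type_i$. For $\type'<\type_i$, I use that $\util_i$ is nondecreasing and the cost is weakly lower for the true type. Off-path signals are dominated by the corresponding on-path signal, since they give the same allocation at weakly higher cost.

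The step I expect to be the main obstacle is this last part of sufficiency: translating a type-space Border rule into a signal-space rule $\mechallocs$ whose interim allocation against the \emph{others' recommended signals} reproduces $\alloc_i$ exactly. The delicate points are pooling intervals and jumps of $\signalrecommend_i$, and the translated rule must remain ex post feasible while making every off-path signal unattractive. I would handle this by working with the quantile (rank) representation of signals. Because the environment is symmetric and each $\signalrecommend_i$ is monotone, the rank of an agent's signal coincides with the rank of the agent's type up to ties within pools, so the Border rule can be written in ranks and applied to signals directly.
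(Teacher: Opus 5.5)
Your necessity direction is correct. The upper-envelope argument for (1) is a tidier version of the paper's two one-sided comparisons, and your argument for (3) is the paper's; you should also record that $\alloc_i$ is interim feasible, being induced by the ex post rule $\mechalloc_i(\signalrecommends(\types))$. Your recommendation $\signalrecommend_i$ and your upward-deviation check also coincide with the paper's.

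The gap is in sufficiency, in the check of downward deviations $\type'<\type_i$. The two facts you invoke point the wrong way. Since $(\signalrecommend_i(\type')-\type_i)^+\le(\signalrecommend_i(\type')-\type')^+$, the lower cost for the true type shows that the deviation payoff $\alloc_i(\type')-\ability(\signalrecommend_i(\type')-\type_i)^+$ is \emph{at least} $\util_i(\type')$. Monotonicity of $\util_i$ then says nothing about whether it is at most $\util_i(\type_i)$. In fact, the only place you use condition (3) is to show $\signalrecommend_i$ is monotone. That already follows from (1), because $\signalrecommend_i(\type)=(\type-\util_i(\type)/\ability)+\alloc_i(\type)/\ability$ is a sum of nondecreasing functions. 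Since (3) is necessary, a sufficiency proof that never uses it cannot be right. Concretely, take $\ability=1$, $\Theta=[0,1]$, $n=2$, $k=1$, $\alloc_i\equiv 1/2$ and $\util_i\equiv 1/4$. Then (1) and (2) hold and $\signalrecommend_i(\type)=\type+1/4$, but type $1/5$ imitating type $0$ earns $1/2-1/20>1/4$. Your argument would nonetheless certify this pair as incentive compatible.

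The missing step is the paper's case split, which is exactly where (3) enters.
\begin{itemize}
\item If $\util_i(\type')=\alloc_i(\type')$, then $\signalrecommend_i(\type')=\type'<\type_i$, imitation is costless, and the payoff is $\util_i(\type')\le\util_i(\type_i)$.
\item If $\util_i(\type')<\alloc_i(\type')$ and some $\type''\in(\type',\type_i]$ has $\util_i(\type'')=\alloc_i(\type'')$, the payoff is at most $\alloc_i(\type')\le\alloc_i(\type'')=\util_i(\type'')\le\util_i(\type_i)$.
\item Otherwise $\util_i<\alloc_i$ on all of $[\type',\type_i]$. By (3) and continuity, $\util_i$ is affine with slope $\ability$ there, so the payoff is at most $\alloc_i(\type')-\ability(\signalrecommend_i(\type')-\type_i)=\util_i(\type')+\ability(\type_i-\type')=\util_i(\type_i)$.
\end{itemize}

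By contrast, the step you flag as the main obstacle is routine. You already showed that $\alloc_i$ is constant on every pool of $\signalrecommend_i$. So mapping a pooled signal to a type drawn from $\dist$ conditioned on the pool, as the paper does, preserves $\alloc_i$ and ex post feasibility, with no rank representation or symmetry needed. Off-path signals need no treatment either, because \eqref{eq:IC1} only compares against recommended signals $\signalrecommend_i(\type'_i)$.
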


The idea behind \cref{lem:monotone alloc-util implementation} is as follows. For any interim allocation--utility pair $(\allocs,\utils)$ that is implementable by a non-coordination mechanism, there is a level constraint and a slope constraint on interim utility. 
The level constraint is intuitive: because the effort costs are non-negative, each agent's utility is bounded above by his allocation.
The slope constraint says that the marginal increase in the interim utility is bounded above by the marginal cost of effort; if this were not the case, then a low type would have an incentive to misreport and produce the recommended signal for a higher type. 
Finally, if the level constraint is slack at any type $\type$, 
the equilibrium effort for type~$\type$ must be strictly positive. 
In order to eliminate the incentives for higher types to deviate to~$\type$, 
the slope constraint must be binding at type $\type$.

\subsection{Symmetric mechanism}
Notice that a convex combination of any two implementable allocation–utility pairs may violate the \eqref{eq:IC} constraints.
This implies that the incentive constraints \eqref{eq:IC} are not convex. 
This poses a challenge to find the optimal mechanism because the program is not convex, even though the objective function is linear. 
Hence, the conventional approach of finding extremal points of the convex program \citep{KMS2021extreme} fails to work in our setting.
We circumvent this challenge by establishing that the optimal non-coordination mechanism for this non-convex optimization problem is always symmetric in symmetric environments, as summarized in the following lemma.

\begin{lemma}\label{lmm:symmetry}
The optimal mechanism is symmetric for any $\alpha\in[0,1]$. 
\end{lemma}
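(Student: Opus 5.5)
The plan is a symmetrization argument. Nonconvexity stops us from averaging arbitrary pairs $(\allocs,\utils)$, so we first reduce to a problem where the utility is a function of the allocation, and then show that this function behaves concavely under averaging.

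Fix $\alpha\in[0,1]$ and an optimal implementable pair $(\allocs,\utils)$. By \cref{lmm:monotone optimal} we may assume each $\alloc_i$ is monotone, so \cref{lem:monotone alloc-util implementation} applies. For a monotone interim allocation $\alloc$ on $\Theta$, let $\mathcal{U}(\alloc)$ denote the set of functions $\util$ satisfying conditions (1)--(3) of \eqref{eq:IC}. The objective \eqref{exp:principal_payoff} is linear in $\allocs$ and increasing in each $\util_i$, and the agents' constraints are coupled only through interim feasibility of $\allocs$. Hence at an optimum each $\util_i$ must be the pointwise-largest element $\util^{\alloc_i}$ of $\mathcal{U}(\alloc_i)$, provided that element exists.

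\emph{Step 1} establishes existence and gives a usable description of $\util^{\alloc}$. The natural candidate comes from reading (3) from the right: on any slack region the utility climbs at rate exactly $\ability$ until it touches $\alloc$. So the candidate is built by following $\alloc$ wherever $\alloc$ rises at rate at most $\ability$, and otherwise linearly interpolating with slope $\ability$ up to the next contact point. Equivalently, it is the lower envelope of $\alloc$ and lines of slope $\ability$ anchored at contact points. I would verify that this function satisfies (1)--(3) and dominates every element of $\mathcal{U}(\alloc)$. The domination step is a single-crossing argument: any $\util\in\mathcal{U}(\alloc)$ that exceeds the candidate at some $\type$ must, by (3), keep slope $\ability$ until it meets $\alloc$, and this contradicts either (2) or the slope bound at the contact point.

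\emph{Step 2} averages over permutations. Let $\bar\alloc=\frac1n\sum_i\alloc_i$. Then $\bar\alloc$ is monotone. It is also interim feasible, because Border's constraints in the symmetric environment are convex and permutation-invariant: averaging the $n!$ relabeled versions of the mechanism gives a feasible symmetric allocation. Assign every agent the allocation $\bar\alloc$ and the utility $\util^{\bar\alloc}$. The matching-efficiency term is unchanged by linearity. It then suffices to prove the superadditivity inequality
\[
\util^{\bar\alloc}(\type)\;\ge\;\frac1n\sum_{i=1}^n \util^{\alloc_i}(\type)\qquad\text{for all }\type\in\Theta,
\]
which gives a weakly higher utilitarian term and hence a symmetric optimal mechanism.

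\emph{Step 3} is this inequality, and I expect it to be the main obstacle: it is exactly where nonconvexity bites. The average $\bar\util=\frac1n\sum_i\util^{\alloc_i}$ satisfies (1) and (2) for $\bar\alloc$. It can violate (3) at a type where some but not all agents' constraints bind, since there $\bar\util<\bar\alloc$ while the slope of $\bar\util$ is below $\ability$. I would therefore aim to show that $\util^{\bar\alloc}$ is the largest function $\util$ satisfying only (1), (2) and the weaker condition that $\util$ never lies above the slope-$\ability$ continuation toward its next contact with $\alloc$. I would then check that this relaxed feasible set is convex and contains $\bar\util$, so that $\bar\util\le\util^{\bar\alloc}$.

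Concretely, the first thing I would try is to write
\[
\util^{\alloc}(\type)=\inf_{\type'\ge\type}\Big[\alloc(\type')-\ability(\type'-\type)\Big]\ \text{ adjusted by the monotonicity/slope cap from the left},
\]
that is, as an infimum of functionals that are affine in $\alloc$. An infimum of affine functionals is concave, which would give the inequality directly. If the left slope cap breaks the infimum representation, I would instead compare the two functions interval by interval. On intervals where $\util^{\bar\alloc}=\bar\alloc$ the inequality is immediate from (2). On slack intervals both sides have slope at most $\ability$, the left side has slope exactly $\ability$, and the two sides are ordered at the right endpoint, so integrating backward from that endpoint gives the inequality.
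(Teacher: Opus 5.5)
Your overall plan is sound: replace each $\util_i$ by the maximal utility compatible with $\alloc_i$, average the allocations over permutations, and show that $\alloc\mapsto\util^{\alloc}$ is concave. The gap is in Step 3, which you rightly identify as carrying the whole argument. It is not established, and both routes you propose fail because they are oriented the wrong way. The representation $\inf_{\type'\geq\type}[\alloc(\type')-\ability(\type'-\type)]$ is the largest function below $\alloc$ whose slope is \emph{at least} $\ability$, not at most. For $\alloc\equiv c$ it equals $c-\ability(\bar{\type}-\type)$, whereas $\util^{\alloc}\equiv c$, and capping it can only lower it further. The interval-by-interval fallback also fails. On a slack interval $(a,b)$ of $\util^{\bar{\alloc}}$, the difference $D=\util^{\bar{\alloc}}-\bar{\util}$ satisfies $D'=\ability-\bar{\util}'\geq 0$, so $D$ is nondecreasing there and $D(b)\geq 0$ tells you nothing on $(a,b)$. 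Integrating backward from the right endpoint therefore loses the inequality. You would need $D(a)\geq 0$ at the left contact point and then integrate forward, which is just the domination argument again.

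The missing observation is that condition (3) plays no role in determining the maximal element. Suppose $0\leq\util'\leq\ability$ and $\util\leq\alloc$. Then for every $\type'\leq\type$ we have $\util(\type)\leq\util(\type')+\ability(\type-\type')\leq\alloc(\type')+\ability(\type-\type')$, so $\util\leq\util^{\alloc}$ with $\util^{\alloc}(\type):=\inf_{\type'\leq\type}\{\alloc(\type')+\ability(\type-\type')\}$. This function is \eqref{eq:utils} with $\underline{u}=\alloc(\underline{\type})$, and the proof of \cref{cor:payoff equiv} shows it satisfies all of \eqref{eq:IC}. Hence $\util^{\alloc}$ is the maximum of the \emph{convex} set defined by (1)--(2) alone. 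Your Step 1 domination argument invokes (3), but it only needs (1)--(2). Since you already note that $\bar{\util}$ satisfies (1)--(2) for $\bar{\alloc}$, you get $\bar{\util}\leq\util^{\bar{\alloc}}$ immediately. Equivalently, $\util^{\alloc}$ is an infimum of functionals affine in $\alloc$, anchored on the left rather than the right, hence concave, and no ``weaker condition'' is needed. With this repair your argument is a pointwise rendering of the paper's proof. The paper drops (3) to obtain a convex relaxed problem, takes a symmetric optimum there, and observes that its utility is maximal given the allocation and therefore satisfies (3) automatically.
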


By restricting our attention to symmetric mechanisms, 
the problem of designing the optimal mechanism reduces to a single-agent optimization problem.
When there is no ambiguity, we omit the subscript $i$ from the notation for this single-agent problem; we use the interim allocation rule $\alloc$ and the utility function $\util$ for a single agent to refer to the interim allocation profile and the interim utility profile, respectively. Let $\efficientalloc(\type)$ 
be the interim allocation rule maximizing matching efficiency, i.e., the efficient allocation rule. 
The optimization problem can then be reformulated as follows: 

\begin{equation}
\tag{$\hat{\cP}_{\alpha}$}
\label{eq:P_alpha single}
    \begin{aligned}
     \hat{V}_{\alpha} = \sup_{\alloc,\util} \quad& \expect[\type]{\alpha \cdot \type\cdot \alloc(\type)
     + (1-\alpha)\cdot \util(\type)} \\
     \text{s.t.} \quad& \alloc \text{ is monotone and feasible},\\
     \quad& (\alloc,\util) \text{ satisfies }\eqref{eq:IC}.
    \end{aligned}
\end{equation}
Recall that not all interim allocation rules are feasible.
The set of feasible interim allocations has been characterized in \citet{border1991implementation,che2013generalized}.
See \cref{lem:border} for a detailed discussion.

Note that, as shown in \citet{li2025mechanism}, any symmetric non-coordination mechanism can be implemented as a coarse ranking contest in arbitrary symmetric environments. Moreover, in our setting, there is a one-to-one mapping between type reports and signals. As a result, any implementable direct mechanism is essentially an all-pay contest. Therefore, for the rest of the paper, we often refer to a non-coordination mechanism as a contest.

\subsection{Optimal Contest}
Now we are ready to characterize the optimal contest. Since the optimal mechanism is symmetric. We denote the interim allocation-utility pair by  $(\optimalalloc,\optimalutil)$ for each agent, for any $\alpha\in(0,1)$.
\begin{theorem}
\label{thm:optimal characterization}
For any $\alpha\in(0,1)$,  the optimal mechanism is a contest that partitions each agent's type  space into intervals 
$\{(\underline{\type}^{(j)}, \bar{\type}^{(j)})\}_{j=1}^{\infty}$,\footnote{If the partition is finite, say, consisting of only $K$ disjoint intervals, then  $\underline{\type}^{(j)}=\bar{\type}^{(j)}$ for all $j> K$.}
such that each interval $(\underline{\type}^{(j)}, \bar{\type}^{(j)})$ belongs to 
exactly one of the following three regions:%
\footnote{The definitions of the interim allocation and utility on the cutoff points $\{\underline{\type}^{(j)}\}_{j=1}^{\infty}$ do not affect the objective value since they have zero measure.}
\begin{enumerate}[(1)]
    \item \emph{No-tension region}: $\optimalalloc(\type)=\optimalutil(\type)=\efficientalloc(\type)$ 
    and $\optimalutil'(\type) < \ability$ 
    for any type $\type\in (\underline{\type}^{(j)},\bar{\type}^{(j)})$.
    
    \item \emph{No-effort region}: $\optimalalloc(\type)=\optimalutil(\type)$ and $\optimalutil'(\type) = \ability$ for any type $\type\in (\underline{\type}^{(j)},\bar{\type}^{(j)})$, and 
        \begin{align*}
        \int_{\underline{\type}^{(j)}}^{\bar{\type}^{(j)}} \optimalalloc(\type) \dd \dist(\type) 
        = \int_{\underline{\type}^{(j)}}^{\bar{\type}^{(j)}} \efficientalloc(\type) \dd \dist(\type).
        \end{align*}

    \item \emph{Efficient region}: $\optimalalloc(\type)=\efficientalloc(\type)> \optimalutil(\type)$ and $\optimalutil'(\type) = \ability$ for any type $\type\in (\underline{\type}^{(j)},\bar{\type}^{(j)})$.
    
\end{enumerate}
\end{theorem}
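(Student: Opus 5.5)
By \cref{lmm:symmetry} and \cref{lmm:monotone optimal}, it suffices to solve the single-agent program \eqref{eq:P_alpha single}. Its constraints are the three conditions in \eqref{eq:IC} together with Border feasibility of $\alloc$ (\cref{lem:border}). In the symmetric case, Border feasibility says that $\alloc$ is feasible iff its ``upper-tail'' integrals satisfy
\[
\int_{\type}^{\bar\type}\alloc\,\dd\dist\le\int_{\type}^{\bar\type}\efficientalloc\,\dd\dist \quad\text{for all } \type,
\]
with equality at $\type=\underline{\type}$ if the items are fully allocated. We will argue that the problem has two faces and treat each separately:
\begin{itemize}
\item the pointwise ``state'' $\util$ with control $\util'\in[0,\ability]$;
\item the allocation $\alloc$, which enters linearly in the objective and in the Border majorization constraint, and enters \eqref{eq:IC} only through $\util\le\alloc$ and the complementarity condition (3).
\end{itemize}

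The first step is to fix the set $T=\{\type:\util(\type)<\alloc(\type)\}$, where effort is positive, and to observe the following. On $T$ the slope is pinned at $\ability$, and $\alloc$ is otherwise unconstrained there except through Border. Since the objective puts weight $\alpha\type$ on $\alloc$, increasing in $\type$, a standard majorization/exchange argument shows that on each component of $T$ one should push $\alloc$ as close to efficient as Border allows. We will then show that $\alloc=\efficientalloc$ there: if a component of $T$ carried slack in a Border constraint, shifting mass upward within it preserves $\util<\alloc$ for small perturbations and strictly raises matching efficiency. This gives the \emph{efficient region}.

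On the complement, $\util=\alloc$, so $\alloc$ is itself Lipschitz with slope in $[0,\ability]$. There we distinguish two cases according to whether the slope bound binds.
\begin{itemize}
\item Where $\util'<\ability$ on an open interval, the slope constraint is locally slack. A local perturbation argument (raising $\alloc$ at higher types and lowering it at lower types within the interval, keeping $\util=\alloc$) improves both terms unless Border already binds, so $\alloc=\efficientalloc$. This is the \emph{no-tension region}.
\item Where $\util'=\ability$ with $\util=\alloc$, we get a linear ramp whose total mass must match efficient mass over the interval. If the strict inequality held at an endpoint, we would be able to shift mass toward the top and improve efficiency. Conversely, the ramp cannot exceed the efficient upper tail by Border. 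Hence equality of the integrals holds at both endpoints, which is the \emph{no-effort region}.
\end{itemize}

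Formally, I would carry this out by writing the problem as an optimal control problem. The state is $(\util, G)$ with $G(\type)=\int_\type^{\bar\type}(\efficientalloc-\alloc)\dd\dist\ge0$; the controls are $\util'$ and $\alloc$; the mixed constraint is $\util\le\alloc$, together with complementarity. Pontryagin with state constraints would then give a multiplier on $G\ge 0$ that is nonincreasing, and the three regions would correspond to which constraints bind.

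The main obstacle is the non-convex complementarity condition (3) in \eqref{eq:IC}, which blocks direct use of the maximum principle. I would first handle it by fixing the partition between $\{\util=\alloc\}$ and $\{\util<\alloc,\util'=\ability\}$ and optimizing within each; for a fixed partition the problem becomes linear. The remaining step is to show that the partition components can be taken to be intervals, with at most countably many of them, and that ``intermediate'' configurations (positive but interior effort with $\alloc\neq\efficientalloc$) are dominated. I expect that step to require the most care, in particular the argument via a perturbation that merges such a region either into an efficient region or into a no-effort ramp, using the linearity of the objective in the perturbation size.
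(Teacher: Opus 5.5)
Your plan, local exchange arguments applied region by region, differs from the paper's route, and it has a genuine gap at the no-tension step.

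\textbf{Flat segments.} The perturbation you describe (lower $Q$ at lower types, raise it at higher types, keep $U=Q$) must vanish at both endpoints because $U$ is continuous. So it has to \emph{decrease} near each endpoint, which is only possible where $U'$ is bounded away from $0$. On a flat segment, where $U=Q$ is constant and $U'=0$, the perturbation is infeasible. Lowering $U$ just right of the left endpoint violates $U'\ge 0$. Raising it just left of the right endpoint pushes $U$ above $U$ at that endpoint, which breaks monotonicity unless the next interval is moved too.

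This is not a technicality. Where feasibility is slack, the first-order analysis (\cref{lmm:constraint}) only yields $U=Q$ and $U'\in\{0,\eta\}$, so flat segments survive every local argument. Since $Q_E$ is strictly increasing, a flat segment fits none of the three regions, and the theorem fails unless such segments are ruled out. The paper does this with a separate nonlocal argument (\cref{lem:no case 1b}):
\begin{itemize}
\item it enlarges the window to $[\underline\theta^{(j)}-\epsilon,\bar\theta^{(j)}+\epsilon]$;
\item it mixes $(Q,U)$ with a mass-preserving flat--ramp--flat profile of slope $\eta$;
\item it does a case analysis on the interval just above: if that interval is no-tension, feasibility is violated just below it; if no-effort, there is uniform slack on the enlarged window; if efficient or at the top, there is room to raise $Q$, with $U$ raised above it.
\end{itemize}
Your last paragraph flags ``interior effort with $Q\neq Q_E$'' as the delicate case. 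But positive effort already forces $U'=\eta$. The case that actually needs extra work is zero effort with zero slope.

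\textbf{The same boundary problem elsewhere.} It affects two more steps.
\begin{itemize}
\item \emph{Efficient region.} You keep monotonicity of $Q$ as a constraint, and then an upward shift \emph{within} a component of $\{U<Q\}$ need not exist. If $Q$ is constant on a component $(a,b)$ with $U(b)=Q(b)$ and $Q$ continuous at $b$, every monotone, mass-preserving perturbation supported in $(a,b)$ is identically zero. The paper avoids this by working in the relaxation $(\hat{\mathcal R}_\alpha)$, which drops both condition (3) and monotonicity of $Q$. That program is convex, and (3) holds automatically at its optimizer because $U$ is then the largest function below $Q$ with slope in $[0,\eta]$ (cf.\ \eqref{eq:utils}). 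Euler--Lagrange conditions plus complementary slackness then give the bang--bang structure. This is also the clean way to handle the non-convexity, rather than fixing the partition.
\item \emph{No-effort equality.} On a ramp with $U=Q$ and $U'=\eta$, $Q$ can be raised but never lowered. Lowering $Q$ forces lowering $U$, which cannot climb back at slope at most $\eta$. So ``shift mass toward the top'' does not justify binding at the endpoints. The correct argument is continuity of $\mathcal{Q}$, together with feasibility binding throughout the adjacent no-tension and efficient intervals, plus a separate argument at $\underline\theta$. That argument in turn presupposes that flat segments have already been excluded.
\end{itemize}
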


The principal's problem is an optimization problem with two constraints: interim feasibility (IF) and incentive compatibility (IC).
Note that the interim feasibility constraints follow the standard Border's characterization, which is provided in detail in \cref{subapx:interim_feasible}.
Thus, a necessary condition for a mechanism to be optimal is that either one of the constraints binds or neither of them binds.
When (IF) binds, the optimal allocation rule and the efficient allocation rule coincide. 
In the no-tension region, the efficient allocation rule can be implemented with zero effort because the marginal cost of effort outweighs the marginal benefit of increased allocation consumption.
However, when the slope of the efficient allocation rule is steep enough, the marginal benefit of getting a higher allocation outweighs the marginal cost of exerting effort.
As a result, (IC) requires that agents exert a positive effort to implement the efficient allocation.
When this is the case, the principal needs to decide whether to implement the efficient allocation rule.
In the efficient region, the efficient allocation rule is implemented, and agents are required to exert certain levels of effort to maintain incentive compatibility. Both (IC) and (IF) bind in this region.
In the no-effort region, an allocation rule with a flattened slope is implemented, so that agents have no incentives to exert effort. In this region, (IC) binds, and the optimal allocation and utility coincide since no agent exerts positive effort.
\cref{tb:3categories} summarizes these possibilities.

\begin{table}[t]
\begin{tabular}{|c|c|c|c|c|}
\hline
           & IC binds     & IF  binds    &  effort   & $\optimalalloc$    \\ \hline
no-tension region& $\times$   & $\checkmark$ & $=0$     & $=\efficientalloc$     \\ \hline
no-effort region & $\checkmark$ & $\times$   & $=0$ & $\neq \efficientalloc$ \\ \hline
efficient region & $\checkmark$ & $\checkmark$ & $>0$ & $=\efficientalloc$    \\ \hline
\end{tabular}
\caption{Regions of the type space for the optimal mechanism}\label{tb:3categories}
\end{table}

The key observation from the characterization is that, when there are tensions between allocating efficiently and maintaining incentive compatibility, the principal either implements the efficient allocation, inducing agents to exert high effort, or  eliminates agents' incentives to exert effort completely. Intermediate scenarios where the principal randomizes the allocations to partially reduce effort are never optimal. 

The allocation rule can be quite complex, as both the number of intervals in each region and their ordering may exhibit complex dependencies on the shape of the efficient allocation rule and the coefficient $\alpha$.
Nonetheless,  \cref{thm:optimal characterization} enables us to obtain a sharper characterization of the optimal mechanism in large markets (see \cref{sec:largecontests}). 

We outline the main steps to derive \cref{thm:optimal characterization} and defer the formal proof to \cref{apx:A3_IC}.
First, \cref{lmm:symmetry} reduces the original $n$-agent problem to the single-agent program \eqref{eq:P_alpha single}.
The constraint set combines interim feasibility---which in symmetric monotone environments is equivalent to the cumulative majorization constraint \eqref{eq:feasibility symmetric} (see \cref{lem:border} in the Appendix)---and the novel \eqref{eq:IC} restrictions from \cref{lem:monotone alloc-util implementation}.
Because \eqref{eq:IC} is non-convex, standard convex-program/extreme-point methods do not apply.

It is tempting to pursue a Myersonian reduction that ``eliminates'' one instrument.
In classical quasilinear screening, IC yields an envelope representation so transfers can be substituted out, leaving a pointwise (virtual-surplus) problem in the allocation.
Here, although IC does pin down the best implementable utility for a given interim allocation $\alloc$ (cf.\ \cref{cor:payoff equiv} and \eqref{eq:utils}), the resulting map $\alloc\mapsto \util$ is nonlocal (it involves an $\inf_{\type'\le \type}$ operator), so eliminating $\util$ does not lead to a tractable pointwise objective in $\alloc$.
Second, the alternative reinterpretation suggested by the quasilinear analogy also turns out to be unhelpful in our setting.
One may try to view the signal as the ``allocation'' object and the interim allocation $\alloc$ as the ``transfer'' object, mirroring screening with quasilinear transfers.
But then interim feasibility \eqref{eq:feasibility symmetric} becomes a feasibility constraint on the transfer schedule (since $\alloc$ must be implementable as an interim allocation from an ex-post $k$-unit allocation rule).
Standard Myerson tools rely crucially on transfers being free of such feasibility restrictions; with a continuum of Border-type majorization inequalities imposed on the transfer-like object, there is no direct virtual-value reduction or ironing argument to recover an optimal solution.\footnote{\citet{perez2024score} provides a characterization of the optimal mechanism in a single-agent setting, which avoids such a challenge.} 

To overcome these challenges, we apply optimal control techniques.
Write feasibility in terms of the cumulative allocation state $\intalloc(\type):=\int_{\type}^{\bar{\type}}\alloc(t)\dd\dist(t)$, so that \eqref{eq:feasibility symmetric} becomes the state constraint
$\intalloc(\type)\le \int_{\type}^{\bar{\type}}\efficientalloc(t)\dd\dist(t)$ for all $\type$, with $\intalloc'(\type)=-\alloc(\type)\density(\type)$.
Treat the utility slope $u(\type):=\util'(\type)\in[0,\ability]$ as the control, and rewrite the IC level constraint $\util(\type)\le \alloc(\type)$ as $\util(\type)\density(\type)+\intalloc'(\type)\le 0$.
Euler–Lagrange conditions plus complementary slackness yield a ``bang–bang'' implication: whenever feasibility is slack, the level constraint binds ($\util=\alloc$) and the control must hit an endpoint ($u\in\{0,\ability\}$); see \cref{lmm:constraint}.
This then pins down where feasibility must bind: if the mechanism induces effort ($\alloc>\util$), feasibility binds and $\alloc=\efficientalloc$ almost everywhere (\cref{cor:case3}); likewise, if $0<u<\ability$, feasibility binds and again $\alloc=\efficientalloc$ almost everywhere (\cref{cor:case1a}).
These implications restrict the optimum to the three patterns in \cref{thm:optimal characterization} (no-tension, no-effort, efficient), and a perturbation argument rules out flat segments with $u=0$ (\cref{lem:no case 1b}).
In particular, mechanisms that randomize allocations to \emph{partially} reduce effort (without eliminating it) are never optimal.

\section{Large Contests}
\label{sec:largecontests}
In this section, we show that the optimal mechanism exhibits particularly simple structures when the population of participants is large. 
To simplify the exposition, we make an additional assumption throughout this section.\footnote{The main economic insights extend without this assumption.}

\begin{assumption}[continuity]
\label{asp:continuity}
There exist $\underline{\beta}_1,\bar{\beta}_1,\beta_2 \in (0,\infty)$ such that $\density(\type) \in [\underline{\beta}_1,\bar{\beta}_1]$ and $\density'(\type) \geq -\beta_2$ 
 for any type $\type\in[\underline{\type},\bar{\type}]$.
\end{assumption}

\subsection{Scarce Resources}
\label{sec:scarce}
In some applications, such as the awarding of prestigious fellowships to university students, 
the competition is fierce 
and the ratio of the number of competing agents to the number of items available is large. 
In this subsection, we study the model of \cref{sec:model} in the special case where $k=1$ 
and the number of agents $n$ is very large.\footnote{When $k$ is a small constant greater than $1$, the analysis is significantly more involved. We omit this case here since the economic insights it yields are similar.} 
For sufficiently large $n$, the efficient allocation rule becomes convex, which simplifies the characterization of the optimal contest. The proofs of the results in this subsection are provided in \cref{apx:A5_scarce}.

\begin{lemma}\label{lem:large_convex}
Let $k=1$. Under \Cref{asp:continuity},
there exists a positive integer $N$
such that for any $n\geq N$, 
the efficient allocation rule $\efficientalloc(\type)$ is convex in $\type$. 
\end{lemma}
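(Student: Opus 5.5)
With $k=1$ and $n$ symmetric agents, the efficient interim allocation is $\efficientalloc(\type)=\dist(\type)^{n-1}$, up to measure-zero ties, since a type wins exactly when all $n-1$ opponents have lower types. The plan is to verify convexity directly through the second derivative. Under \Cref{asp:continuity}, $\dist$ is differentiable with density $\density$, and $\density$ is absolutely continuous enough that $\density'$ exists with $\density'\ge -\beta_2$; if $\density'$ exists only a.e., I would instead show that $\efficientalloc'$ is nondecreasing. The derivatives are
\begin{align*}
\efficientalloc'(\type)&=(n-1)\dist(\type)^{n-2}\density(\type),\\
\efficientalloc''(\type)&=(n-1)\dist(\type)^{n-3}\Big[(n-2)\density(\type)^2+\dist(\type)\density'(\type)\Big].
\end{align*}
Since $(n-1)\dist^{n-3}\ge 0$, it suffices to show that the bracket is nonnegative for every $\type\in[\underline{\type},\bar{\type}]$ once $n$ is large.

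For the bracket I would use the bounds from \Cref{asp:continuity}: $\density(\type)^2\ge \underline{\beta}_1^2$, $\dist(\type)\le 1$, and $\density'(\type)\ge-\beta_2$. These give
\[
(n-2)\density(\type)^2+\dist(\type)\density'(\type)\ \ge\ (n-2)\underline{\beta}_1^2-\beta_2 .
\]
The term $\dist\density'$ is at least $-\beta_2$ because $\dist\in[0,1]$: when $\density'\ge 0$ the term is nonnegative, and otherwise it is at least $\density'\ge -\beta_2$. Choosing $N\ge 2+\beta_2/\underline{\beta}_1^2$ makes the bracket nonnegative uniformly in $\type$ for all $n\ge N$. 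Hence $\efficientalloc''\ge 0$ on the interior and $\efficientalloc$ is convex. I would take $N\ge 3$ as well, so that $\dist^{n-3}$ has a nonnegative exponent and there is no singularity at $\underline{\type}$ where $\dist=0$.

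The only delicate step is regularity. The argument needs $\density$ to be differentiable, or at least absolutely continuous, so that the inequality $\efficientalloc''\ge0$ a.e. actually implies convexity. To be safe, I would argue that $\efficientalloc'=(n-1)\dist^{n-2}\density$ is nondecreasing. For $\type_1<\type_2$, write $\log\efficientalloc'$ on the set where $\dist>0$. Its increment is $(n-2)\log(\dist(\type_2)/\dist(\type_1))+\log(\density(\type_2)/\density(\type_1))$. The first term is at least $(n-2)\underline{\beta}_1(\type_2-\type_1)/\dist(\type_2)\ge (n-2)\underline{\beta}_1(\type_2-\type_1)$. The second term is at least $-\beta_2(\type_2-\type_1)/\underline{\beta}_1$, using the lower bound on $\density'$ together with $\density\ge\underline{\beta}_1$. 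The increment is therefore nonnegative once $(n-2)\underline{\beta}_1^2\ge\beta_2$, which matches the same threshold $N$ and gives convexity without any extra smoothness.
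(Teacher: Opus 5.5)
Your proposal is correct and follows the paper's proof: the paper likewise computes $\efficientalloc''=(n-1)\dist^{n-3}\big((n-2)\density^2+\dist\density'\big)$, bounds it below by $(n-1)\dist^{n-3}\big((n-2)\underline{\beta}_1^2-\dist\,\beta_2\big)$, and takes $N\geq 2+\beta_2/\underline{\beta}_1^2$, which already forces $N\geq 3$ because $\beta_2>0$. Your extra check that $\efficientalloc'$ is nondecreasing is harmless but not smoothness-free: bounding $\log(\density(\type_2)/\density(\type_1))$ still needs $\density'\geq-\beta_2$ through the mean value theorem, which is fine since the assumption gives $\density'$ at every type.
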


\paragraph{Convex efficient allocation.}
First consider the case when the efficient allocation rule is convex. 
The optimal contest is then as follows.

\begin{proposition}
\label{prop:convex optimal}
Suppose $\efficientalloc(\type)$ is convex in $\type$. 
For any $\alpha\in(0,1)$, there exist cutoff types $\underline{\type}\leq\type^{(1)}\leq\type^{(2)}\leq\bar{\type}$
such that in the type space of each agent in the optimal contest $\optimalalloc$, the interval $(\underline{\type},\type^{(1)})$ is the no-tension region, $(\type^{(1)},\type^{(2)})$ is the no-effort region, and $(\type^{(2)},\bar{\type})$ is the efficient region.
 \end{proposition}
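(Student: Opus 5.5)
The plan is to start from \cref{thm:optimal characterization}, which already says the optimal contest partitions the type space into countably many intervals, each of no-tension, no-effort, or efficient type. What remains is to show that convexity of $\efficientalloc$ forces the ordering no-tension, then no-effort, then efficient, with at most one interval of each kind (some possibly empty). Three structural facts suffice.

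\textbf{Step 1 (no-tension types form a lower interval).} In a no-tension interval we have $\optimalutil=\efficientalloc$ and $\efficientalloc'<\ability$. Since $\efficientalloc$ is convex, $\efficientalloc'$ is nondecreasing, so the set $\{\type:\efficientalloc'(\type)<\ability\}$ is an initial segment $[\underline{\type},\hat{\type})$. Thus every no-tension interval lies inside $[\underline{\type},\hat{\type})$.

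I would then show that no-effort or efficient intervals cannot sit strictly below a no-tension interval. Take the maximal type $\type^{(1)}$ such that $\optimalutil=\optimalalloc=\efficientalloc$ on $[\underline{\type},\type^{(1)})$.
\begin{itemize}
\item If a no-effort block started at some $a<\hat{\type}$, then on it $\optimalalloc$ has slope $\ability>\efficientalloc'$ near $a$. Starting from $\optimalalloc(a)=\efficientalloc(a)$, the integral identity in the theorem would then force $\optimalalloc$ to cross back below $\efficientalloc$. Because $\efficientalloc$ is convex and the line has slope $\ability$, the first crossing from above to below can only occur where $\efficientalloc'\ge\ability$. So such a block must extend past $\hat{\type}$.
\item I would then use the perturbation from \cref{lem:no case 1b} to argue that the block can be shifted to start at a point where the slopes match. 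Alternatively, I would note that replacing its initial portion by $\efficientalloc$ keeps IC intact, since no effort is needed there, and weakly improves efficiency.
\end{itemize}

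\textbf{Step 2 (after $\type^{(1)}$, no-effort precedes efficient and each occurs once).} Above $\type^{(1)}$ we have $\efficientalloc'\ge\ability$, so every remaining interval is no-effort or efficient.
\begin{itemize}
\item In an efficient interval $\optimalutil'=\ability$ and $\optimalutil<\efficientalloc$. Since $\efficientalloc-\optimalutil$ has derivative $\efficientalloc'-\ability\ge0$, the gap is nondecreasing.
\item Suppose an efficient interval $(a,b)$ were followed by a no-effort interval $(b,c)$. Continuity of $\optimalutil$ would require $\optimalutil(b)=\optimalalloc(b^+)$, while $\optimalalloc(b^-)=\efficientalloc(b)>\optimalutil(b)$. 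This is a downward jump in $\optimalalloc$, contradicting monotonicity. So no efficient interval precedes a no-effort one.
\item Hence once effort starts it continues to $\bar{\type}$, which gives a single efficient interval $(\type^{(2)},\bar{\type})$.
\item Consecutive no-effort intervals in $(\type^{(1)},\type^{(2)})$ are all lines of slope $\ability$ glued continuously, so they merge into one line. I would then check that the integral condition is consistent for the merged interval, which holds by additivity.
\end{itemize}

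\textbf{Step 3 (the main obstacle).} The hard part is Step 1: excluding patterns such as no-tension, then no-effort, then no-tension again, or a no-effort block beginning while $\efficientalloc'<\ability$. For this I would use the cumulative feasibility state $\intalloc$. On a no-effort interval, the integral equality means feasibility binds at both endpoints. Convexity then implies that the chord-like line of slope $\ability$ lies above $\efficientalloc$ on a lower portion and below it on an upper portion, and the lower portion must sit where $\efficientalloc'\ge\ability$ for IC to be consistent at the left endpoint.

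Concretely, at a left endpoint $a$ of a no-effort block that is preceded by no-tension, continuity gives $\optimalalloc(a)=\efficientalloc(a)$. The block line then exceeds $\efficientalloc$ just to the right of $a$ if and only if $\efficientalloc'(a)<\ability$. In that case $\int_a^{x}(\optimalalloc-\efficientalloc)\dd F>0$, which violates the Border majorization $\intalloc(\type)\le\int_\type^{\bar{\type}}\efficientalloc\dd F$ just inside the block, since the tail feasibility constraint is evaluated from above. I expect carefully verifying this direction of the inequality, using the form \eqref{eq:feasibility symmetric}, to be the crux. Once it is settled, the no-effort block can only begin at or after $\hat{\type}$, and the ordering follows.
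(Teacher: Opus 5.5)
Your Step 3, which you rightly flag as the crux, has the direction of \eqref{eq:feasibility symmetric} reversed. The claim it is meant to deliver is also false. If the constraint binds at $a$ and $\optimalalloc>\efficientalloc$ on $(a,x)$, then $\intalloc(x)=\int_a^{\bar{\type}}\efficientalloc\dd\dist-\int_a^{x}\optimalalloc\dd\dist<\int_x^{\bar{\type}}\efficientalloc\dd\dist$. So the tail constraint becomes \emph{slack} just inside the block, not violated. Exceeding $\efficientalloc$ causes a violation only just \emph{below} a binding point, which is what the proof of \cref{lem:no case 1b} uses.

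In fact the reverse of your conclusion holds. A nondegenerate no-effort block starting at $a$ with $\optimalutil(a)=\efficientalloc(a)$ \emph{requires} $\efficientalloc'(a)<\ability$. If $\efficientalloc'(a)\ge\ability$, convexity keeps $\efficientalloc$ weakly above the slope-$\ability$ line on the whole block, so the integral identity forces the block to coincide with $\efficientalloc$. Thus ``the no-effort block can only begin at or after $\hat{\type}$'' would eliminate every genuine no-effort region and make the efficient (WTA) allocation optimal whenever $\efficientalloc$ is convex. That contradicts \cref{thm:scarce_nonconvergence} and \Cref{fig:convex_opt}, where the block starts at $1/3$ although $\efficientalloc'=\ability$ only at $1/2$.

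The same misconception affects two other steps. In the second bullet of Step 1, shifting the block to start where the slopes match makes it degenerate. Replacing its initial portion by $\efficientalloc$ lowers $\optimalutil$ there and, by continuity and the slope cap $\ability$, at every higher type, so it is not an improvement. Step 2 assumes $\efficientalloc'\ge\ability$ above $\type^{(1)}$, which fails precisely in the nontrivial case.

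The missing idea is to let the no-effort block start exactly at $\type^{(1)}$ and split on whether $\efficientalloc'(\type^{(1)})\ge\ability$ or $\efficientalloc'(\type^{(1)})<\ability$.

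First, below any no-tension interval convexity gives $\efficientalloc'\le\ability$. So setting $\optimalutil=\optimalalloc=\efficientalloc$ there is implementable with zero effort and maximizes both terms of the objective. Hence the no-tension types form $(\underline{\type},\type^{(1)})$, with $\optimalutil(\type^{(1)})=\efficientalloc(\type^{(1)})$ and $\optimalutil'=\ability$ afterwards.

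If $\efficientalloc'(\type^{(1)})\ge\ability$, convexity gives $\efficientalloc\ge\optimalutil$ above $\type^{(1)}$. Then no no-effort interval can differ from $\efficientalloc$, and the remaining types are efficient.

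If $\efficientalloc'(\type^{(1)})<\ability$, then $\optimalalloc\ge\optimalutil>\efficientalloc$ just above $\type^{(1)}$, so the next interval can only be no-effort; this is feasible because the tail constraint is slack there. Its integral identity forces the right end $\type^{(2)}$ to satisfy $\efficientalloc(\type^{(2)})\ge\optimalutil(\type^{(2)})$ and $\efficientalloc'(\type^{(2)})\ge\ability$. Hence $\efficientalloc>\optimalutil$ beyond $\type^{(2)}$, and all later intervals are efficient.

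Your downward-jump argument in Step 2 is a valid alternative for this last step, once you know the gap $\efficientalloc-\optimalutil$ is positive where an efficient interval ends.
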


\begin{figure}[t]
\centering
\begin{tikzpicture}[xscale=5,yscale=3.5]

\draw [<->] (0,1.1) -- (0,0) -- (1.1,0);
\node [right] at (1.15, 0 ) {$\type$};
\node [below, teal, thick]  at (0, 0 ) {$0$};
\node [below, teal, thick]  at (1, 0 ) {$1$};

\draw [blue, domain=0:1, thick] plot (\x, {\x^2});
\node [below, teal, thick] at (1/3, 0 ) {$\type^{(1)}$};
\node [below,teal,thick] at (5/6, 0 ) {$\type^{(2)}$};
\node [right, blue] at (1.15, 1.05 ) {$\efficientalloc(\type)=\type^2$};

\draw [red, domain=0:1/3, thick] plot (\x, {\x^2-0.00});
\draw [red, domain=1/3:5/6, thick] plot (\x, {\x-1/3+(1/3)^2-0.00});
\draw [red, domain=5/6:1, thick] plot (\x, {\x^2-0.00});
\node [right, red] at (1.15, 0.82 ) {$\optimalalloc(\type)$};

\draw [gray] (4/12,0) -- (4/12, 0.0525);
\draw [gray] (10/12,0) -- (10/12, 0.0525);

\draw [<->] (0,-0.2) -- (4/12,-0.2);
\draw [<->] (4/12,-0.2) -- (10/12,-0.2);
\draw [<->] (10/12,-0.2) -- (1,-0.2);

\node [below] at (0.05,-0.21) {\footnotesize{\text{no-tension}}};
\node [below] at (0.5,-0.21) {\footnotesize{\text{no-effort}}};
\node [below] at (0.9,-0.21) {\footnotesize{\text{efficient}}};
\end{tikzpicture}
\caption{Optimal interim allocation rule under convex $\efficientalloc(\type)$ }
\label{fig:convex_opt}
\rule{0in}{1.2em}$^\dag$\scriptsize Suppose $n=2$, $k=1$, $\dist(\type)=\type^2$, $\type\in [0,1]$, and $\ability=1$. In this example, the interim efficient allocation rule is $\efficientalloc(\type)=\dist^{n-1}(\type)=\type^2$, i.e., the highest type gets the item, and $\optimalalloc(\type)$ is the optimal interim allocation rule.\\
\end{figure}

\Cref{fig:convex_opt} illustrates the optimal interim allocation rule arising from a convex efficient allocation rule in an example with two agents. \Cref{fig:implement_convex_efficient} illustrates the corresponding ex post rule that maps type profiles to allocations.
Intuitively, when the efficient allocation is convex, its derivative crosses $\ability$ from below only once. 
Therefore, for low types, there is no tension: the derivative of the efficient allocation is small enough so that, in the optimal contest, the item can be allocated efficiently without any effort on the part of the agents. 
For high types, since the change in the efficient allocation is large, the incentive constraints bind and the interim utility must be linear. 
Moreover, in order for the interim allocation to be interim feasible, 
in the region where the utility is linear, the no-effort region must occur before the efficient region, not the other way around.

 \begin{figure}[t]
\centering
\begin{tikzpicture}[xscale=5,yscale=5]

\draw [<->] (0,1.1) -- (0,0) -- (1.1,0);
\draw  (0,1) -- (1,1) -- (1,0);
\node [right] at (0, 1.1 ) {$\type_2$};
\node [right] at (1.15, 0 ) {$\type_1$};
\draw [gray, domain=0:1, dashed] plot (\x, {\x});

\draw [gray] (4/12,0) -- (4/12, 0.0525);
\draw [gray] (10/12,0) -- (10/12, 0.0525);

\node [below, brown, thick]  at (0, 0 ) {$0$};
\node [below, brown, thick] at (1/3, 0 ) {$\type^{(1)}$};
\node [below,brown,thick] at (5/6, 0 ) {$\type^{(2)}$};
\node [below, brown, thick]  at (1, 0 ) {$1$};

\draw [gray] (0, 4/12) -- (0.0525, 4/12 );
\draw [gray] (0, 10/12) -- (0.0525, 10/12);

\node [left, teal, thick] at (0,1/3 ) {$\type^{(1)}$};
\node [left,teal,thick] at (0, 5/6 ) {$\type^{(2)}$};

\draw[blue, thick] (1/3,5/6) -- (1/3,1/3) -- (5/6,1/3) -- (5/6,5/6) -- (1/3,5/6);
\node [blue] at (3/5,3/5) {\text{randomized}};

\draw[teal, domain=0:1/3-0.01, thick] plot (\x, {\x+0.01});
\draw[teal, domain=(5/6):(1-0.01), thick] plot (\x, {\x+0.01});
\draw[teal, thick] (5/6,5/6+0.01) --(1/3-0.01,5/6+0.01) -- (1/3-0.01,1/3) ;
\node [teal] at (2/6,0.9) {\text{agent $2$ wins}};

\draw[brown, domain=0.01:1/3, thick] plot (\x, {\x-0.01});
\draw[brown, domain=(5/6+0.01):1, thick] plot (\x, {\x-0.01});
\draw[brown, thick] (5/6+0.01,5/6) --(5/6+0.01,1/3-0.01) -- (1/3,1/3-0.01) ;
\node [brown] at (3/5,1/6) {\text{agent $1$ wins}};

\end{tikzpicture}
\caption{Implementation of the optimal allocation rule}
\label{fig:implement_convex_efficient}
\rule{0in}{1.2em}$^\dag$\scriptsize Suppose $n=2$, $k=1$, $\dist(\type)=\type^2$, $\type\in [0,1]$. When both agents produce signals in $(\type^{(1)},\type^{(2)})$, the item is allocated randomly, but the agent with the higher signal has a higher probability (which is strictly less than $1$) of getting the item. When at least one agent produces a signal outside $(\type^{(1)},\type^{(2)})$, the item is allocated to the agent with a higher signal.\\
\end{figure}
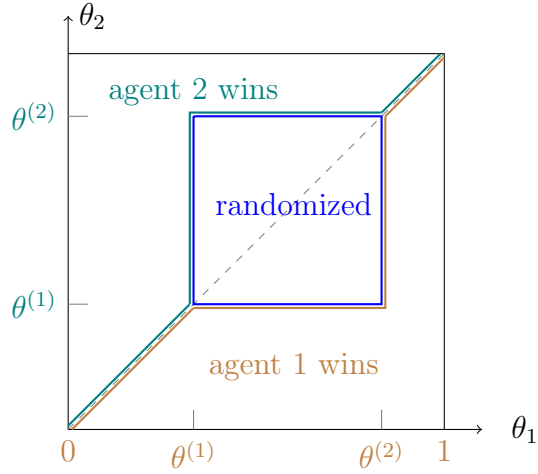

It is interesting to note that in the optimal contest when the efficient allocation rule is convex,
there is distortion for middle types but not for high or low types. This stands in sharp contrast to the classical auction design setting, where distortions typically occur for low types.

\paragraph{Convergence results.}
Using \cref{lem:large_convex} and \cref{prop:convex optimal}, we can immediately characterize the optimal contest for the allocation of scarce resources across a large number of agents. 
Moreover, we show that as the number of agents increases, 
the no-tension region converges to the full type space. 
Since the contest format in the no-tension region is WTA,
this implies that in the limit, 
the format of the optimal contest is essentially WTA.

\begin{theorem}[convergence of contest format]
\label{thm:scarce_resource}
Let $k=1$. Under \Cref{asp:continuity}, for any $\alpha\in (0,1)$,
there exists $N$ such that for any $n\geq N$, 
the optimal contest
takes the form described in \cref{prop:convex optimal}.
Moreover, as $n$ goes to infinity, the no-tension region converges to the entire type space. 
\end{theorem}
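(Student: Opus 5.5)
The first claim follows by combining \cref{lem:large_convex} with \cref{prop:convex optimal}. Take $N$ from \cref{lem:large_convex}. For every $n\geq N$ the efficient rule $\efficientalloc(\type)=\dist^{n-1}(\type)$ is convex, so \cref{prop:convex optimal} applies directly. It yields cutoffs $\type^{(1)}_n\leq\type^{(2)}_n$ with the no-tension, no-effort and efficient regions arranged in that order. All the real work is therefore in the convergence claim: I will show that $\type^{(1)}_n\to\bar{\type}$ as $n\to\infty$.

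The natural upper bound on where tension can begin is the first type at which the slope of the efficient rule exceeds $\ability$. Let
\[
\hat{\type}_n:=\inf\{\type: (n-1)\dist^{n-2}(\type)\density(\type)>\ability\}.
\]
Since $\efficientalloc$ is convex, its derivative crosses $\ability$ only once, so the no-tension region is contained in $[\underline{\type},\hat{\type}_n]$. I still have to show $\hat{\type}_n\to\bar{\type}$. Under \Cref{asp:continuity}, $\density\leq\bar{\beta}_1$. Fix any $t<\bar{\type}$; then $\dist(t)<1$, and for every $\type\le t$
\[
(n-1)\dist^{n-2}(\type)\density(\type)\leq (n-1)\dist(t)^{n-2}\bar{\beta}_1\to 0 .
\]
Hence $\hat{\type}_n>t$ for all large $n$, and so $\hat{\type}_n\to\bar{\type}$.

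The remaining step is a lower bound, showing $\type^{(1)}_n$ cannot lie far below $\hat{\type}_n$. On the no-effort region $(\type^{(1)}_n,\type^{(2)}_n)$ the optimal allocation $\optimalalloc$ is linear with slope $\ability$ and starts at $\efficientalloc(\type^{(1)}_n)$. By \cref{thm:optimal characterization} it must also satisfy the mass-balance condition
\[
\int_{\type^{(1)}_n}^{\type^{(2)}_n}\optimalalloc\,\dd\dist=\int_{\type^{(1)}_n}^{\type^{(2)}_n}\efficientalloc\,\dd\dist .
\]
By interim feasibility, the cumulative state $\intalloc$ stays below its efficient counterpart. The line $\efficientalloc(\type^{(1)}_n)+\ability(\type-\type^{(1)}_n)$ lies above the convex curve $\efficientalloc$ near $\type^{(1)}_n$ exactly when $\efficientalloc'(\type^{(1)}_n)<\ability$. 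If $\type^{(1)}_n$ were bounded away from $\bar{\type}$ along a subsequence, then $\efficientalloc'(\type^{(1)}_n)\to0$ and $\efficientalloc(\type^{(1)}_n)\to 0$ there.

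This is the step I expect to be the main obstacle. Two possibilities must be ruled out. The first is that the line starting this low meets $\efficientalloc$ with equal mass only inside a tiny neighbourhood of $\bar{\type}$. The second is that such a long no-effort interval is in fact optimal. My first approach is a perturbation comparison. Starting a no-effort interval at a type $\type$ far below $\bar{\type}$ forgoes matching efficiency of order $\alpha\int(\type-\cdot)$ over an interval of fixed length, a loss bounded below independently of $n$. The utility it saves, by contrast, is concentrated near $\bar{\type}$: the efficient rule exceeds the line only on a set of $\dist$-measure $O(1/n)$, so the effort saved is of order $O(\ability(\bar{\type}-\hat{\type}_n))\to 0$. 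I will use $\density\geq\underline{\beta}_1$ and $\density'\geq-\beta_2$ to make these estimates uniform. Consequently, moving $\type^{(1)}_n$ upward to just below $\hat{\type}_n$ while keeping the mass balance strictly improves $\obj_\alpha$ for large $n$. This contradicts optimality, so $\type^{(1)}_n\to\bar{\type}$.
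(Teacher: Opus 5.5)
\textbf{There is a genuine gap.} Your first step is the paper's: combine \cref{lem:large_convex} with \cref{prop:convex optimal}. Your bound $\type^{(1)}_n\le\hat{\type}_n\to\bar{\type}$ is also correct, but it only gives the easy direction. The substantive claim, $\type^{(1)}_n\to\bar{\type}$, is never proved. The perturbation comparison is only sketched, and its scaling claims are not right:
\begin{itemize}
\item In the per-agent objective every term is at most $\bar{\type}/n$, since each agent's ex-ante winning probability is $1/n$ and $\optimalutil\le\optimalalloc$. So a matching-efficiency loss ``bounded below independently of $n$'' cannot arise.
\item The claim that the utility saved is $O(\ability(\bar{\type}-\hat{\type}_n))$ is asserted, not derived.
\item The configuration you try to show is suboptimal (a no-effort interval starting far below $\bar{\type}$) is in fact \emph{infeasible} for large $n$. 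So no optimality comparison is needed, and your two ``possibilities to rule out'' never arise.
\end{itemize}

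\textbf{The missing idea is a direct feasibility bound.} By \cref{prop:convex optimal} (via \cref{thm:optimal characterization}), $\optimalutil(\type^{(1)}_n)=\efficientallocn{n}(\type^{(1)}_n)\ge 0$. Also $\optimalutil'=\ability$ on all of $(\type^{(1)}_n,\bar{\type})$, because both the no-effort and the efficient regions have utility slope $\ability$. Combine this with three facts: the IC level constraint $\optimalutil\le\optimalalloc$, interim feasibility \eqref{eq:feasibility symmetric} at $\type^{(1)}_n$, and $\int_{\underline{\type}}^{\bar{\type}}\efficientallocn{n}\dd\dist=1/n$. This gives
\[
\ability\int_{\type^{(1)}_n}^{\bar{\type}}(\type-\type^{(1)}_n)\dd\dist(\type)
\le\int_{\type^{(1)}_n}^{\bar{\type}}\optimalutil(\type)\dd\dist(\type)
\le\int_{\type^{(1)}_n}^{\bar{\type}}\optimalalloc(\type)\dd\dist(\type)
\le\int_{\type^{(1)}_n}^{\bar{\type}}\efficientallocn{n}(\type)\dd\dist(\type)
\le\frac{1}{n}.
\]
With $\density\ge\underline{\beta}_1$, this yields $\bar{\type}-\type^{(1)}_n\le\sqrt{2/(n\ability\underline{\beta}_1)}\to 0$, which is the paper's proof. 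In words, a slope-$\ability$ utility line starting at a type bounded away from $\bar{\type}$ carries $\dist$-mass bounded below by a constant. The allocation must dominate it, yet only $1/n$ of allocation mass is available. Hence neither the mass-balance condition nor feasibility can hold, whether the no-effort interval ends near $\bar{\type}$ or anywhere else.
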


Given this convergence result, it may appear tempting to host a WTA contest, since it approximates the optimal contest for a large finite number of agents. However, as shown in \cref{thm:scarce_nonconvergence} below, WTA contest is not a good approximation of the optimal contest.

Before we formally state our result, we introduce one more piece of notation.
For any interim allocation rule~$\alloc$ that is implementable by a contest, by \cref{cor:payoff equiv}, there exists a unique interim utility  $\util$ with $\util(\underline\type)=\alloc(\underline\type)$ such that $(\alloc,\util)$ is implementable by a contest and yields a weakly higher payoff for the principal than any other mechanism.
Denote this payoff by $V_{\alpha}(\alloc)$, i.e.,
\begin{align*}
    V_{\alpha}(\alloc)=\sup \{\text{Obj}_{\alpha}(\alloc,\util): \util(\underline\type)=\alloc(\underline\type) \text{ and }(\alloc,\util) \text{ is implementable by a contest}\}.
\end{align*}

\begin{theorem}[non-convergence in payoffs]
\label{thm:scarce_nonconvergence}
Let $k=1$. Under \Cref{asp:continuity}, for any $\alpha\in (0,1)$ and any sufficiently small $ \epsilon > 0$, 
there exists $N_{\dist,\epsilon}\in \posnaturals$ and $\delta > 1$ such that for any finite $n>N_{\dist,\epsilon}$,
the ratio between the principal's payoff in the optimal contest and her payoff in the WTA contest is at least $\delta$;
that is, $\frac{V_{\alpha}(\optimalallocn{n})}{V_{\alpha}(\efficientallocn{n})}\geq \delta> 1$.
\end{theorem}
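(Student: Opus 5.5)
Under the WTA contest the efficient rule is $\efficientallocn{n}(\type)=\dist^{n-1}(\type)$, and the plan is to bound both payoffs by explicit quantities and then compare their leading terms. First compute $V_{\alpha}(\efficientallocn{n})$. By \cref{cor:payoff equiv}, the best implementable utility for a given $\alloc$ is the largest function below $\alloc$ with slope at most $\ability$ and $\util(\underline\type)=\alloc(\underline\type)$. For convex $\efficientallocn{n}$ this function equals $\efficientallocn{n}$ up to the point $\hat\type_n$ where $(\dist^{n-1})'=\ability$, and is linear with slope $\ability$ afterwards. Under \cref{asp:continuity}, $\hat\type_n$ satisfies $\bar\type-\hat\type_n=\Theta(\log n/n)$, and the expected total utility of the $n$ agents is $n\,\expect{\util(\type)}$. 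The matching-efficiency part is $\expect{\max_i\type_i}\to\bar\type$. The utilitarian part is at most $n\,\expect{\dist^{n-1}(\type)}=1$ minus the wasted effort $n\int_{\hat\type_n}^{\bar\type}(\dist^{n-1}(\type)-\util(\type))\dd\dist(\type)$. The first step is to show that this total effort is bounded below by a constant $c_0>0$ that does not depend on $n$. This is plausible: types near $\bar\type$ obtain the item with probability close to $1$ only after climbing a slope-$\ability$ line from $\hat\type_n$, and the mass of agents in that window, times $n$, is of constant order. Consequently
\[
V_{\alpha}(\efficientallocn{n})\le \alpha\bar\type+(1-\alpha)(1-c_0)+o(1).
\]

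Second, exhibit a feasible competitor $\alloc_n$ and lower-bound $V_\alpha(\alloc_n)$, which suffices because $V_\alpha(\optimalallocn{n})\ge V_\alpha(\alloc_n)$. The competitor has a single no-effort interval $[\type^*_n,\bar\type]$ with $\type^*_n=\hat\type_n-\Delta$. On this interval the allocation is flattened to a line of slope $\ability$, with its height chosen so that the mass condition of \cref{thm:optimal characterization}(2) holds. Below $\type^*_n$ the allocation stays efficient and effortless, so $\util=\alloc$ everywhere and utilitarian welfare equals exactly $1$, i.e.\ there is zero waste. Feasibility follows from the Border condition (\cref{lem:border}): the interim allocation is a mean-preserving flattening on the top interval, so the cumulative majorization constraint holds, provided a linear function with slope $\ability$ can match the mass, which is why $\Delta$ must be chosen suitably. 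The loss in matching efficiency is at most $\bar\type-\type^*_n=O(\log n/n+\Delta)$. Choosing $\Delta=O(\log n/n)$ gives
\[
V_\alpha(\alloc_n)\ge \alpha\bar\type+(1-\alpha)-o(1).
\]

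Third, take ratios. For $n>N_{\dist,\epsilon}$ the two bounds yield
\[
\frac{V_{\alpha}(\optimalallocn{n})}{V_{\alpha}(\efficientallocn{n})}\ge\frac{\alpha\bar\type+(1-\alpha)-\epsilon}{\alpha\bar\type+(1-\alpha)(1-c_0)+\epsilon}=:\delta>1
\]
for $\epsilon$ small relative to $(1-\alpha)c_0$.

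The main obstacle is the first step, the uniform lower bound $c_0$ on the effort dissipated under WTA. The plan is to substitute $\type=\bar\type-t/n$ and use \cref{asp:continuity} to approximate $\dist^{n-1}(\bar\type-t/n)\approx e^{-\density(\bar\type)t}$. The rescaled effort integral then converges to a positive integral involving $(e^{-\density(\bar\type) t}-\text{tangent line})^+$ with rescaled slope $\ability/n$. The quantity to control is $n\cdot(\text{mass}\sim 1/n)\cdot(\text{gap of order }1)$, and establishing that it stays bounded away from $0$ is the delicate estimate. The Border-feasibility check for the flattened competitor also needs care, but it is routine.
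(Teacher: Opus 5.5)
Your architecture is the paper's: bound the WTA payoff from above through a uniform lower bound on dissipated effort, bound the optimum from below with a zero-effort competitor that replaces the top of $Q_{E,n}$ by a mass-preserving line of slope $\eta$, and take the ratio. As written, however, two steps do not go through.

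\textbf{First step.} The estimate you call the main obstacle is only sketched. The rescaled limit would still need a dominated-convergence argument, so $c_0>0$ is not established. The estimate is not delicate, and you already have the ingredient. Since $U_{E,n}$ is nondecreasing, every type's utility is at most $U_{E,n}(\bar\theta)=F^{n-1}(\hat\theta_n)+\eta(\bar\theta-\hat\theta_n)$. This is $O(\log n/n)$, because $F^{n-2}(\hat\theta_n)=\eta/((n-1)f(\hat\theta_n))$ and $\bar\theta-\hat\theta_n=O(\log n/n)$. Hence each type with $F(\theta)\ge 1-1/n$ exerts effort at least $(1-1/n)^{n-1}-U_{E,n}(\bar\theta)\ge 1/e-o(1)$. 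This group, of mass $1/n$ per agent, already gives total effort at least $1/e-o(1)$; this is exactly \cref{lmm:efficientutil}. Pushing the same bound one step further gives $n\,\mathbf{E}[U_{E,n}(\theta)]\le F^{n}(\hat\theta_n)+n\,(1-F(\hat\theta_n))\,U_{E,n}(\bar\theta)=O(\log^2 n/n)$, so $c_0$ can in fact be taken arbitrarily close to $1$. This agrees with your rescaled picture: the line of slope $\eta/n$ flattens to zero, and the limit integral is $\int_0^\infty f(\bar\theta)e^{-f(\bar\theta)t}\,dt=1$. Completed this way, your route gives full dissipation under WTA, a sharper comparison than the paper's bound of $1-1/e$ on the WTA utilitarian term.

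\textbf{Second step.} The competitor is mis-parametrized. $U$ must be $\eta$-Lipschitz and you want $U=Q$, so the allocation must be continuous at $\theta_n^*$. The height of the line is therefore pinned at $Q_{E,n}(\theta_n^*)$ and cannot be chosen separately, since a jump would reintroduce effort. That leaves $\theta_n^*$ as the only free parameter. Matching the mass $\int_{\theta_n^*}^{\bar\theta}Q_{E,n}\,dF\approx 1/n$ with such a line forces $\bar\theta-\theta_n^*$ to be of order $n^{-1/2}$, not $O(\log n/n)$. With $\Delta=O(\log n/n)$ the line carries mass $O(\log^2 n/n^2)=o(1/n)$, and the mass condition fails.

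The error is harmless for the conclusion, because you only need $\bar\theta-\theta_n^*\to 0$. On the top interval $Q_n(\theta)\ge\eta(\theta-\theta_n^*)$, and the total mass there is at most $1/n$, so $\eta\underline{\beta}_1(\bar\theta-\theta_n^*)^2/2\le 1/n$. The efficiency loss is therefore $O(n^{-1/2})$; this is the argument used in the proofs of \cref{thm:scarce_resource} and \cref{lem:optutil}.

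Also, mass preservation alone does not give \eqref{eq:feasibility symmetric}. What does is the shape of the difference: the line minus the convex $Q_{E,n}$ is concave and vanishes at $\theta_n^*$. It is therefore nonnegative on an initial segment and nonpositive afterwards, so with zero total integral every tail integral is nonpositive.
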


This result states that the principal's payoff under the WTA contest does not converge to her payoff under the optimal contest as the number of agents increases. 
By randomizing the allocation for a small range of high types, whose measure is of order $\frac{1}{n}$, its contribution to the principal's objective value is large.
This is because the probability that there exists an agent with a type falling into this range is roughly of order $\frac{1}{e}$.
Such an agent would have exerted high effort if it were under the efficient allocation rule, which would have created a large decrease in the agents' utilities and hence the principal's objective value. 
Thus, as long as the planner's payoff has a non-zero weight on agents' utilities, 
randomly allocating the scarce resource to any agent with sufficiently high signals can substantially increase the agents' expected utilities while keeping the loss in matching efficiency small. 
Conceptually, this contrasts with \citet{bulow1996auctions}, who find that the revenue from the efficient allocation rule with one additional buyer exceeds that under the optimal mechanism.

Note that our framework enables us to completely characterize the optimal contest for a large but finite number of agents. For comparison, \citet{OS2016large,olszewski2020performance} approximate equilibria in contests using a continuum model. Our finding that the optimal contest format converges to the WTA format is consistent with the results of \citet{OS2016large}, 
but the non-convergence of the optimal payoff (\cref{thm:scarce_nonconvergence}) stands in contrast to their work. 
The non-convergence in payoff result highlights the importance of randomizing the allocations for top types in practical applications to reduce the cost of wasteful signals.

\subsection{Large-Scale Economy}
\label{sec:large scale}
In applications such as college admissions and affordable housing programs, the resources to be allocated are not necessarily scarce.  
To model such situations, consider a setting with $n$ agents and $0<k< n$ items, and replicate both the agents and the items $z\in \N_+$ times. 
The parameter $z$ captures the scale of the economy.
As the scale $z$ goes to infinity, the efficient allocation rule in this setting converges to the cutoff rule, under which the items are allocated to the top $\frac{k}{n}$ of the types. 
Efficient allocation hence creates strong incentives for types close to the cutoff to exert wasteful effort.
\cref{thm:large_scale} shows that in the optimal contest, to eliminate these incentives, the principal randomizes the allocation for types close to the cutoff.
Let $\type_c$ be the cutoff type, defined by $\Pr[\type\geq \type_c] = \frac{k}{n}$.

\begin{theorem}\label{thm:large_scale}
Under \Cref{asp:continuity}, for any $\alpha\in (0,1)$ 
and any fixed integers $n>k>0$, there exists $Z$ such that for any integer $z\geq Z$, 
in a setting with $z\cdot n$ agents and $z\cdot k$ items, there exist cutoff types $\underline{\type}\leq\type^{(1)}<\type_c< \type^{(2)}\leq \type^{(3)}\leq \bar{\type}$ such that
in the type space of each agent in the optimal contest, the intervals $(\underline{\type},\type^{(1)})$ and $(\type^{(3)},\bar{\type})$ comprise the no-tension region, $(\type^{(1)},\type^{(2)})$ is the no-effort region, and $(\type^{(2)},\type^{(3)})$ is the efficient region.
\end{theorem}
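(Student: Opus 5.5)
The proof applies \cref{thm:optimal characterization} to the replicated economy, using the limiting shape of the efficient interim allocation. With $zn$ agents and $zk$ items, $\efficientalloc^{(z)}(\type)=\Pr[\text{Bin}(zn-1,1-\dist(\type))\le zk-1]$. Its derivative is $(zn-1)\density(\type)\cdot b(\type)$, where $b$ is the binomial point probability at $zk-1$. The approach is to understand exactly where this derivative exceeds $\ability$.

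\textbf{Step 1 (shape of $\efficientalloc$).} Using local CLT / large-deviation bounds together with \Cref{asp:continuity}, the derivative has a unimodal, approximately Gaussian bump of height of order $\sqrt{z}$ and width of order $1/\sqrt{z}$, centered near $\type_c$. Away from $\type_c$ it is exponentially small in $z$. Hence for $z\ge Z$ the set $\{\type:\efficientalloc'(\type)>\ability\}$ is a single interval $(a_z,b_z)$ containing $\type_c$, of length $O(\sqrt{\log z/z})$. Moreover $\efficientalloc$ is convex on $[\underline\type,\type_c-o(1)]$ and concave on $[\type_c+o(1),\bar\type]$, with slope below $\ability$ outside $(a_z,b_z)$. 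The derivative of $b$ is controlled through the ratio of consecutive binomial terms, and $\density'\ge-\beta_2$ ensures that the density factor does not create additional crossings.

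\textbf{Step 2 (reduce to the regions).} Outside any neighborhood of $(a_z,b_z)$, the efficient rule is implementable with $\util=\alloc$. Following the argument behind \cref{prop:convex optimal}, no-effort or efficient intervals can appear only where IC tension exists. A no-effort interval must start at a point $\type^{(1)}$ where a line of slope $\ability$ is tangent to or crosses $\efficientalloc$ from below, and it ends where the integral condition $\int\optimalalloc\,\dd\dist=\int\efficientalloc\,\dd\dist$ (majorization) closes. With a single tension interval, the resulting order is: no-tension on $(\underline\type,\type^{(1)})$, then no-effort, then efficient on $(\type^{(2)},\type^{(3)})$ where $\efficientalloc$ still has slope above $\ability$ but the utility lies below the allocation, then no-tension after $\type^{(3)}$. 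By the concavity from Step 1, the slope of $\efficientalloc$ drops below $\ability$ and $\util$ catches up with $\alloc$. The reversed order (efficient before no-effort) is excluded by the feasibility argument used in \cref{prop:convex optimal}, and multiple alternations are excluded by the single-crossing structure.

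\textbf{Step 3 (strict inequalities $\type^{(1)}<\type_c<\type^{(2)}$).} This is the main obstacle, since it requires that the no-effort region be nonempty and straddle the cutoff. The plan is a perturbation argument against $\efficientalloc$ near $\type_c$. In the efficient mechanism, effort near $\type_c$ costs a constant-order utility loss: the jump of order one in $\alloc$ over a width of order $1/\sqrt z$ forces $\alloc-\util$ of order one, on a mass of types of order one after integration beyond the cutoff. Replacing $\efficientalloc$ on $[\type_c-\epsilon,\type_c+\epsilon]$ by a slope-$\ability$ line satisfying the integral condition loses only $O(\epsilon^2)$ in matching efficiency and recovers order-one utility. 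This shows a no-effort interval exists around the steep part.

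It remains to show that this interval contains $\type_c$ strictly in its interior. If instead $\type^{(2)}\le\type_c$, the efficient region would contain the steepest point, and shifting the interval right would be profitable by a comparison of first-order conditions. The symmetric argument rules out $\type^{(1)}\ge\type_c$. I expect the careful asymptotic comparison here, using the approximately Gaussian, nearly symmetric shape of $\efficientalloc'$ around $\type_c$, to be the delicate part.
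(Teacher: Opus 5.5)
\paragraph{Gap in Step 3.} Step 3 leaves the actual content of the theorem unproved. Your perturbation of $\efficientalloc$ only shows that the efficient contest is not optimal, i.e.\ that some no-effort interval is nonempty. Under \cref{thm:optimal characterization}, a no-effort interval that follows a no-tension interval is the slope-$\ability$ line starting at $\efficientalloc(\type^{(1)})$. Its right end $\type^{(2)}$ is then pinned down by the integral condition, so the only lever is how far below the steep part it starts.

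Suppose $\type^{(1)}$ sits just below the point $a_z$ where $\efficientalloc'$ first exceeds $\ability$, which by your own Step 1 is only $O(\sqrt{\log z/z})$ below $\type_c$. Then the surplus is tiny and the interval closes before $\type_c$. So $\type^{(2)}>\type_c$ requires a lower bound on $\type_c-\type^{(1)}$ that is uniform in $z$, and nothing in your proposal supplies it.

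The remedy you sketch also points the wrong way. The endpoints cannot be ``shifted right'' independently: lowering $\type^{(1)}$ is what raises $\type^{(2)}$, and no Gaussian symmetry of $\efficientalloc'$ is involved. Conversely, $\type^{(1)}<\type_c$ needs no symmetric argument. Since $\efficientalloc'(\type_c)\to\infty$, $\type_c$ cannot lie in a no-tension interval.

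There are two quantitative slips as well:
\begin{itemize}
\item A slope-$\ability$ line from $\approx 0$ at $\type_c-\epsilon$ cannot satisfy the integral condition on $[\type_c-\epsilon,\type_c+\epsilon]$; its right end is $\type_c+O(\epsilon^2)$.
\item The utility it recovers is of order $\ability\epsilon$ on an order-one mass of types, not order one.
\end{itemize}
The comparison of a first-order gain against an $O(\epsilon^2)$ loss survives, but only as a statement about the efficient contest, not about the optimum.

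\paragraph{The missing idea.} What is needed is a \emph{global} comparison against every candidate contest, not a local perturbation of the efficient one. This is \cref{lem:high_util} in the paper.

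Fix $\tilde\type^{(1)}=\type_c-\epsilon_0-\sqrt{8\epsilon_0\bar{\beta}_1/(\ability\underline{\beta}_1)}$. Any contest with $\util(\tilde\type^{(1)})\le\efficientalloc(\tilde\type^{(1)})\le\epsilon_1$ has, because $\util'\le\ability$, utility at most $\epsilon_1+\ability(\type-\tilde\type^{(1)})$ above $\tilde\type^{(1)}$. Compare it with the explicit three-cutoff contest whose no-effort region starts at $\type_c-\epsilon_2$, where $\epsilon_0,\epsilon_1$ are chosen suitably small relative to $\epsilon_2$. That contest gains order $\epsilon_2$ in utility on an order-one mass of types, while losing at most $\alpha\bar{\beta}_1(\hat\type^{(2)}-\hat\type^{(1)})^2=O(\epsilon_2^2)$ in matching efficiency.

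Hence in the optimum $\tilde\type^{(1)}$ lies inside a no-effort interval $(\type^{(1)},\type^{(2)})$. In the binding integral condition, the surplus accumulated on $(\type^{(1)},\type_c-\epsilon_0)$ must be offset by a deficit of at most $\bar{\beta}_1(\type^{(2)}-\type_c+\epsilon_0)$. This yields $\type^{(2)}\ge\type_c-\epsilon_0+\frac{\ability\underline{\beta}_1}{4\bar{\beta}_1}(\type_c-\epsilon_0-\type^{(1)})^2\ge\type_c+\epsilon_0$.

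\paragraph{Step 1 becomes unnecessary.} Once the no-effort region covers all of $[\type_c-\epsilon_0,\type_c+\epsilon_0]$, one only needs $\efficientalloc'\le\ability$ outside that window. The number and order of crossings inside it are irrelevant. This matters because Step 1 overreaches. \Cref{asp:continuity} bounds $\density'$ only from below, which controls the left flank of the bump but not concavity or a single crossing on the right flank. The paper explicitly notes that the sign of $\efficientalloc''$ may change repeatedly near $\type_c$.
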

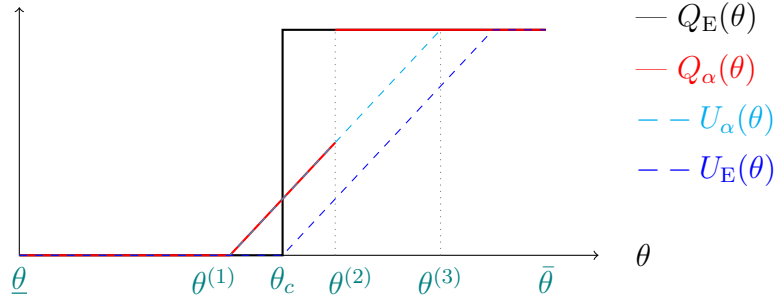
\begin{figure}[t]
\centering
\begin{tikzpicture}[xscale=7,yscale=3]

\draw [<->] (0,1.1) -- (0,0) -- (1.1,0);

\draw [thick] (0,0) -- (0.5,0) -- (0.5,1) -- (1,1);
\draw [red, thick] (0,0) -- (0.4,0) -- (0.6,0.5);
\draw [red, thick] (0.6,1) -- (1,1);
\draw [cyan, dashed] (0,0) -- (0.4,0) --(0.6,0.5) -- (0.8,1);
\draw [blue, dashed] (0,0) -- (0.5,0) -- (0.9,1) -- (1,1);

\node [right] at (1.15, 0 ) {$\type$};
\node [below, teal, thick]  at (0, 0 ) {$\underline{\type}$};
\node [below, teal, thick]  at (1, 0 ) {$\bar{\type}$};
\node [below, teal, thick]  at (0.5, 0 ) {$\type_c$};
\node [below, teal, thick]  at (0.37, 0 ) {$\type^{(1)}$};
\node [below, teal, thick]  at (0.63, 0 ) {$\type^{(2)}$};
\node [below, teal, thick]  at (0.8, 0 ) {$\type^{(3)}$};

\node [right, black] at (1.15, 1.05 ) {---\,\,$\efficientalloc(\type)$};
\node [right, red] at (1.15, 0.82 ) {---\,\,$\optimalalloc(\type)$};
\node [right, cyan] at (1.15, 0.6 ) {$--\optimalutil(\type)$};

\node [right, blue] at (1.15, 0.38 ) {$--\efficientutil(\type)$};

\draw [gray,dotted] (0.6,0) -- (0.6, 1);
\draw [gray,dotted] (0.8,0) -- (0.8, 1);

\end{tikzpicture}
\caption{Optimal allocation and utility for large-scale economy in the limit}
\rule{0in}{1.2em}$^\dag$\scriptsize
Types in $(\type^{(1)},\type^{(2)})$ are called middle types; in the optimal contest, their utilities are higher than under efficient allocation, because they do not exert effort. Types above $\type^{(2)}$ are called high types; in the optimal contest their utilities are weakly greater, or in some cases strictly greater, because they each exert less effort.
\label{fig:large_scale}
\end{figure}

Intuitively, in the limit, the efficient allocation rule converges to a step function, 
with only types above $\type_c$ receiving the items. 
The interim utility under efficient allocation is thus represented by the blue curve in \Cref{fig:large_scale}. 
In order to increase the weighted average between the matching efficiency and the sum of the expected utilities, 
the principal can randomize the allocation for types around the cutoff $\type_c$,
i.e., within $(\type^{(1)},\type^{(2)})$. 
This leads to an efficiency loss of at most $\type^{(2)}-\type^{(1)}$ when an item is allocated inefficiently, but increases the utilities for all types within $(\type^{(1)},\type^{(3)})$. 
When 
$\type^{(1)}$ is sufficiently close to $\type^{(2)}$, the increase in expected utility is significantly larger than the efficiency loss. 
Therefore, the principal can increase her payoff by randomizing on $(\type^{(1)},\type^{(2)})$. 
Finally, for types that are sufficiently low or sufficiently high, 
it is easy to verify that both the matching efficiency and the sum of the agents' utilities are maximized under efficient allocation. 
In \cref{apx:A6_large_scale}, we show that this intuition applies when the scale of the economy is finite but sufficiently large.

Our result is reminiscent of Director's law,
which states that public programs tend to be designed primarily to target the middle classes. 
Specifically, although the principal cares about the utilities of all of the agents, the optimal contest gives preferential treatment to the middle types, in the sense that they obtain higher utilities than they would in a fully competitive setting where items are allocated efficiently. 
This is optimal for the principal because it induces the middle types to exert no effort, which weakly (strictly) decreases the effort level for all (some) of the higher types.
This reasoning is also in line with the empirical results of \citet{KLOST2022pareto}, who show (using data from Turkey) that randomizing the allocation of college seats to students, especially those with low scores, reduces overall student stress.

\section{Conclusions}
\label{sec:conclusion}
Strategic misrepresentation often occurs in institutions with weak oversight. We incorporate strategic misrepresentation behaviors in resource allocation problems and characterize the optimal mechanism that accounts for such behaviors. 
We apply our theoretical insights to resource allocation in public programs distinguished by resource scarcity. When the population of the participants is large, we show that randomizing allocation to middle types strikes the best balance between matching efficiency and a utilitarian objective. Our work opens the door to several interesting research directions. First, in scenarios where misrepresented information is not verifiable, the question of whether randomization is still optimal when agents are equipped with different misrepresentation technologies remains open. Second, although
empirical studies show a strong connection between institutional oversight and incentives for misrepresentation \citep{bandiera2009active}, little is known on the theoretical implications of the interaction between the two forces.

\bibliographystyle{apalike}
\bibliography{references}
\newpage

\appendix
\section{Proof of \cref{thm:optimal characterization}}
\label{apx:A3_IC}

\subsection{Interim Feasibility}
\label{subapx:interim_feasible}
When using the interim approach, it is important to restrict attention to feasible interim allocation rules.  
\begin{definition}[feasibility]
Given any interim allocation rule $\allocs=(\alloc_i)_{i=1}^n$ where $\alloc_i:\Theta_i\rightarrow [0,1]$, 
$\allocs$ is \emph{feasible} if there exists an ex-post allocation rule $\contestallocs$
such that 
$\alloc_i(\type_i) = \expect[\type_{-i}]{\contestalloc_i(\type_i, \type_{-i})}$ for any agent~$i$ and type $\type_i$.
We say an interim allocation--utility pair $(\allocs,\utils)$ is \emph{feasible} if $\allocs$ is feasible. 
\end{definition}
 
\citet{border1991implementation} and \citet{che2013generalized} have provided closed-form characterization for feasible interim allocations. 
\begin{lemma}[\citealp{che2013generalized}]
\label{lem:border}
Given a set $\boldsymbol{A}=\prod_{i=1}^nA_i\subset \typespace$, let $w(\types,\boldsymbol{A}) = |\lbr{i : \type_i\in A_i}|$ be the number of agents whose type $\type_i$ is in $A_i$.\footnote{Here, $|\cdot|$ denotes the cardinality of a set.}
The interim allocation rule $\allocs$ is interim feasible if and only if 
\vspace{-8pt}
\begin{align*}
\tag{IF}\label{eq:feasibility}
\sum_i \int_{A_i}  \alloc_i(\type_i) \dd \dist_i(\type_i)
\leq \int_{A} \min\lbr{k,w(\types,\boldsymbol{A})} \dd \dists(\type)
\qquad \forall \boldsymbol{A}=\prod_{i=1}^nA_i\subset \typespace.
\end{align*}
Moreover, for monotone allocations in symmetric environments, 
\eqref{eq:feasibility} is equivalent to the following:\footnote{In a symmetric environment, by a slight abuse of notation, we use $\dist=\dist_i$ for all $i$ to denote each agent's type distribution.}
\begin{align}\label{eq:feasibility symmetric}
\tag{$\widehat{\mathrm{IF}}$}
\int_{\type}^{\bar{\type}} \alloc(z)\dd \dist(z) \leq \int_{\type}^{\bar{\type}} \efficientalloc(z) \dd\dist(z), \quad\forall \type\in[\underline{\type},\bar{\type}],
\end{align}
where $\efficientalloc(\type)=\sum_{j=0}^{k-1}\binom{n-1}{j} 
\cdot (1-\dist(\type))^j\cdot \dist^{n-1-j}(\type)$ is the interim allocation rule for allocating $k$ items efficiently. 
\end{lemma}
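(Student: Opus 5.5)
The first claim is the general Border characterization of \citet{che2013generalized}, which we take as given. What needs an argument is its reduction to \eqref{eq:feasibility symmetric} for monotone allocations in symmetric environments. Here we read $\allocs$ as symmetric, $\alloc_i=\alloc$ for all $i$, since that is what the notation in \eqref{eq:feasibility symmetric} presumes. We prove the two directions separately.

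\emph{Necessity.} Apply \eqref{eq:feasibility} to the product set $A_i=[\type,\bar{\type}]$ for every $i$. The left side equals $n\int_\type^{\bar\type}\alloc\dd\dist$. For the right side, note that the efficient rule assigns the $k$ items to the $k$ highest types. So the number of items that go to agents with types in $[\type,\bar\type]$ is exactly $\min\{k,w(\types,\boldsymbol{A})\}$ pointwise. Taking expectations, the right side equals $n\int_\type^{\bar\type}\efficientalloc\dd\dist$. Dividing by $n$ gives \eqref{eq:feasibility symmetric}. As a side check, the formula for $\efficientalloc$ is the probability that at most $k-1$ of the other $n-1$ agents have a higher type, which is the stated binomial sum.

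\emph{Sufficiency.} We must show that \eqref{eq:feasibility symmetric} implies \eqref{eq:feasibility} for every product set $\boldsymbol{A}$. This is the main step.

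1. First reduce to upper sets. Fix $\boldsymbol{A}$. Because $\alloc$ is monotone, replacing each $A_i$ by the upper interval $A_i^*=[t_i,\bar\type]$ with the same $\dist$-measure weakly raises the left side of \eqref{eq:feasibility}. On the right side, the distribution of $w(\types,\boldsymbol{A})$ is Poisson-binomial with parameters $\dist(A_i)$. It therefore depends only on the measures $\dist(A_i)$, so the right side is unchanged. Hence it suffices to verify \eqref{eq:feasibility} for products of upper intervals.

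2. Next reduce to a common threshold. For upper intervals, the slack
\[
g(t_1,\dots,t_n)=\expect{\min\{k,w\}}-\sum_i\int_{t_i}^{\bar\type}\alloc\dd\dist
\]
should be minimized at equal thresholds. I would show this by a two-coordinate exchange. Fix all $t_j$ with $j\neq 1,2$, and move $t_1,t_2$ while keeping $\dist(A_1)+\dist(A_2)$ constant. In the measures $p_1,p_2$, the right side is concave and symmetric, because it is a concave function of $w$ and the contribution of agents 1 and 2 is mixed. The left side is $h(p_1)+h(p_2)$ with $h(p)=\int_{\dist^{-1}(1-p)}^{\bar\type}\alloc\dd\dist$, and $h$ is concave in $p$ because $\alloc$ is monotone. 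The difference of a concave function and a concave function has no definite curvature, so this exchange is the delicate point.

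An alternative route avoids the exchange. Construct a symmetric ex post rule directly: allocate by the efficient rule composed with ironing, meaning that on each interval where \eqref{eq:feasibility symmetric} is slack we pool types and randomize. A monotone $\alloc$ satisfying the majorization \eqref{eq:feasibility symmetric} can be written as a mixture of such "efficient-after-pooling" rules, and the classical symmetric Border theorem (\citealp{border1991implementation}) gives exactly this. I would therefore cite Border's symmetric theorem for sufficiency. Border states feasibility for symmetric $\alloc$ via the same condition $n\int_\type^{\bar\type}\alloc\dd\dist\le 1-\dist^n(\type)$ when $k=1$. For general $k$ I would redo the argument using the majorization inequality on the decreasing rearrangement, $\alloc\prec\efficientalloc$, which is the Hardy--Littlewood--P\'olya characterization on the type quantile line. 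The main obstacle is this sufficiency step for general $k$. My first attempt will be the majorization/mixture construction, since it avoids the non-concavity in the exchange argument.
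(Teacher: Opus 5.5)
The paper does not prove this lemma. Both the general condition and its symmetric reduction are attributed to \citet{che2013generalized}, so there is no in-paper argument to compare against. Your necessity direction is correct, and so is your reduction to products of upper intervals. The gap is sufficiency, which is the main step, and you leave it open. You abandon the two-coordinate exchange because you believe the right side of \eqref{eq:feasibility} is concave along it. Your fallbacks are not proofs either. Border's 1991 symmetric theorem covers only $k=1$. The ``mixture of pooling rules'' claim needs the extreme-point characterization of monotone functions majorized by $Q_{\mathrm{E}}$, plus a check that pooling with uniform tie-breaking implements those extreme points. The Hardy--Littlewood--P\'olya route must handle the fact that \eqref{eq:feasibility symmetric} is only a weak submajorization (it is slack at $\underline{\theta}$), so you would also need free disposal and a measure-preserving kernel.

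The curvature claim is the actual error, and correcting it makes your exchange argument go through. Let $X_1,X_2$ indicate that agents $1,2$ lie in their sets, let $W'$ be the count among the other agents, and let $\phi(m)=\mathbf{E}[\min\{k,m+W'\}]$. Then
\[
\mathbf{E}[\min\{k,w\}]=\phi(0)+(p_1+p_2)\bigl(\phi(1)-\phi(0)\bigr)+p_1p_2\bigl(\phi(2)-2\phi(1)+\phi(0)\bigr).
\]
This is bilinear in $(p_1,p_2)$. The coefficient on $p_1p_2$ is nonpositive because $\min\{k,\cdot\}$ is concave. On the line $p_1+p_2=s$, the right side is therefore convex in $p_1$ and symmetric about $s/2$. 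Intuitively, equalizing the $p$'s is a mean-preserving spread of $X_1+X_2$, which lowers the expectation of a concave function. The left side $h(p_1)+h(s-p_1)$ is concave and symmetric about $s/2$, as you noted. So the slack is convex minus concave, hence convex, and it is minimized at $p_1=p_2$. Consequently every T-transform (Robin Hood transfer) between two coordinates weakly lowers the slack. The vector $(\bar p,\dots,\bar p)$ with $\bar p=\frac1n\sum_i p_i$ is majorized by $(p_1,\dots,p_n)$, so it is reached by finitely many such transforms. Hence the slack at $(p_1,\dots,p_n)$ is at least the slack at the common threshold $t$ with $1-F(t)=\bar p$. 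By your own necessity computation, that slack equals $n\bigl[\int_t^{\bar\theta}Q_{\mathrm{E}}\,\mathrm{d}F-\int_t^{\bar\theta}Q\,\mathrm{d}F\bigr]$, which is nonnegative by \eqref{eq:feasibility symmetric}. This closes sufficiency for every $k$, with no appeal to Border's symmetric theorem or to majorization and mixture arguments.
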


\subsection{Proof of \cref{lem:monotone alloc-util implementation}}
We will prove each direction of the if-and-only-if condition separately. 

\paragraph{Only if:}
If $(\allocs,\utils)$ is implementable by a non-coordination mechanism, 
there exist a signal recommendation policy $\reportsignals$ and
an allocation rule $\signalallocs$ that induce $(\allocs,\utils)$.
The allocation rule~$\allocs$ satisfies interim feasibility, because it is induced by the ex-post allocation rule $\contestalloc_i(\types)=\signalalloc_i(\reportsignal_i(\type_i),\reportsignals_{-i}(\type_{-i}))$ for all $i$ and $\types$.
Notice that for any signal recommendation policy $\reportsignals$ and
allocation rule $\signalallocs$ implementing $(\allocs,\utils)$, 
it is without loss to assume that any realization of the recommendation $\reportsignal_i(\type_i)$ is weakly higher than $\type_i$. 
This is because weakly increasing a signal recommendation below $\type_i$ does not induce type $\type_i$ to exert additional effort, 
but weakly decreases other types' incentives for deviation.

For any agent $i$ and any pair of types $\type_i < \type'_i$, 
let $\signal'_i$ be the largest signal realization given $\reportsignal_i(\type'_i)$.
Thus we have $\signal'_i\geq\type'_i$.
Note that agent $i$ with type $\type'_i$ obtains utility $\util_i(\type'_i)$ from choosing signal $\signal'_i$,
as he must be indifferent for all his signal realizations. 
Therefore, agent $i$'s utility from reporting signal $\signal'_i$ when his type is $\type_i$ is 
\begin{align*}
\expect[\type_{-i}]{\signalalloc_i(\signal'_i,\reportsignal_{-i}(\type_{-i}))} - \ability\cdot (\signal'_i-\type_i)^+
&= \expect[\type_{-i}]{\signalalloc_i(\signal'_i,\reportsignal_{-i}(\type_{-i}))} - \ability\cdot (\signal'_i-\type'_i)^+
- \ability\cdot (\type'_i-\type_i)\\
&= \util_i(\type'_i) - \ability\cdot (\type'_i-\type_i).
\end{align*}
Since his utility from deviating in his choice of signal is weakly lower, 
we have
\begin{align*}
\util_i(\type_i)\geq \util_i(\type'_i) - \ability\cdot (\type'_i-\type_i).
\end{align*}
By rearranging the terms and taking the limit as $\type'_i\to \type_i$, we obtain $\util'_i(\type) \leq \ability$.
Similarly, let $\signal_i$ be the largest signal realization given $\reportsignal_i(\type_i)$.
We have 
\begin{align*}
\util_i(\type'_i) &\geq \expect[\type_{-i}]{\signalalloc_i(\signal_i,\reportsignal_{-i}(\type_{-i}))} - \ability\cdot (\signal_i-\type'_i)^+\\
&\geq \expect[\type_{-i}]{\signalalloc_i(\signal_i,\reportsignal_{-i}(\type_{-i}))} - \ability\cdot (\signal_i-\type_i)^+
= \util_i(\type_i).
\end{align*}
Again by rearranging the terms and taking the limit as $\type'_i\to \type_i$, we have $\util'_i(\type) \geq 0$. Hence $\util'_i(\type_i) \in[0,\ability]$ for any type $\type_i$.

Finally, as the effort is non-negative, the interim allocation must be weakly larger than the interim utility. 
When the inequality is strict, the agent must choose signal realizations strictly higher than his type with positive probability. 
In this case, we have $\signal_i > \type_i$. 
For any type $\type'_i \in (\type_i, \signal_i)$, we have 
\begin{align*}
\util_i(\type'_i) &\geq \expect[\type_{-i}]{\signalalloc_i(\signal_i,\reportsignal_{-i}(\type_{-i}))} - \ability\cdot (\signal_i-\type'_i)^+\\
&= \expect[\type_{-i}]{\signalalloc_i(\signal_i,\reportsignal_{-i}(\type_{-i}))} - \ability\cdot (\signal_i-\type_i)^+
+ \ability\cdot (\type'_i-\type_i)\\
&= \util_i(\type_i) + \ability\cdot (\type'_i-\type_i).
\end{align*}
By rearranging the terms and taking the limit as $\type'_i\to \type_i$, we obtain $\util'_i(\type_i) \geq \ability$. 
Since we also know that $\util'_i(\type_i) \leq \ability$, 
both inequalities must be equalities and hence $\util'_i(\type_i) = \ability$.

\paragraph{If:}
Since $\allocs$ is interim feasible, 
there exists an ex-post allocation rule $\contestallocs$ that implements $\allocs$. 
Consider the signal recommendation policy $\reportsignals$
where $\reportsignal_i(\type_i) = \type_i + \frac{1}{\ability}(\alloc_i(\type_i) - \util_i(\type_i))$ 
for any agent~$i$ with type~$\type_i$. 
It is easy to verify that $\reportsignal_i(\type_i)$ is monotone in $\type_i$, since $\alloc'_i(\type_i)\geq 0$ and $\util'_i(\type_i)\leq \ability$. 
Let $\type_i(\signal_i)$ be the inverse of function $\reportsignal_i$.%
\footnote{Note that $\reportsignal_i(\type_i)$ is only weakly monotone. 
When there are multiple types $\type_i$ with the same signal recommendation~$\signal_i$, 
we map $\signal_i$ randomly to those types according to the type distribution $\dist_i$.}
Consider the allocation rule $\signalallocs$ where 
$\signalalloc_i(\signals) = \contestalloc_i(\types(\signals))$ for all agents $i$. 
We show that $\reportsignals$ and $\signalallocs$ implement $(\allocs,\utils)$. 

First, by our construction, when all agents follow the recommendations, the interim allocation and the interim utility coincide with $\allocs$ and $\utils$, respectively. 
Thus, it is sufficient to show that the agents have weak incentives to follow the recommendations. 
In particular, if agent $i$ with type~$\type_i$ deviates to reporting type $\type'_i>\type_i$, 
his utility from deviation is  
\begin{align*}
\alloc_i(\type'_{i}) - \ability\cdot (\reportsignal_i(\type'_i)-\type_i)^+
= \util_i(\type'_i)
- \ability\cdot(\type'_i-\type_i)
\leq \util_i(\type_i),
\end{align*}
where the last inequality holds because the derivative of $\util$ is always at most $\ability$.
We now analyze the incentives for downward deviation in three cases. 
If the deviation type $\type'_i < \type_i$ satisfies $\alloc_i(\type'_i) = \util_i(\type'_i)$, 
the utility from deviation is 
\begin{align*}
\alloc_i(\type'_{i}) - \ability\cdot (\reportsignal_i(\type'_i)-\type_i)^+
= \util_i(\type'_i)
\leq \util_i(\type_i).
\end{align*}
If the deviation type $\type'_i < \type_i$ satisfies $\alloc_i(\type'_i) > \util_i(\type'_i)$, 
let $\type\primed_i > \type'_i$ be the smallest type such that $\alloc_i(\type\primed_i) = \util_i(\type\primed_i)$. 
If $\type_i\geq \type\primed_i$, 
the utility from deviation is 
\begin{align*}
\alloc_i(\type'_{i}) - \ability\cdot (\reportsignal_i(\type'_i)-\type_i)^+
\leq \alloc_i(\type\primed_i)
= \util_i(\type\primed_i)
\leq \util_i(\type_i).
\end{align*}
If $\type_i < \type\primed_i$, 
the derivative of $\util$ for any type between $\type'_i$ and $\type_i$ must be constant and equal to~$\ability$.
Hence, the utility from deviation is \begin{align*}
\alloc_i(\type'_{i}) - \ability\cdot (\reportsignal_i(\type'_i)-\type_i)^+
\leq \util_i(\type'_i)
+ \ability\cdot(\type_i-\type'_i)
= \util_i(\type_i).
\end{align*}
Combining these inequalities, we conclude that none of the agents has any incentive to deviate from the recommendations.

\subsection{Optimality of symmetric mechanisms.}
We first establish \cref{cor:payoff equiv}, a payoff equivalence result within the class of non-coordination mechanisms.

\begin{proposition}\label{cor:payoff equiv}
Fix any monotone and interim feasible allocation rule $\allocs$, 
and any $\{\underline{u}_i\}_{i=1}^n$ such that $\underline{u}_i\leq \alloc_i(\underline{\type}_i)$ for all~$i$.
There exists a unique interim utility profile $\utils$ with $\util_i(\underline{\type}_i) = \underline{u}_i$ for all~$i$ 
such that $(\allocs,\utils)$ is implementable by a non-coordination mechanism.

Moreover, for any interim allocation--utility pair $(\allocs,\utils\primed)$ that is implementable by a non-coordination mechanism, 
we have the following: 
\begin{itemize}
    \item If $\util_i(\underline{\type}_i) > \util\primed_i(\underline{\type}_i)$ for any agent $i$, 
    then $\util_i(\type_i) \geq \util\primed_i(\type_i)$ for every agent $i$ and every type $\type_i$.
    
    \item If $\util_i(\underline{\type}_i) = \util\primed_i(\underline{\type}_i)$ for any agent $i$, 
    then $\util_i(\type_i) = \util\primed_i(\type_i)$ for every agent $i$ and every type $\type_i$.
\end{itemize}
\end{proposition}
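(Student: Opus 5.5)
The plan is to reduce everything to the characterization in \cref{lem:monotone alloc-util implementation}. Since $\allocs$ is monotone and interim feasible, a pair $(\allocs,\utils)$ is implementable by a non-coordination mechanism if and only if each $\util_i$ satisfies the three conditions in \eqref{eq:IC}. The problem also decouples across agents: \eqref{eq:IC} is a separate condition on each pair $(\alloc_i,\util_i)$, and feasibility involves only $\allocs$. So it suffices to fix one agent $i$ and study the single-agent problem. We must show that there is exactly one function $\util_i$ on $\Theta_i$ with $\util_i(\underline{\type}_i)=\underline{u}_i$ satisfying (1)--(3), and that solutions of \eqref{eq:IC} are ordered by their initial values.

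\emph{Existence.} I would define the candidate explicitly by
\[
\util_i(\type)=\min\Bigl\{\underline{u}_i+\ability(\type-\underline{\type}_i),\ \inf_{\underline{\type}_i\le \type'\le \type}\bigl[\alloc_i(\type')+\ability(\type-\type')\bigr]\Bigr\},
\]
and then verify the three conditions in turn.
\begin{itemize}
\item Initial value: $\util_i(\underline{\type}_i)=\min\{\underline{u}_i,\alloc_i(\underline{\type}_i)\}=\underline{u}_i$, by the hypothesis $\underline{u}_i\le\alloc_i(\underline{\type}_i)$.
\item Level constraint (2): take $\type'=\type$ in the infimum.
\item Upper slope bound: each term has slope $\ability$ in $\type$, and enlarging the index set only lowers the infimum. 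Hence $\util_i(\type_2)-\util_i(\type_1)\le\ability(\type_2-\type_1)$, so $\util_i$ is Lipschitz and differentiable almost everywhere.
\item Monotonicity: the new terms added on $(\type_1,\type_2]$ are at least $\alloc_i(\type_1)\ge\util_i(\type_1)$ because $\alloc_i$ is monotone, while the old terms only increase. Thus $\util_i$ is nondecreasing, which gives $\util_i'\in[0,\ability]$.
\item Condition (3): if $\util_i(\type)<\alloc_i(\type)$, the minimum is achieved, at least in the limit, by a $\type'$ bounded away from $\type$ or by the first term. Using monotonicity of $\alloc_i$ (so that $\liminf_{\type'\uparrow\type}\alloc_i(\type')$ is handled by the left limit), one checks that on a neighborhood of $\type$ the same family of terms attains the minimum. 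All of these terms have slope $\ability$, so $\util_i'(\type)=\ability$.
\end{itemize}
\cref{lem:monotone alloc-util implementation} then yields implementability.

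\emph{Comparison and uniqueness.} Let $\util_i$ and $\util\primed_i$ both satisfy \eqref{eq:IC} against the same $\alloc_i$. I would show the following claim: if $\util_i(\underline{\type}_i)\ge\util\primed_i(\underline{\type}_i)$, then $\util_i\ge\util\primed_i$ everywhere. Suppose not, and let $\type^*=\inf\{\type:\util_i(\type)<\util\primed_i(\type)\}$. By continuity, $\util_i(\type^*)\ge\util\primed_i(\type^*)$, and there are points just to the right of $\type^*$ where $\util_i<\util\primed_i\le\alloc_i$. On such a set, condition (3) forces $\util_i'=\ability\ge(\util\primed_i)'$ almost everywhere. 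Since both functions are Lipschitz, hence absolutely continuous, integrating the difference from $\type^*$ (or from the left endpoint of the relevant component of $\{\util_i<\util\primed_i\}$) shows that $\util_i-\util\primed_i$ cannot become negative, a contradiction. This comparison gives the first bullet, with strict inequality at $\underline{\type}_i$ as a special case. Applying it in both directions gives the second bullet, and hence uniqueness of the solution with initial value $\underline{u}_i$.

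\emph{Main obstacle.} The delicate step is the rigorous verification of (3) for the explicit formula, and the use of (3) in the comparison argument, when $\util_i$ and $\alloc_i$ are not differentiable. In particular, $\alloc_i$ may have jumps. I would handle this with one-sided derivatives together with absolute continuity of Lipschitz functions, and treat the jump points of the monotone $\alloc_i$ (countably many) separately.
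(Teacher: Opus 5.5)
Your proposal is correct. Its existence half is the paper's argument: you use the same closed form \eqref{eq:utils}, and you verify it more completely, since the paper never checks $\util_i'\ge 0$ and assumes the infimum is attained at some $\type'<\type$ whenever $\util_i(\type)<\alloc_i(\type)$.

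The real difference is in uniqueness and the two bullets. The paper argues against the formula by contradiction in two directions:
\begin{itemize}
\item if $\util\primed_i(\type)>\util_i(\type)$, it takes the $\type'$ realizing the infimum and obtains a violation of the slope bound for $\util\primed_i$;
\item if $\util\primed_i(\type)<\util_i(\type)$, it backtracks to the last contact point of $\util\primed_i$ with $\alloc_i$ and propagates linearly using (3).
\end{itemize}
It then reads off the first bullet from the monotonicity of \eqref{eq:utils} in $\underline{u}_i$.

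You instead prove a single comparison principle between two arbitrary solutions. It uses the first crossing point, condition (3) for the lower function, the slope bound for the other, and absolute continuity. Uniqueness and both bullets then follow by symmetry, without using the formula at all.

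What your route buys:
\begin{itemize}
\item It needs (3) only almost everywhere, so it does not depend on how derivatives are read at kinks.
\item It avoids the paper's loose step when $\alloc_i$ jumps at the contact point. There the supremum need not be attained, so $\util\primed_i(\type')<\alloc_i(\type')$ is not by itself a contradiction.
\item The dominated function enters only through $\util\primed_i\le\alloc_i$ and the slope bound. Your argument therefore also gives the maximality of \eqref{eq:utils} over the relaxed constraint set, which the paper later uses in the proof of \cref{lmm:symmetry}.
\end{itemize}
The paper's route foregrounds that every implementable utility is the closed form with its own initial value. You recover this too by combining your existence and uniqueness.

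One precision for your plan to treat jump points separately. Take a jump of $\alloc_i$ with $\util_i(\type)=\alloc_i(\type^-)<\alloc_i(\type)$. There the infimum is only approached as $\type'\uparrow\type$, so the claim that the same family of terms attains the minimum on a neighborhood fails on the left. The left derivative of $\util_i$ can be strictly below $\ability$, and only the right derivative equals $\ability$. So (3) holds at such points only in the one-sided sense permitted by the footnote to \cref{lem:monotone alloc-util implementation}. That is all the lemma requires, and these countably many points are irrelevant to your almost-everywhere comparison.
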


In the classical mechanism design setting, payoff equivalence means that once the allocation is determined, 
the curvature of the utility function is fixed, 
and the utility function can only be shifted by a constant determined by the utility of the lowest type.
In our setting, for a fixed allocation rule, shifting the utility of the lowest type does not shift the utilities for all types by the same constant. 
We illustrate this in \Cref{fig:unique u}.
For any agent $i$, if the utility of the lowest type is lower than the interim allocation of the lowest type, or if the derivative of the interim allocation is larger than the parameter $\ability$,  then \eqref{eq:IC} implies that 
the interim utility $\util_i$ must be a straight line with derivative $\ability$
until $\util_i$ intersects $\alloc_i$ (in the example in \Cref{fig:unique u}, the intersection occurs at type $\type_i^{(1)}$). 
Then $\util_i$ coincides with $\alloc_i$ until the derivative of $\alloc_i$ exceeds $\ability$. 
In a setting with discrete types, one could apply this reasoning to recursively pin down the interim utility for all types.
Unfortunately, the recursive argument fails to work when the type space is continuous and we provide a formal proof to circumvent this technicality.

\begin{figure}[t]
		\centering
		\begin{tikzpicture}[xscale=5,yscale=4,
  pics/legend entry/.style={code={%
        \draw[pic actions] 
        (-0.25,0.25) -- (0.25,0.25);}}]
\draw [<->] (0,1.5) -- (0,0) -- (1.6,0);
\draw [blue, domain=0:1.5, thick] plot (\x,{sqrt(\x+0.04)});

\draw [red, domain=0:0.6, thick] plot (\x,{\x+0.2-0.003});
\draw [red, domain=0.6:1.5, thick] plot (\x,{sqrt(\x+0.04)-0.003});
\draw [teal, domain=0:0.96, thick] plot (\x,{\x+0.04-0.006});
\draw [teal, domain=0.96:1.5, thick] plot (\x,{sqrt(\x+0.04)-0.006});

\draw [dashed, gray] (0.6,0.8) -- (0.6,0) ;
\draw [dashed, gray] (0.96,1) -- (0.96,0) ;
\node [below,red] at (0.6, 0 ) {$\type_i^{(1)}$};
\node [below,teal] at (0.96, 0 ) {$\type_i^{\dagger (1)}$};

\node [below] at (0, 0 ) {$\underline\type_i$};
\node [below] at (1.55, 0 ) {$\bar\type_i$};
\node [below] at (1.65, 0 ) {$\type_i$};
\node [left, teal] at (-0.02, 0 ) {\footnotesize{$\util_i\primed(\underline\type_i)$}};
\node [left, red] at (-0.02, 0.2 ) {\footnotesize{$\alloc_i(\underline\type_i)=\util_i(\underline\type_i)$}};

\matrix [draw, above right] at (1.6,1) {
 \pic[blue]{legend entry}; &  \node[blue,font=\tiny] {$\alloc_i$}; \\
 \pic[red]{legend entry}; &  \node[red,font=\tiny] {$\util_i$}; \\
 \pic[teal]{legend entry}; &  \node[teal,font=\tiny] {$\util_i\primed$}; \\
};

\end{tikzpicture}
\caption{Illustration of \cref{cor:payoff equiv}}
\rule{0in}{1.2em}$^\dag$\scriptsize Both $\util_i$ and $\util_i\primed$ implement the allocation rule $\alloc_i$, but $\util_i$ gives agent $i$ a higher utility and hence is the better implementation from the principal's point of view. Moreover, $\util_i$ and $\util_i\primed$ do not differ by a constant as in the standard payoff equivalence result (where the constant would equal $\util_i(\underline\type_i)-\util_i\primed(\underline\type_i)$). However, by the construction provided in the proof of \cref{cor:payoff equiv}, $\util_i$ is uniquely identified by the allocation rule and $\util_i(\underline\type_i)$.\\
    \label{fig:unique u}
	\end{figure}

It is immediate from \cref{cor:payoff equiv} that the lowest type exerts zero effort in the optimal non-coordination mechanism, because if we set the utility of the lowest type equal to the interim allocation, then
the utilities of all higher types are weakly increased. 

\begin{proof}[Proof of \cref{cor:payoff equiv}]
For any agent $i$, 
given any monotone and interim feasible allocation $\allocs$, 
and $\underline{u}_i \leq \alloc_i(\underline{\type}_i)$, 
let 
\begin{align}
\label{eq:utils}
\util_i(\type_i) = \min\lbr{\underline{u}_i +\ability(\type_i-\underline{\type}_i),\,
\inf_{\type'_i\leq \type_i} \alloc_i(\type'_i)+\ability(\type_i-\type'_i)}.
\end{align}
\sloppy
Notice that $\util_i(\type_i)\leq \alloc_i(\type_i)$, because $\inf_{\type'_i\leq \type_i} \alloc_i(\type'_i)+\ability(\type_i-\type'_i)\leq \alloc_i(\type_i)$ for all $\type_i$.
For those types~$\type_i$ such that $\util_i(\type_i)<\alloc_i(\type_i)$, by the definition of $\util_i$, there exists some $\type_i'<\type_i$ such that $\util_i(\type_i) = \min\lbr{\underline{u}_i +\ability(\type_i-\underline{\type}_i),\,
 \alloc_i(\type'_i)+\ability(\type_i-\type'_i)}$, implying that $\util_i'(\type_i)=\ability$.
 For those types $\type_i$ such that $\util_i(\type_i)=\alloc_i(\type_i)$, by definition, $\alloc_i(\type'_i)+\ability(\type_i-\type'_i)\geq\alloc_i(\type_i)$ for all $\type_i'<\type_i$.
 This can happen only when $\alloc_i'(\type_i)<\ability$, implying that $\util_i'(\type_i)<\ability$.
Hence $(\allocs,\utils)$ satisfies \eqref{eq:IC} and is implementable by a non-coordination mechanism. 

Next we show that $\utils$ is the unique utility profile such that $(\allocs,\utils)$ is implementable by a non-coordination mechanism, given utilities $\{\underline{u}_i\}_{i=1,\dots,n}$ for the lowest types.
Suppose $\utils\primed$ is a different utility profile such that $(\allocs,\utils\primed)$ is implementable by a non-coordination mechanism
and $\util\primed_i(\underline{\type}_i) = \underline{u}_i$ for all $i$.
Then \eqref{eq:IC} implies that $\util\primed_i(\type_i) \leq \alloc_i(\type_i)$ for all $\type_i$.

Suppose there exists $\type_i$ such that $\util\primed_i(\type_i) > \util_i(\type_i)$. 
This is only possible if $\util_i(\type_i)<\alloc_i(\type_i)$ and  there exists some $\type'_i < \type_i$ such that $\util_i(\type_i)=\alloc_i(\type'_i)+\ability(\type_i-\type'_i) < \util\primed_i(\type_i)$.
However, this implies that in the direct mechanism $(\allocs,\utils\primed)$, agent $i$ with type $\type'_i$ has an incentive to misreport his type as $\type_i$. 
This contradicts the assumption that $(\allocs,\utils\primed)$ is implementable by a non-coordination mechanism.

Now suppose there exists $\type_i$ such $\util\primed_i(\type_i) < \util_i(\type_i)$.
This is only possible if $\util\primed_i(\type_i)<\alloc_i(\type_i)$. 
Let  $\type'_i= \underline{\type}_i$ if $\util\primed_i(\type_i) < \alloc_i(\type_i)$ for all $\type_i$. Otherwise, let $\type'_i=\sup\{z\leq \type_i:\util\primed_i(\type'_i) = \alloc_i(\type'_i)\}$.
In both cases, by \eqref{eq:IC}, we have $\util\primed_i(\type_i) = \util\primed_i(\type'_i) + \ability(\type_i-\type'_i)$.
In the case where $\type'_i= \underline{\type}_i$, we have 
\begin{align*}
\util\primed_i(\type_i) = \util\primed_i(\type'_i) + \ability(\type_i-\type'_i)<\util_i(\type_i)\leq \underline{u}_i+\ability(\type_i-\type'_i),
\end{align*}
implying that $\util\primed_i(\underline{\type}_i) < \underline{u}_i$, a contradiction. 
In the case where $\type'_i> \underline{\type}_i$, we can similarly infer that $\util\primed_i(\type'_i) < \alloc_i(\type'_i)$, which is again a contradiction. 
Hence, for any interim allocation rule $\allocs$, if there exists an interim utility $\utils$ such that $(\allocs,\utils)$ is implementable by a non-coordination mechanism, then $\utils$ is uniquely pinned down by $\allocs$ and the utility profile for the lowest types, and it is given by the expression \eqref{eq:utils}.

Finally, for any $\utils\primed$ such that $(\allocs,\utils\primed)$ is implementable by a non-coordination mechanism,
if $\util\primed_i(\underline{\type}_i) <  \util_i(\underline{\type}_i)$ for all~$i$, 
then by \eqref{eq:utils} 
we must have $\util\primed_i(\type_i) \leq \util_i(\type_i)$ for all $\type_i$.
\end{proof}

\begin{proof}[Proof of \cref{lmm:symmetry}]
Consider a relaxed problem $(\cP'_{\alpha})$ where, instead of the \eqref{eq:IC} constraints, we only require that 
$\util'_i(\type_i)\in [0, \ability]$
and $\util_i(\type_i)\leq \alloc_i(\type_i)$ for any agent $i$ with type $\type_i$.
Note that this is a convex constraint, 
and hence the relaxed problem is a convex problem. 
Thus, there exists a symmetric optimal solution $(\allocs,\utils)$ for Problem $(\cP'_{\alpha})$ if the environment is symmetric. 
Moreover, as $\utils$ is maximized given the derivative constraint and the upper bound of $\allocs$, 
the allocation--utility pair $(\allocs,\utils)$ also satisfies the \eqref{eq:IC} constraints
by the proof of \cref{cor:payoff equiv}. 
Therefore, $(\allocs,\utils)$ is also feasible 
and hence is an optimal solution for Problem \eqref{eq:P_alpha single}.
\end{proof}

\subsection{Characterization of the optimal mechanism.} To simplify the notation in the later analysis, given the partition of the type space, we add a degenerate interval $\underline{\type}^{(0)}=\bar{\type}^{(0)}=\bar{\type}$.

\begin{proof}[Proof of \cref{thm:optimal characterization}]
By \cref{lem:monotone alloc-util implementation}, the optimal utility function $\optimalutil$ must be continuous with subgradient between $0$ and~$\ability$,  
and $\optimalalloc(\type) = \optimalutil(\type)$ if $\optimalalloc'(\type)< \ability$. 
Therefore, we can partition the type space into countably many disjoint intervals $\{(\underline{\type}^{(j)}, \bar{\type}^{(j)})\}_{j=1}^{\infty}$, each of which falls into one of the following three categories:
\begin{enumerate}[{Case} 1:]
    \item \label{case1} 
    $\optimalalloc(\type)=\optimalutil(\type)$ 
    and $\optimalutil'(\type) < \ability$ 
    for any type $\type\in (\underline{\type}^{(j)},\bar{\type}^{(j)})$.
    
    \item \label{case2} 
    $\optimalalloc(\type)=\optimalutil(\type)$ and $\optimalutil'(\type) = \ability$ 
    for any type $\type\in (\underline{\type}^{(j)},\bar{\type}^{(j)})$. 

    \item \label{case3}
    $\optimalalloc(\type) > \optimalutil(\type)$ and $\optimalutil'(\type) = \ability$ 
    for any type $\type\in (\underline{\type}^{(j)},\bar{\type}^{(j)})$.
\end{enumerate}

For any interim allocation rule $\alloc$, let $\intalloc(\type)=\int_{\type}^{\bar{\type}}\alloc(t)\dd\dist(t)$. Notice that $\intalloc(\type)$ is a continuous function.
\begin{lemma}
\label{lmm:constraint}
If $\alloc$ is optimal, then $\intalloc(\type)- \int_{\type}^{\bar{\type}}\efficientalloc(t)\dd\dist(t)<0$ implies 
\begin{enumerate}[(A)]
    \item $\util(\type)=\alloc(\type)$, and
    \item either $\util'(\type)=\ability$ or $\util'(\type)=0$.
\end{enumerate}
\end{lemma}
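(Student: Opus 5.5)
I will prove both parts by local perturbations that keep the allocation feasible and implementable while strictly raising the objective of \eqref{eq:P_alpha single}. Fix an optimal symmetric pair $(\alloc,\util)$. As noted after \cref{cor:payoff equiv}, we may take $\util$ to be the maximal utility \eqref{eq:utils} with $\util(\underline\type)=\alloc(\underline\type)$. Suppose $\type_0$ satisfies $\intalloc(\type_0)<\int_{\type_0}^{\bar\type}\efficientalloc\dd\dist$. Both sides of this inequality are continuous in $\type$, so the strict inequality holds on a neighborhood $I=(\type_0-\epsilon,\type_0+\epsilon)$. Since \eqref{eq:feasibility symmetric} only restricts tail integrals, it remains satisfied after any perturbation of $\alloc$ that is supported in $I$, is small in $L^1$, and either preserves $\int_I\alloc\dd\dist$ or moves mass upward only slightly. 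We also need the perturbed $\alloc$ to stay monotone. Feasibility slack therefore gives freedom to reshape $\alloc$ locally, and the task is to show that an optimum leaves no unused freedom.

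\textbf{Part (A).} Suppose $\util(\type_0)<\alloc(\type_0)$. By \eqref{eq:IC}(3), $\util'=\ability$ on the connected component $J\ni\type_0$ of $\{\util<\alloc\}$, and $J$ is open by continuity, so we may shrink $I\subseteq J$. Lower $\alloc$ on $I$ toward $\util$, replacing it with $\max\{\util,\alloc-\delta\}$ suitably monotonized. This does not change $\util$, because $\util$ is pinned down by \eqref{eq:utils} through infima over $\type'\le\type$, and those infima are attained at points where $\util=\alloc$, which lie outside $J$. The removed mass $\int_I\delta\,\dd\dist$ is then shifted to the types just above $I$, keeping tail integrals below the efficient ones thanks to the slack. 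Shifting allocation upward raises the term $\alpha\,\type\alloc$ strictly, while by construction $\util$ weakly increases (lower types are unaffected, and higher types have weakly more allocation). This contradicts optimality. The main obstacle is checking monotonicity and the IC slope condition when we add mass above $I$. To handle it, I would instead lower $\alloc$ on $I$ and raise it on an interval just above $I$ by a constant; this preserves monotonicity provided $\alloc$ has room there. If it does not (for instance if $\alloc$ jumps), I would instead flatten $\alloc$ on $I$ to the common value $\util$, which increases $\util$ above $I$ and again improves the objective.

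\textbf{Part (B).} By (A), $\util=\alloc$ near $\type_0$, so $\alloc'=\util'\in[0,\ability]$. Suppose instead $0<\util'<\ability$ on a set of positive measure in $I$. Since $\util'$ lies strictly inside $[0,\ability]$ there, both directions of perturbation are available: replace $\alloc$ on a subinterval $[a,b]$ with a mean-preserving rotation that is steeper (slope closer to $\ability$) or flatter (closer to $0$), fixing $\int_a^b\alloc\dd\dist$. With $\util=\alloc$ maintained, this is a perturbation $h$ that enters the objective linearly, through $\alpha\int\type h\dd\dist+(1-\alpha)\int h\dd\dist$, plus the effect on $\util$ above $b$, which is zero if the endpoints are kept fixed. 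The objective is linear in $h$ and both $h$ and $-h$ are admissible, so optimality forces the first-order variation to vanish. Yet a mean-preserving rotation that shifts mass to higher types strictly raises $\int\type h\dd\dist$. Hence an interior slope cannot be optimal, and the bang-bang conclusion $\util'\in\{0,\ability\}$ follows. The delicate point is preserving the level of $\alloc$ at $a$ and $b$ while fixing $\int_a^b\alloc\dd\dist$. I would use a two-parameter piecewise-linear perturbation with compact support in $(a,b)$, which is possible because the slope bounds are slack there.
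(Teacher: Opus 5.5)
\paragraph{There is a genuine gap in Part (A).} The gap is at the point you flag: the improving perturbation must keep $\alloc$ monotone, and none of your constructions does this in general.

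\emph{Constant shifts.} Lowering $\alloc$ to $\max\{\util,\alloc-\delta\}$ on $I$ creates a downward jump at the left end of $I$ whenever $\alloc$ is continuous there. Raising $\alloc$ by a constant just above $I$ creates a downward jump at the right end of the raised interval, again whenever $\alloc$ is continuous there.

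\emph{Clipping.} Using $\max\{\alloc(a),\alloc-\delta\}$ on $[a,b]$ and $\min\{\alloc(c),\alloc+\delta'\}$ on $[b,c]$ fixes this only where $\alloc$ actually increases.

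\emph{The flat case.} \eqref{eq:IC} allows $\alloc\equiv q>\util$ on a neighborhood of $\type_0$: $\alloc$ jumps up and stays flat while $\util$ climbs at slope $\ability$. In that case, any monotone $\alloc\primed$ that agrees with $\alloc$ outside a compact subset of the neighborhood is squeezed between $q$ and $q$. So no local perturbation exists at all.

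\emph{Your fallback.} Flattening does not help. For any $\alloc\primed$ with $\util\le\alloc\primed\le\alloc$, formula \eqref{eq:utils} returns exactly $\util$: the upper bound comes from $\alloc\primed\le\alloc$, and the lower bound from $\util(\type')+\ability(\type-\type')\ge\util(\type)$. Hence flattening $\alloc$ toward $\util$ on $I$ never raises $\util$ above $I$. It only discards allocation and lowers $\alpha\int\type\,\alloc\dd\dist$.

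\emph{Part (B).} Part (B) starts from ``$\util=\alloc$ near $\type_0$'', so it inherits the gap. The rotation argument itself is fine once you restrict it to a set where $\util'$ is bounded away from $0$ and $\ability$.

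\paragraph{The missing idea: pass to the relaxation.} The paper avoids monotonicity altogether by working with the relaxation \eqref{eq:R_alpha single}. That program drops monotonicity of $\alloc$ and condition (3) of \eqref{eq:IC}, keeping only $\util\le\alloc$ and $0\le\util'\le\ability$. The paper asserts that the relaxed optimum is automatically monotone and satisfies \eqref{eq:IC} (see the remark after \eqref{eq:R_alpha single} and the proof of \cref{lmm:symmetry}). Hence an optimum of \eqref{eq:P_alpha single} is also optimal for the relaxation.

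The paper then reads off (A) and (B) from the Euler--Lagrange and complementary-slackness conditions:
\begin{itemize}
\item slack gives $\lambda=0$, hence $\gamma'=-\alpha$;
\item so $\gamma\neq 0$ and $\gamma\neq 1-\alpha$ except at one point each;
\item $\gamma\neq 0$ forces the level constraint to bind;
\item $\gamma\neq 1-\alpha$ forces $\kappa'\neq 0$, hence $\kappa\neq 0$, which gives the bang--bang slope.
\end{itemize}

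Your perturbations work verbatim inside the relaxed program. Lower $\alloc$ toward $\util$ on a slack interval, put the freed mass slightly higher, and keep $\util$ fixed; the objective rises by $\alpha$ times a positive amount. Routing your argument through the relaxation therefore closes the gap.

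If you want to stay in the monotone program, you need extra structure. For example:
\begin{itemize}
\item \eqref{eq:feasibility symmetric} cannot bind at an interior point of a flat piece of $\alloc$, because $\efficientalloc$ is strictly increasing;
\item $\alloc$ must jump at any binding endpoint of a flat piece.
\end{itemize}
These two facts give you room to lower $\alloc$ starting from the left jump and raise it up to the right jump.
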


\begin{corollary}
\label{cor:case3}
If $(\alloc,\util)$ is optimal, then $\alloc(\type) > \util(\type)$ implies $\intalloc(\type)- \int_{\type}^{\bar{\type}}\efficientalloc(t)\dd\dist(t)=0$ and $\alloc(\type) = \efficientalloc(\type)$ almost everywhere.
\end{corollary}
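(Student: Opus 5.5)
The plan is to derive \cref{cor:case3} directly from \cref{lmm:constraint} by contraposition, and then to upgrade ``feasibility binds'' to ``$\alloc=\efficientalloc$ almost everywhere'' by differentiating the binding constraint. Write $G(\type):=\intalloc(\type)-\int_{\type}^{\bar{\type}}\efficientalloc(t)\dd\dist(t)$. By \eqref{eq:feasibility symmetric} we have $G\le 0$ everywhere. \cref{lmm:constraint}(A) says that $G(\type)<0$ forces $\util(\type)=\alloc(\type)$. Contrapositively, at any type with $\alloc(\type)>\util(\type)$ we must have $G(\type)=0$, which is the first claim.

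For the second claim, consider the set $E=\{\type:\alloc(\type)>\util(\type)\}$. By the case decomposition preceding \cref{lmm:constraint}, $E$ is a union of the open intervals of Case~3, up to the measure-zero set of cutoffs. I would first argue that $E$ is (essentially) open. Since $\util$ is continuous and $\alloc$ is monotone, $\alloc-\util$ can only jump upward at discontinuities of $\alloc$. If the definition of $E$ at such endpoints becomes an issue, I would simply restrict to the interiors of the Case~3 intervals, which differ from $E$ by a null set.

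On any open interval $I\subseteq E$ we have $G\equiv 0$ by the first step. The function $G$ is absolutely continuous, with $G'(\type)=-(\alloc(\type)-\efficientalloc(\type))\density(\type)$ for almost every $\type$ by the Lebesgue differentiation theorem. Since $G$ is constant on $I$, it follows that $G'=0$ almost everywhere on $I$. Because $\density>0$ on $\Theta$, we conclude $\alloc=\efficientalloc$ almost everywhere on $I$. A countable union of such intervals preserves the almost-everywhere statement.

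The main obstacle is the measure-theoretic bookkeeping in the second step. Pointwise information ($G(\type)=0$ at each point of $E$) gives derivative information only on open sets or at density points of $E$. If $E$ were not essentially open, I would fall back on a general fact: a Lipschitz function vanishing on a measurable set has zero derivative almost everywhere on that set. Since $G$ is Lipschitz (its integrand is bounded by $\density\le\bar\beta_1$, or more generally is integrable and $G$ is absolutely continuous), this gives $G'=0$ almost everywhere on $E$ directly, and hence $\alloc=\efficientalloc$ almost everywhere on $E$, without needing the interval structure.
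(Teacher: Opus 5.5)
Your proposal is correct and follows the paper's route. You take the contrapositive of part (A) of \cref{lmm:constraint} to get that the feasibility constraint binds wherever $\alloc>\util$, and then differentiate the binding constraint, using $\density>0$, to conclude $\alloc=\efficientalloc$ almost everywhere. Your extra step is also sound: an absolutely continuous function vanishing on a measurable set has zero derivative almost everywhere on that set. This supplies the rigor the paper compresses into ``taking the derivative with respect to $\type$.''
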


\begin{corollary}
\label{cor:case1a}
If $(\alloc,\util)$ is optimal, then $0<\util'(\type)<\ability$ implies $\intalloc(\type)- \int_{\type}^{\bar{\type}}\efficientalloc(t)\dd\dist(t)=0$ and $\alloc(\type) = \efficientalloc(\type)$ almost everywhere.
\end{corollary}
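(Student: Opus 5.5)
The statement to prove is \cref{cor:case1a}: if $0<\util'(\type)<\ability$, then feasibility binds there and $\alloc=\efficientalloc$ almost everywhere on that set. The plan is to derive it from \cref{lmm:constraint} by contraposition, and then upgrade the pointwise binding to an almost-everywhere equality of allocations, exactly as one would for \cref{cor:case3}.

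\emph{Step 1 (feasibility binds).} Fix a type $\type$ with $0<\util'(\type)<\ability$, and suppose for contradiction that $\intalloc(\type)-\int_{\type}^{\bar\type}\efficientalloc(t)\dd\dist(t)<0$. Then part (B) of \cref{lmm:constraint} says $\util'(\type)\in\{0,\ability\}$, which contradicts $0<\util'(\type)<\ability$. So the gap
\[
G(\type):=\intalloc(\type)-\int_{\type}^{\bar\type}\efficientalloc(t)\dd\dist(t)
\]
equals zero at every such $\type$. By \eqref{eq:feasibility symmetric}, $G\le 0$ everywhere.

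\emph{Step 2 (from $G=0$ to $\alloc=\efficientalloc$ a.e.).} Let $S$ be the set of types with $0<\util'(\type)<\ability$, with $\util'$ taken where it exists. On $S$ we have $G=0$. $G$ is absolutely continuous, with $G'(\type)=-(\alloc(\type)-\efficientalloc(\type))\density(\type)$ almost everywhere. A standard fact about absolutely continuous functions is that $G'=0$ almost everywhere on the level set $\{G=0\}$: almost every point of that set is a density point, and at a density point where $G$ is differentiable, the derivative along the set vanishes. Since $\density>0$, this gives $\alloc=\efficientalloc$ almost everywhere on $S$.

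\emph{Main obstacle: measurability and intervals.} The delicate point is how to read ``almost everywhere.'' If the conclusion is wanted on a whole interval $(\underline\type^{(j)},\bar\type^{(j)})$ of the Case~1 partition, I would use the fact that $G\equiv 0$ on $S$, with $S$ dense in the interval up to null sets. By continuity of $G$, $G$ then vanishes on the closure, hence $G'=0$ on the interior of the interval, which again gives $\alloc=\efficientalloc$. If $S$ is only a positive-measure non-interval set, the density-point argument of Step 2 handles it directly.
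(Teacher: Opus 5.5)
Your proposal is correct and follows the paper's route: the paper obtains this corollary (like \cref{cor:case3}) by taking the contrapositive of \cref{lmm:constraint}, here part (B), to conclude that \eqref{eq:feasibility symmetric} binds, and then differentiating the binding constraint to get $\alloc=\efficientalloc$ almost everywhere. Your level-set argument that $G'=0$ a.e.\ on $\{G=0\}$ makes rigorous the paper's ``take the derivative'' step, which tacitly assumes the constraint binds on an interval. The same conclusion also follows directly from $G\le 0$, since then every zero of $G$ is a maximum.
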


Corollary \ref{cor:case3} implies that 
in Case \ref{case3}, 
we have $\optimalalloc(\type) = \efficientalloc(\type)>\optimalutil(\type)$. 
Thus Case \ref{case3} will correspond to the efficient region. 

The analysis of Case \ref{case1} is decomposed into two subcases:
\begin{enumerate}[{Case 1}a:]
    \item \label{case1a}
    $\optimalalloc(\type)=\optimalutil(\type)$ 
    and $\optimalutil'(\type) \in (0, \ability)$ 
    for any type $\type\in (\underline{\type}^{(j)},\bar{\type}^{(j)})$.
    \item \label{case1b} 
    $\optimalalloc(\type)=\optimalutil(\type)$ 
    and $\optimalutil'(\type) =0$ 
    for any type $\type\in (\underline{\type}^{(j)},\bar{\type}^{(j)})$.
\end{enumerate}
By Corollary \ref{cor:case1a}, in Case 1\ref{case1a}, 
we have 
$\intalloc(\type)- \int_{\type}^{\bar{\type}}\efficientalloc(t)\dd\dist(t)=0$
and $\optimalalloc(\type) = \efficientalloc(\type)$
for any type $\type\in (\underline{\type}^{(j)},\bar{\type}^{(j)})$. 
Therefore, Case 1\ref{case1a} corresponds to the no-tension region. 
Moreover, we show that Case 1\ref{case1b} cannot occur (the proof is deferred to the end of the section). 
Thus Case \ref{case1} gives the no-tension region. 
\begin{lemma}\label{lem:no case 1b}
Case 1\ref{case1b} does not occur in the optimal solution. 
\end{lemma}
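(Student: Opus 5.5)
We argue by contradiction, using a local perturbation. Suppose the optimal pair $(\alloc,\util)$ has an interval $(\underline{\type}^{(j)},\bar{\type}^{(j)})$ of Case 1b, so $\alloc=\util\equiv c$ is constant there. By \cref{lmm:constraint}, wherever feasibility is slack the slope must be $0$ or $\ability$, so flat pieces are allowed by the first-order conditions. The plan is therefore to exhibit a direct improvement that the bang–bang conditions do not exclude.

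\textbf{Step 1 (rearrangement).} Keep the objective's utilitarian part while gaining matching efficiency. On the flat interval $I=(a,b)$ I would replace the constant $c$ by a function $\tilde\alloc=\tilde\util$. It agrees with $c$ near the ends and is slightly tilted in the middle: lower on $(a,m)$, higher on $(m,b)$, with slope at most $\ability$. The tilt is chosen so that $\int_I \tilde\alloc\dd\dist=\int_I c\dd\dist$. Concretely, take $\tilde\alloc=c+\varepsilon\phi$ with $\phi$ increasing on $I$, $\int_I\phi\dd\dist=0$, and $|\phi'|$ bounded, while keeping $\tilde\alloc$ monotone overall. At the endpoints $\phi$ jumps; to preserve continuity of $\util$, I would instead use a $\phi$ that starts at $0$, dips, rises above $0$, and returns to $0$ monotonically on a sub-interval. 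Monotonicity of $\alloc$ then requires care, and I would handle it differently. Use the fact that $\util$ is flat on $I$ and that the adjacent regions satisfy $\util'\le\ability$: shift the whole configuration, lowering $\alloc=\util$ on $(a,m)$ down toward $\alloc(a^-)$ and raising it on $(m,b)$ toward $\alloc(b^+)$. For small $\varepsilon$ this stays monotone when the neighbors have strictly different values. If a neighbor equals $c$ from the flat side, I would enlarge $I$ to the maximal flat interval, so the endpoint values differ or $I$ touches $\underline\type$ or $\bar\type$.

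\textbf{Step 2 (constraints).} With $\tilde\util=\tilde\alloc$ on $I$ and unchanged outside, \eqref{eq:IC} holds: the slope lies in $[0,\ability]$ for small $\varepsilon$, and the level constraint holds with equality. The cumulative state $\intalloc(\type)$ is unchanged outside $I$. Inside $I$ it changes by $-\varepsilon\int_\type^b\phi\dd\dist\le 0$, because $\phi$ is increasing with zero mean, so the tail integrals are nonnegative. Hence \eqref{eq:feasibility symmetric} is preserved, even at points where it was binding.

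\textbf{Step 3 (objective).} The utilitarian term changes by $(1-\alpha)\varepsilon\int_I\phi\dd\dist=0$. The efficiency term changes by $\alpha\varepsilon\int_I\type\phi(\type)\dd\dist>0$, since $\phi$ is increasing, has mean zero, and is nonconstant, so it is positively correlated with $\type$. Because $\alpha>0$ this is a strict improvement, contradicting optimality. The utility of types above $b$ is unaffected because $\util$ is unchanged there, and the IC characterization is purely local.

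\textbf{Main obstacle.} The hardest step is guaranteeing monotonicity and continuity at the boundaries of $I$, in particular when $I$ borders a no-effort or efficient region where $\util$ has slope $\ability$ and $\alloc$ must meet $\util$ continuously. I would first try supporting $\phi$ strictly inside $I$ as a smooth increasing-then-returning profile. Since that fails monotonicity, I would fall back on a perturbation that only steepens $\alloc$ within $I$, rotating the flat segment about a pivot $m$ chosen so the mass is conserved. At the endpoints I would use the slack available because the neighboring values differ from $c$, or, if the interval reaches $\underline\type$ or $\bar\type$, the freedom at the boundary.
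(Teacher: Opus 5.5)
Your Step 2 has a sign error, and it removes the core of the argument. Since $\intalloc(\type)=\int_\type^{\bar\type}\alloc\dd\dist$, replacing $\alloc$ by $\alloc+\varepsilon\phi$ on $I=(a,b)$ changes $\intalloc(\type)$ by $+\varepsilon\int_\type^{b}\phi\dd\dist$. This is nonnegative precisely because $\phi$ is increasing with zero mean. A mass-preserving shift toward higher types therefore \emph{raises} the upper-tail masses and tightens \eqref{eq:feasibility symmetric}. In particular, feasibility is not preserved at points where it binds.

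This has to be so. Interim feasibility says exactly that no upper tail may carry more mass than under $\efficientalloc$. If efficiency-improving rearrangements were always feasible, you could beat $\efficientalloc$ in matching efficiency. The upward tilt is the right kind of improvement, and the paper uses it too. But the real content of the lemma is showing that \eqref{eq:feasibility symmetric} has room around the flat segment, and your proposal never establishes this.

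The paper gets that room from a case analysis on the interval immediately to the right of the flat one:
\begin{itemize}
\item If that neighbor is no-tension, continuity of $\util$ forces $\efficientalloc(b)=c$ with feasibility binding at $b$. Since $\efficientalloc$ is strictly increasing while $\alloc\equiv c$, feasibility is then violated just below $b$, so this configuration cannot occur.
\item If the neighbor is no-effort, feasibility is slack at $b$. For $a>\underline\type$ it also cannot bind anywhere on $[a,b)$, because a flat $\alloc$ against a strictly increasing $\efficientalloc$ would violate it just below any binding point. This gives a uniform slack $\delta$ on a neighborhood, which absorbs a small upward shift.
\item If the neighbor is efficient, or $b=\bar\type$, the flat segment can be raised relative to the types above.
\end{itemize}

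Your boundary treatment also fails as written. Because $\util$ is continuous, $\alloc\ge\util$, and $\alloc$ is monotone, at a left endpoint $a>\underline\type$ you always have $\alloc(a^-)=\util(a)=c$. Passing to a maximal flat interval does not create a gap. So lowering $\alloc=\util$ just to the right of $a$ breaks monotonicity of $\alloc$ and continuity of $\util$. At the other end, raising $\util$ just below $b$ while leaving it unchanged above $b$ creates a downward jump in $\util$, which violates \eqref{eq:IC}. The constraint is not local in the way you assert.

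The paper avoids both problems by perturbing on a slightly larger interval $[a-\epsilon,b+\epsilon]$. It builds a monotone rule $\alloc^{\ddagger}$ with slope at most $\ability$ that agrees with $\alloc$ at both ends of this interval, has the same mass, and shifts mass upward. It then uses the mixture $(1-\hat\delta)\alloc+\hat\delta\alloc^{\ddagger}$, with $(1-\hat\delta)\util+\hat\delta\alloc^{\ddagger}$ for the utility. Monotonicity, continuity and the slope bounds are then inherited by convexity, so only feasibility must be checked, and that is where the slack $\delta$ is used. In the efficient-neighbor case, the paper additionally raises utilities above $b$ to keep $\util$ monotone.
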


Finally, for any interval $j$ that corresponds to Case \ref{case2}, 
if $\underline{\type}^{(j)} > \underline{\type}$, 
since the integration constraint \eqref{eq:feasibility symmetric} binds for all types within each interval under any of the two other cases, 
it must also bind for both endpoints of the interval $j$; hence
\begin{align*}
\int_{\underline{\type}^{(j)}}^{\bar{\type}^{(j)}} \optimalalloc(\type) \dd \dist(\type) 
= \int_{\underline{\type}^{(j)}}^{\bar{\type}^{(j)}} \efficientalloc(\type) \dd \dist(\type).
\end{align*}
If $\underline{\type}^{(j)} = \underline{\type}$, 
then \eqref{eq:feasibility symmetric} also binds at $\underline{\type}$, 
since otherwise we could increase the allocation and utility for a sufficiently small region above type $\underline{\type}$
without violating feasibility, 
which would contradict the optimality of the solution. 
Hence the above equality again holds, 
and Case \ref{case2} corresponds to the no-effort region. 
\end{proof}

\begin{proof}[Proof of \cref{lmm:constraint}]
Consider the following relaxation of Problem \eqref{eq:P_alpha single}:
\begin{equation}
\tag{$\hat{\mathcal{R}}_{\alpha}$}
\label{eq:R_alpha single}
    \begin{aligned}
      \sup_{\alloc,\util} \quad& \expect[\type]{\alpha \cdot \type\cdot \alloc(\type)
     + (1-\alpha)\cdot \util(\type)} \\
     \text{s.t.} \quad& \int_{\type}^{\bar{\type}} \alloc(\type)\dd \dist(z) \leq \int_{\type}^{\bar{\type}} \efficientalloc(z) \dd\dist(z) \quad\forall \type\in[\underline{\type},\bar{\type}],\\
     & \util(\type) \leq \alloc(\type), \qquad 0\leq \util'(\type)\leq \ability.\\
    \end{aligned}
\end{equation}
Here we have omitted the monotonicity constraint on the allocation, as the optimal solution for the above program will automatically satisfy the monotonicity constraint. 

Define $\intalloc(\type)=\int_{\type}^{\bar{\type}}\alloc(t)\dd\dist(t)$ 
and $\intalloc'(\type)=-\alloc(\type)\density(\type)$.
The relaxed problem can be rewritten as follows:
\begin{align*}
\sup_{\intalloc,\util}\quad&\int_{\underline{\type}}^{\bar{\type}}
-\alpha\cdot\type \cdot \intalloc'(\type)+(1-\alpha)\cdot\util(\type)\cdot\density(\type)\dd\type
&\,\\
\text{s.t.} \quad& \intalloc(\type)\leq \int_{\type}^{\bar{\type}}\efficientalloc(t)\dd\dist(t),
&\hfill\lambda(\type),\\
&\util(\type)\density(\type)+\intalloc'(\type)\leq 0,
&\hfill\gamma(\type),\\
& 0\leq \util'(\type),
&\hfill\kappa_1(\type),\\
& \util'(\type)\leq \ability,
&\hfill\kappa_2(\type).
\end{align*}
The Lagrange multipliers $\lambda(\type),\gamma(\type), \kappa_1(\type), \kappa_2(\type)$ are non-negative.
The Lagrangian is given by
\begin{equation*}
\begin{aligned}
\hat{\mathcal{L}}(\intalloc,\intalloc',\util,\util',\lambda,\gamma,\kappa_1,\kappa_2)
=&\,-[\alpha\type\cdot \intalloc'(\type)-(1-\alpha)\cdot\util(\type)\density(\type)\\
&+\lambda(\type)(\intalloc(\type)- \int_{\type}^{\bar{\type}}\efficientalloc(t)\dd\dist(t))\\
&+\gamma(\type)(\util(\type)\density(\type)+\intalloc'(\type))\\
&+\kappa_1(\type)(\util'(\type) - \ability) -\kappa_2(\type)\util'(\type)].
\end{aligned}
\end{equation*}
The solution of the problem satisfies the following conditions: 
\begin{enumerate}[(1)]
\item The Euler--Lagrange conditions,\footnote{We are looking for piecewise continuous solutions (the state variables are continuous and the control variables are piecewise continuous), since, in principle, the allocation $\alloc(\type)$ may be merely piecewise continuous and not continuous, while $\util(\type)$ is continuous but its derivative might not be. The necessary conditions should be the integral form of the Euler--Lagrange conditions, together with the Erdmann--Weierstrass corner conditions \citep[cf.][]{clarke2013functional}. However, the latter have no bite here, and we can use the usual form of the Euler--Lagrange conditions, since they do not involve the state variables or the controls. Notice, though, that the Lagrange multiplier $\gamma(\type)$ could potentially be $PC^1$.} 
\begin{equation}
\tag{EL-1}\label{eq:EL-1}
\begin{aligned}
& \frac{\partial \hat{\mathcal{L}}}{\partial \intalloc}-\frac{\dd}{\dd\type}\frac{\partial \hat{\mathcal{L}}}{\partial \intalloc'}=0\Leftrightarrow & \lambda(\type)-(\alpha+\gamma'(\type))=0\\
\end{aligned}
\end{equation}
and
\begin{equation}
\tag{EL-2}\label{eq:EL-2}
\begin{aligned}
& \frac{\partial \hat{\mathcal{L}}}{\partial \util}-\frac{\dd}{\dd\type}\frac{\partial \hat{\mathcal{L}}}{\partial \util'}=0 \Leftrightarrow & (\gamma(\type)-(1-\alpha))\density(\type)-\kappa'(\type)=0,\\
\end{aligned}
\end{equation}
where $\kappa(\type)=\kappa_1(\type)-\kappa_2(\type)$, hold whenever they are well-defined.

\item The complementary slackness conditions hold:
\begin{align}
&\lambda(\type)(\intalloc(\type)- \int_{\type}^{\bar{\type}}\efficientalloc(t)\dd\dist(t))=0,\quad \lambda(\type)\geq 0,\tag{CS-1a} \label{eq:CS-1a}\\
&\gamma(\type)[\util(\type)\density(\type)+\intalloc'(\type)]=0,\quad \gamma(\type)\geq 0,\tag{CS-1b} \label{eq:CS-1b}\\
&\kappa_1(\type)[\util'(\type) - \ability]=0,\quad \kappa_1(\type)\geq 0,\tag{CS-1c} \label{eq:CS-1c}\\
&\kappa_2(\type)\util'(\type) =0,\quad \kappa_2(\type)\geq 0.\tag{CS-1d} \label{eq:CS-1d}
\end{align}
\end{enumerate}

Suppose $\intalloc(\type)- \int_{\type}^{\bar{\type}}\efficientalloc(t)\dd\dist(t)<0$.
We show that the following two conditions hold for the optimal solution:
\begin{itemize}
\item $\util(\type)=\alloc(\type)$. (By \eqref{eq:CS-1a}, $\lambda(\type)=0$ holds in an interval. 
From \eqref{eq:EL-1}, we have $\gamma'(\type)=-\alpha$.
Hence $\gamma(\type)$ cannot be a constant in this interval; in particular, $\gamma(\type)\neq 0$ except for at most one point.
Combined with \eqref{eq:CS-1b}, this further implies that $\util(\type)\density(\type)+\intalloc'(\type)=0$, i.e., $\util(\type)=\alloc(\type)$.)

\item Either $\util'(\type)=\ability$ or $\util'(\type)=0$.
(Reasoning similarly as for the previous condition, we have that $\gamma(\type)\neq 1-\alpha$ except for at most one point, which combined with \eqref{eq:EL-2} implies that $\kappa'(\type)\neq 0$ except for at most one point.
This means that $\kappa(\type)$ is not a constant; in particular, it is not zero.
The result follows from applying \eqref{eq:CS-1c} and \eqref{eq:CS-1d}.)\qedhere
\end{itemize}
\end{proof}

\begin{proof}[Proof of \cref{cor:case3}]
    The contrapositive of Lemma \ref{lmm:constraint} is also true:  $\alloc(\type) > \util(\type)$ implies $\intalloc(\type)- \int_{\type}^{\bar{\type}}\efficientalloc(t)\dd\dist(t)=0$.
    By rearranging the terms and taking the derivative with respect to~$\type$, 
    we have $\alloc(\type) = \efficientalloc(\type)$ almost everywhere. 
\end{proof}

\begin{proof}[Proof of \cref{lem:no case 1b}]
Suppose Case 1\ref{case1b} occurs. In this case, 
since $\util$ is a continuous function, 
$\optimalalloc(\type)=\optimalutil(\type) = \optimalutil(\bar{\type}^{(j)})$ for any type $\type\in (\underline{\type}^{(j)},\bar{\type}^{(j)})$.
Let $j'$ be the index of the interval such that $\bar{\type}^{(j)} = \underline{\type}^{(j')}$.
We consider three possible situations for interval $j'$:

\begin{itemize}
\item Interval $j'$ belongs to Case \ref{case1}. 
In this case, the integration constraint \eqref{eq:feasibility symmetric} binds at $\underline{\type}^{(j')}$,
and $\util(\bar{\type}^{(j)}) = \util(\underline{\type}^{(j')}) = \efficientalloc(\underline{\type}^{(j')})$.
Therefore, there exists a constant $\epsilon> 0$ such that \eqref{eq:feasibility symmetric} is violated at type $\bar{\type}^{(j)}-\epsilon$, a contradiction. 

\item Interval $j'$ belongs to Case \ref{case2}. 
In this case, \eqref{eq:feasibility symmetric} does not bind at type $\underline{\type}^{(j')}$.
Suppose otherwise; then we must have $\efficientalloc(\underline{\type}^{(j')}) \leq  \util(\bar{\type}^{(j)})$ 
in order for \eqref{eq:feasibility symmetric} to hold for type $\underline{\type}^{(j')}+\epsilon$ given sufficiently small $\epsilon>0$.
However, this would imply that \eqref{eq:feasibility symmetric} is violated for type $\bar{\type}^{(j)}-\epsilon$ given sufficiently small $\epsilon>0$.

Next we consider two cases for interval $j$.
\begin{itemize}
\item $\underline{\type}^{(j)} > \underline{\type}$.
In this case, \eqref{eq:feasibility symmetric} cannot bind at any type $\type\in [\underline{\type}^{(j)},\bar{\type}^{(j)})$. 
This is because if it binds at $\type$, then $\alloc(\type) = \util(\type) > \efficientalloc(\type)$. 
By the continuity of $\util$ and $\efficientalloc$, 
and the constraint that $\alloc \geq \util$, 
there exists a constant $\epsilon>0$ such that \eqref{eq:feasibility symmetric} is violated at $\type-\epsilon$. 
Thus, there exist $\epsilon,\delta>0$
such that for any type $\type\in[\underline{\type}^{(j)}-\epsilon, \bar{\type}^{(j)}+\epsilon]$, 
\begin{align*}
\int_{\type}^{\bar{\type}} \alloc(z)\dd \dist(z) \leq \int_{\type}^{\bar{\type}} \efficientalloc(z) \dd\dist(z) - \delta.
\end{align*}
Moreover, we can select $\epsilon$ to be sufficiently small to satisfy the additional condition that $\alloc'(\type)\leq\ability$ for any type $\type\in[\underline{\type}^{(j)}-\epsilon, \bar{\type}^{(j)}+\epsilon]$.
Given a parameter $\type^*$, let $\alloc\dprimed$ be the allocation such that 
\begin{enumerate}[(1)]
    \item $\alloc\dprimed(\type) = \alloc(\underline{\type}^{(j)}-\epsilon)$ 
    for any type $\type\in [\underline{\type}^{(j)}-\epsilon, \type^*]$;
    
    \item $\alloc\dprimed(\type) =\alloc(\underline{\type}^{(j)}-\epsilon) + \ability\cdot(\type-\type^*)$
    for any type $\type\in (\type^*, \type^* + \frac{1}{\ability}\cdot\alloc(\bar{\type}^{(j)}+\epsilon) - \alloc(\underline{\type}^{(j)}-\epsilon))$;
    
    \item $\alloc\dprimed(\type) = \alloc(\bar{\type}^{(j)}+\epsilon)$ 
    for any type $\type\in [\type^* +\frac{1}{\ability}\cdot\alloc(\bar{\type}^{(j)}+\epsilon) - \alloc(\underline{\type}^{(j)}-\epsilon), \bar{\type}^{(j)}+\epsilon]$.
\end{enumerate}
The parameter $\type^*$ is chosen so that 
\begin{align*}
\int_{\underline{\type}^{(j)}-\epsilon}^{\bar{\type}^{(j)}+\epsilon} \alloc\dprimed(z) \dd\dist(z)
=\int_{\underline{\type}^{(j)}-\epsilon}^{\bar{\type}^{(j)}+\epsilon} \alloc(z) \dd\dist(z).
\end{align*}
It is easy to verify that 
\begin{align*}
\int_{\underline{\type}^{(j)}-\epsilon}^{\bar{\type}^{(j)}+\epsilon} z\cdot\alloc\dprimed(z) \dd\dist(z)
>\int_{\underline{\type}^{(j)}-\epsilon}^{\bar{\type}^{(j)}+\epsilon} z\cdot\alloc(z) \dd\dist(z),
\end{align*}
since $\alloc\dprimed$ shifts allocation probabilities from low types to high types compared to $\alloc$. 
Therefore, given a sufficiently small constant $\hat{\delta}> 0$,
consider another allocation--utility pair $(\alloc\primed,\util\primed)$
such that 
\begin{enumerate}[(1)]
    \item $\alloc\primed(\type)=\alloc(\type)$ and $\util\primed(\type)=\util(\type)$ for any type $\type\not\in[\underline{\type}^{(j)}-\epsilon, \bar{\type}^{(j)}+\epsilon]$;
    \item $\alloc\primed(\type)=(1-\hat{\delta})\cdot\alloc(\type)+\hat\delta\cdot \alloc\dprimed(\type)$ and $\util\primed(\type)=(1-\hat{\delta})\cdot\util(\type)+\hat\delta\cdot \alloc\dprimed(\type)$ for any type $\type\in[\underline{\type}^{(j)}-\epsilon, \bar{\type}^{(j)}+\epsilon]$.
\end{enumerate}
The new allocation--utility pair $(\alloc\primed,\util\primed)$ is feasible and strictly improves the objective value, 
a contradiction to the optimality of $(\alloc,\util)$.

\item $\underline{\type}^{(j)} = \underline{\type}$. 
The proof for this case is similar. The only difference is that we can change the allocation and utility within interval $j$ without worrying about the continuity of the utility function for lower types. 
Therefore, using a similar construction for $\alloc\dprimed$ and $(\alloc\primed,\util\primed)$, restricted to the interval 
$[\underline{\type}^{(j)}, \bar{\type}^{(j)}+\epsilon]$
for sufficiently small $\epsilon>0$, 
we can again show that the allocation--utility pair $(\alloc,\util)$ that contains Case 1\ref{case1b} is not optimal.
\end{itemize}

\item Either interval $j'$ belongs to Case \ref{case3}, 
or $\underline{\type}^{(j')}$ is the highest possible type $\bar{\type}$. 
In either case, for the integration constraint \eqref{eq:feasibility symmetric} to be satisfied within interval $j$, both the efficient allocation $\efficientalloc$ and the interim allocation $\alloc$ must be strictly above the utility at~$\underline{\type}^{(j')}$.
Therefore, the allocation within interval $j$ can be increased, relative to allocations above $\underline{\type}^{(j')}$, without violating the monotonicity.
Again we use a similar construction for $\alloc\dprimed$ and $(\alloc\primed,\util\primed)$, restricted to the interval 
$[\underline{\type}^{(j)}-\epsilon, \bar{\type}^{(j)}]$
for sufficiently small $\epsilon>0$.
Here we add the further operation of increasing the utility $\util\primed$ for types above $\underline{\type}^{(j')}$ 
to maintain the monotonicity of the utility function; this only increases the objective value. 
Thus, the allocation--utility pair $(\alloc,\util)$ that contains Case 1\ref{case1b} is not optimal.\qedhere
\end{itemize}
\end{proof}

\section{Proof of \cref{thm:scarce_resource}}
\label{apx:A5_scarce}

\begin{proof}[Proof of \cref{lem:large_convex}]
Taking the second-order derivative gives us
\begin{align*}
\efficientalloc'' &= (\dist^{n-1})'' = ((n-1) \dist^{n-2}\cdot\density)'
= (n-1)((n-2)\dist^{n-3} \cdot \density^2 + \dist^{n-2}\cdot \density')\\
&\geq (n-1)\dist^{n-3}((n-2)\underline{\beta}_1^2 - \dist\cdot\beta_2)
\geq 0
\end{align*}
when $n\geq N \geq 2+\frac{\beta_2}{\underline{\beta}_1^2}$.
\end{proof}

\begin{proof}[Proof of \cref{prop:convex optimal}]
\sloppy 
By \cref{thm:optimal characterization}, there exists a partition of the type space 
$\{(\underline{\type}^{(j)}, \bar{\type}^{(j)})\}_{j=1}^{\infty}$
such that each interval belongs to one of the three cases.
It is sufficient to show that the order of the three cases on the type space cannot be changed in the optimal non-coordination mechanism. 

First we show that for $j$ such that interval $j$ is in the no-tension region, 
it is optimal for all intervals containing types below $\underline{\type}^{(j)}$ to be in the no-tension region as well. 
The main reason is that by the convexity of the efficient allocation rule, 
for any type $\type\leq \underline{\type}^{(j)}$, 
$\efficientalloc'(\type) \leq \efficientalloc'(\underline{\type}^{(j)}) \leq\ability$. 
Therefore, if we set $\optimalutil(\type) = \optimalalloc(\type) = \efficientalloc(\type)$,
the resulting non-coordination mechanism is feasible and trivially maximizes the objective value. 

Let $\type^{(1)}$ be the supremum of the set of all types $\type$ lying in the no-tension region. 
The argument in previous paragraph shows that the whole interval $(\underline{\type},\type^{(1)})$ is in the no-tension region. 
Moreover, by \cref{thm:optimal characterization}, 
$\optimalalloc(\type^{(1)})=\optimalutil(\type^{(1)}) = \efficientalloc(\type^{(1)})$,
and for any $\type\geq \type^{(1)}$ we have $\optimalutil'(\type) = \ability$.
Now we consider two cases:
\begin{itemize}
\item If $\efficientalloc'(\type^{(1)}) \geq \ability$, 
then by the convexity of the efficient allocation rule, $\efficientalloc(\type) > \optimalutil(\type)$ for any type $\type > \type^{(1)}$, 
which implies that 
\begin{align*}
        \int_{\underline{\type}^{(j)}}^{\bar{\type}^{(j)}} \optimalutil(\type) \dd \dist(\type) 
        < \int_{\underline{\type}^{(j)}}^{\bar{\type}^{(j)}} \efficientalloc(\type) \dd \dist(\type)
\end{align*}
for any interval $j$ with types above $\type^{(1)}$. 
In this case, $\type^{(1)}=\type^{(2)}$ and the no-effort region does not exist.

\item If $\efficientalloc'(\type^{(1)}) < \ability$, then 
$\efficientalloc(\type) < \optimalutil(\type)$ for any type $\type$ sufficiently close to $\type^{(1)}$. 
Therefore, for $j$ such that $\underline{\type}^{(j)} = \type^{(1)}$, interval $j$ must be in the no-effort region. 
Let $\type^{(2)} = \bar{\type}^{(j)}$. 
Note that for the integration constraint to be satisfied in interval $j$, 
we must have $\efficientalloc(\type^{(2)}) \geq \optimalutil(\type^{(2)})$
and $\efficientalloc'(\type^{(2)}) \geq \ability$.
Therefore, for any type $\type > \type^{(2)}$, we have $\efficientalloc(\type) > \optimalutil(\type)$; hence any interval above type $\type^{(2)}$ is in the efficient region.\qedhere
\end{itemize}
\end{proof}

\begin{proof}[Proof of \cref{thm:scarce_resource}]
By \cref{lem:large_convex}, for sufficiently large $n$, the efficient allocation rule is convex. 
Therefore, the interim allocation rule of the optimal non-coordination mechanism takes the form described in \cref{prop:convex optimal}.

Let $\optimalallocn{n}(\type)$ and the $\efficientallocn{n}(\type)$ 
be the optimal interim allocation rule in a non-coordination mechanism and efficient allocation rule respectively with $n<\infty$ agents.
For any finite $n$, we have that 
\begin{align*}
\frac{1}{n}\geq\int_{\type^{(1)}_n}^{\bar\type}\efficientallocn{n}(\type)\dd\dist(\type) 
&\geq \int_{\type^{(1)}_n}^{\bar\type} \rbr{\ability \cdot(\type-\type^{(1)}_n)+\efficientallocn{n}(\type^{(1)}_n)}\dd\dist(\type). 
\end{align*}
The first inequality holds because the ex-ante probability that a given agent gets the item is at most $\frac{1}{n}$, 
and the second inequality holds because the efficient allocation majorizes the interim allocation, since the latter is again at least the interim utility. 
Since $\efficientallocn{n}(\type^{(1)}_n)$ is non-negative, 
we have that 
\begin{align*}
\int_{\type^{(1)}_n}^{\bar\type} (\type-\type^{(1)}_n)\dd\dist(\type)
\leq \frac{1}{n\ability}
\end{align*}
for any $n$. 
Note that $\frac{1}{n\ability}$ converges to $0$ as $n$ approaches infinity. 
In order for the inequality to hold, 
$\type^{(1)}_n$ must also converge to $\bar{\type}$ as $n$ approaches infinity.
\end{proof}


\section{Proof of \cref{thm:scarce_nonconvergence}}
\begin{proof}[Proof of \cref{thm:scarce_nonconvergence}]
First we present \cref{lmm:efficientutil}, whose proof is given later in this section.
\Cref{lmm:efficientutil} says that given the efficient allocation rule, 
the sum of the expected utilities of the agents is small compared to the best scenario, i.e., the scenario in which the highest type gets the item without exerting effort, which is 1. 

\begin{lemma}\label{lmm:efficientutil}
For any $\epsilon > 0$, there exists $N_0\geq 1$ such that for any $n\geq N_0$, we have $n\cdot \expect[\type\sim \dist]{\efficientutiln{n}(\type)} \leq 1-\frac{1}{e}+\epsilon.$
\end{lemma}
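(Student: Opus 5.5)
The plan is to bound $\efficientutiln{n}$ directly through the representation \eqref{eq:utils} from the proof of \cref{cor:payoff equiv}. With $k=1$ we have $\efficientallocn{n}=\dist^{n-1}$, and the principal-relevant implementation sets $\underline{u}=\efficientallocn{n}(\underline\type)=0$. Then for every $\type$ and every $\type'\le\type$,
\begin{align*}
\efficientutiln{n}(\type)\le \min\lbr{\efficientallocn{n}(\type),\ \efficientallocn{n}(\type')+\ability(\type-\type')}.
\end{align*}
The idea is that $\efficientallocn{n}$ rises from nearly $0$ to $1$ on a window of types of width $O(1/n)$ near $\bar\type$. The slope bound $\util'\le\ability$ then forbids the utility from following this rise, so agents in the window obtain almost no utility.

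Concretely, fix a large constant $M$ and define $\type'_n$ by $1-\dist(\type'_n)=M/n$. For $n$ large this is well defined.

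\emph{Upper bound on the utility.} On $[\type'_n,\bar\type]$ use the second term of the minimum with $\type'=\type'_n$:
\begin{align*}
\efficientutiln{n}(\type)\le (1-M/n)^{n-1}+\ability(\bar\type-\type'_n).
\end{align*}
By \Cref{asp:continuity}, $\density\ge\underline\beta_1$, so $\bar\type-\type'_n\le M/(n\underline\beta_1)$. On $[\underline\type,\type'_n)$ use the first term, $\efficientutiln{n}\le\dist^{n-1}$.

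\emph{Integrating.} Passing to the quantile $q=\dist(\type)$, these two bounds give
\begin{align*}
n\,\expect[\type\sim\dist]{\efficientutiln{n}(\type)}
\le n\int_0^{1-M/n}q^{n-1}\dd q+M\rbr{(1-M/n)^{n-1}+\frac{\ability M}{n\underline\beta_1}}
\le (1-M/n)^{n}+M(1-M/n)^{n-1}+\frac{\ability M^2}{n\underline\beta_1}.
\end{align*}
As $n\to\infty$ the right-hand side tends to $(1+M)e^{-M}$.

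\emph{Choosing the constants.} Pick $M$ with $(1+M)e^{-M}<1-\frac1e$. Then choose $N_0$ so that for $n\ge N_0$ the finite-$n$ expression is within $\epsilon$ of its limit. This yields the claimed bound, in fact with room to spare.

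The main obstacle is not the analysis but justifying the starting inequality for the utility actually used in $V_\alpha(\efficientallocn{n})$. I would invoke \cref{cor:payoff equiv} for this: the optimal implementation has $\util(\underline\type)=\alloc(\underline\type)$, and it equals the infimal-convolution formula \eqref{eq:utils}. From that formula, the inequality against the particular point $\type'=\type'_n$ is immediate. The remaining care points are routine: $\type'_n$ must lie inside the support, which needs $n>M$; and the limit $(1-M/n)^{n-1}\to e^{-M}$ must be handled uniformly in the chosen constants.
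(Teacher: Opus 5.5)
Your proof is correct, and it takes a genuinely different and sharper route than the paper. The paper first uses convexity of $\efficientallocn{n}$ (\cref{lem:large_convex}) to locate the tangency type $\type^\dagger_n$ with $\efficientallocn{n}'(\type^\dagger_n)=\ability$. From this it deduces $\efficientutiln{n}(\bar\type)=\dist^{n-1}(\type^\dagger_n)+\ability(\bar\type-\type^\dagger_n)\to 0$. It then \emph{lower-bounds effort} rather than upper-bounding utility. Types in the top $1/n$ quantile receive allocation at least $(1-1/n)^{n-1}\ge 1/e$ but have utility at most $\efficientutiln{n}(\bar\type)$, so total expected effort is at least $1/e-\epsilon$. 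Since the item is always allocated, total utility is at most $1-1/e+\epsilon$.

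You instead anchor the Lipschitz bound at quantile $1-M/n$ and bound utility directly. This needs neither convexity, nor $\type^\dagger_n$, nor the identity that total allocation equals $1$. Because $M$ is arbitrary, your bound $(1+M)e^{-M}$ actually shows $n\cdot\expect[\type\sim\dist]{\efficientutiln{n}(\type)}\to 0$, i.e.\ full rent dissipation. That is strictly stronger than the stated $1-1/e$ bound and would improve the ratio in \cref{thm:scarce_nonconvergence}. The $1/e$ in the paper is an artifact of counting only the effort of the top-$1/n$ window and ignoring the effort of types between $\type^\dagger_n$ and that window.

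Two small remarks. First, the ``main obstacle'' you flag is not really one. The bounds $\util\le\alloc$ and $\util(\type)\le\util(\type')+\ability(\type-\type')$ hold for \emph{every} implementable utility by \cref{lem:monotone alloc-util implementation}; its necessity proof derives the Lipschitz inequality directly. So you need not appeal to \eqref{eq:utils} or to which implementation is used. Second, the density lower bound $\underline\beta_1$ is dispensable. With $M$ fixed, the window term $M\ability(\bar\type-\type'_n)$ vanishes as soon as $\type'_n\to\bar\type$, which follows from $\dist$ being continuous and strictly increasing on the support.
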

Intuitively, this means that competition is high among agents with sufficiently high types. 
Thus agents with high types need to exert high effort to ensure a large allocation, leading to a utility loss relative to the first-best utility. 
By applying \cref{lmm:efficientutil},
we obtain an upper bound on the performance of the WTA contest. 
That is, for any $\epsilon>0$, there exists $N_0$ such that for any $n\geq N_0$, we have 
\begin{align*}
n\cdot V_{\alpha}(\efficientallocn{n}) 
&= n\alpha \cdot\expect[\type\sim \dist]{\type\cdot\efficientallocn{n}(\type)}
+n(1-\alpha)\cdot\expect[\type\sim \dist]{\efficientutiln{n}(\type)}\\
&\leq \alpha \cdot \bar{\type} + (1-\alpha)\cdot\rbr{1-\frac1e+\epsilon}.
\end{align*}
The inequality holds by \cref{lmm:efficientutil} and 
the fact that the upper bound on the type of the agent winning the item is $\bar{\type}$.

Next we provide a lower bound on the performance of the optimal contest. 
In particular, for any $n$ large enough, consider a feasible allocation 
\begin{align*}
    \alloc_n(\type)=\begin{cases}
    \begin{aligned}
        &\efficientallocn{n}(\type) &\text{ if } \type\leq \hat{\type}_{n}, \\
        & \ability\cdot(\type-\hat{\type}_n)+\efficientallocn{n}(\hat{\type}_n) &\text{ if } \type> \hat{\type}_{n}, \\
    \end{aligned}
    \end{cases}
\end{align*}
such that $\expect[\type\sim\dist]{\alloc_n(\type)}=\expect[\type\sim\dist]{\efficientallocn{n}(\type)}=\frac1n$.
Let $\util_n(\type)=\alloc_n(\type)$. 
Notice that $(\alloc_n,\util_n)$ satisfies the \eqref{eq:IC} constraints. 
Moreover, $\alloc_n(\type)$ induces no effort and 
hence $\expect[\type\sim \dist]{\util_n(\type)}=\frac1n$.
In the following lemma (proved at the end of this section), we show that the matching efficiency of the given allocation rule converges to the optimal welfare when the number of agents is sufficiently large. 
\begin{lemma}\label{lem:optutil}
For any $\epsilon > 0$, 
there exists $N_1$ such that for any $n\geq N_1$, 
$n\cdot\expect[\type\sim \dist]{\type\cdot\alloc_n(\type)}\geq \bar{\type}-\epsilon$.
\end{lemma}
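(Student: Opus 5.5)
The plan is to lower-bound the matching efficiency of $\alloc_n$ by the mass it places on types near $\bar{\type}$. I will show that this mass is almost all of the total ex-ante allocation $\expect[\type\sim\dist]{\alloc_n(\type)}=\frac1n$. Fix $\epsilon>0$ and take a threshold $t=\bar{\type}-\epsilon/2$. Since types are nonnegative,
\begin{align*}
n\cdot\expect[\type\sim \dist]{\type\cdot\alloc_n(\type)} \;\geq\; n\int_t^{\bar{\type}}\type\,\alloc_n(\type)\dd\dist(\type) \;\geq\; t\cdot\Big(1- n\int_{\underline{\type}}^{t}\alloc_n(\type)\dd\dist(\type)\Big),
\end{align*}
where the second inequality uses $n\int_{\underline{\type}}^{\bar{\type}}\alloc_n\dd\dist=1$. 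It therefore suffices to show that $n\int_{\underline{\type}}^{t}\alloc_n\dd\dist\to 0$. Then the right-hand side is eventually at least $t(1-\epsilon/(2\bar{\type}))\geq \bar{\type}-\epsilon$; if $\bar{\type}=0$ the claim is trivial.

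The first step is to show that the kink point satisfies $\hat{\type}_n\to\bar{\type}$. On $[\hat{\type}_n,\bar{\type}]$ we have $\alloc_n(\type)=\ability(\type-\hat{\type}_n)+\efficientallocn{n}(\hat{\type}_n)\geq \ability(\type-\hat{\type}_n)$. Using the density lower bound $\density\geq\underline{\beta}_1$ from \Cref{asp:continuity},
\begin{align*}
\frac1n \;\geq\; \int_{\hat{\type}_n}^{\bar{\type}}\ability(\type-\hat{\type}_n)\,\underline{\beta}_1\dd\type \;=\; \frac{\ability\underline{\beta}_1}{2}(\bar{\type}-\hat{\type}_n)^2 .
\end{align*}
This gives $\hat{\type}_n\geq \bar{\type}-\sqrt{2/(n\ability\underline{\beta}_1)}$. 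The argument is the same as the one used for $\type^{(1)}_n$ in the proof of \cref{thm:scarce_resource}. Hence there is $N'$ such that $\hat{\type}_n\geq t$ for all $n\geq N'$.

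The second step handles the low types. For $n\geq N'$, $\alloc_n$ coincides with $\efficientallocn{n}=\dist^{n-1}$ on $[\underline{\type},t]$ (recall $k=1$). Therefore
\begin{align*}
n\int_{\underline{\type}}^{t}\alloc_n\dd\dist = n\int_{\underline{\type}}^{t}\dist^{n-1}\dd\dist=\dist(t)^n .
\end{align*}
Since the density is strictly positive on $\Theta$, $\dist(t)<1$, so $\dist(t)^n\to0$. Choosing $N_1\geq N'$ with $\dist(t)^{N_1}\leq \epsilon/(2\bar{\type})$ completes the argument.

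The only delicate point is the well-definedness of $\hat{\type}_n$. A $\hat{\type}_n$ with $\expect{\alloc_n}=\frac1n$ must exist, and I would obtain it by the intermediate value theorem in $\hat{\type}_n$. At $\hat{\type}_n=\bar{\type}$ the mass equals $\frac1n$. For the relevant smaller values the linear segment lies below $\dist^{n-1}$ near $\bar{\type}$ once $n$ is large, by convexity from \cref{lem:large_convex} and the fact that the slope of $\dist^{n-1}$ at $\bar{\type}$ is at least $(n-1)\underline{\beta}_1>\ability$. As $\hat{\type}_n$ decreases the linear segment eventually overtakes $\dist^{n-1}$, and the mass varies continuously. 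I expect this existence step to be the main (minor) obstacle; the paper takes it as given in defining $\alloc_n$, so the estimates above are the substance of the proof.
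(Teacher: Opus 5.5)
Your proof is correct. It shares the first step with the paper: you show $\hat{\theta}_n\to\bar{\theta}$ using the same bound used for $\theta^{(1)}_n$ in the proof of \cref{thm:scarce_resource}, and you also extract the explicit rate $\bar{\theta}-\hat{\theta}_n\le\sqrt{2/(n\eta\underline{\beta}_1)}$. The second step is organized differently. The paper benchmarks against the WTA rule. Since $Q_n$ differs from $\efficientallocn{n}$ only on $[\hat{\theta}_n,\bar{\theta}]$ and has the same total mass $\frac1n$, it asserts $n\,\mathbf{E}[\theta Q_n(\theta)]\ge n\,\mathbf{E}[\theta\,\efficientallocn{n}(\theta)]-(\bar{\theta}-\hat{\theta}_n)$. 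It then uses, without proof, that the expected highest type $n\,\mathbf{E}[\theta F^{n-1}(\theta)]$ converges to $\bar{\theta}$. You never compare with WTA efficiency. Instead you bound the objective below by $t$ times the share of allocation mass above a threshold $t$, and you compute the share below $t$ exactly as $F(t)^n$, since $Q_n=F^{n-1}$ there. Your route is self-contained; in effect it proves the expected-maximum limit the paper takes for granted. It also produces an explicit $N_1$, where the paper passes informally to the limit. The paper's route, in exchange, makes explicit the quantity that drives \cref{thm:scarce_nonconvergence}: the efficiency cost of randomizing among the top types is at most the width $\bar{\theta}-\hat{\theta}_n$ of the randomization window, measured directly against WTA. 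Your intermediate-value sketch for a nontrivial $\hat{\theta}_n$ is also fine; the paper simply takes $Q_n$ as given.
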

Therefore, there exists $N_1$ such that for any $n\geq N_1$, 
we have
\begin{align*}
n\cdot V_{\alpha}(\optimalallocn{n}) 
& \geq n\cdot\alpha \expect[\type\sim \dist]{\type\cdot\alloc_n(\type)}+n\cdot(1-\alpha)\expect[\type\sim \dist]{\util_n(\type)}\\
& \geq \alpha (\bar{\type}-\epsilon)+1-\alpha.
\end{align*}

Finally, for any $\epsilon > 0$, letting $N=\max\lbr{N_0,N_1}$, 
we can combine the inequalities above to obtain 
\begin{equation*}
\frac{V_{\alpha}(\optimalallocn{n})}{V_{\alpha}(\efficientallocn{n})}
\geq \frac{(\bar{\type}-\epsilon)\cdot\alpha + 1 - \alpha}{\bar{\type}\cdot\alpha + (1-\alpha)(1-\frac{1}{e}+\epsilon)}
\end{equation*}
for any $n\geq N$.  \qedhere
\end{proof}

\begin{proof}[Proof of \cref{lmm:efficientutil}]
Let $n$ be a sufficiently large number so that $\efficientallocn{n}$ is convex,
and let $\type^{\dagger}_n$ be the cutoff type such that in the incentive-compatible implementation of efficient allocation, agents with any type $\type>\type^\dag_n$ exert costly effort, i.e., $\efficientallocn{n}'(\type^\dagger_n) = \ability$.
In other words, $(n-1) \cdot \dist^{n-2}(\type^\dagger_n)\cdot \density(\type^\dagger_n) = \ability$.
Rearranging the terms, we have 
\begin{align*}
\dist^{n-2}(\type^\dagger_n) = \frac{\ability}{(n-1) \cdot \density(\type^\dagger_n)}.
\end{align*}
Note that by \cref{asp:continuity}, the right-hand side is bounded below by $\frac{\ability}{(n-1) \cdot \bar{\beta}_1}$. 
Therefore, for any $\epsilon_0>0$, there exists $N_0$ such that for any $n\geq N_0$, we have 
\begin{align*}
\dist(\type^\dagger_n) \geq \rbr{\frac{\ability}{(n-1) \cdot \bar{\beta}_1}}^{\frac{1}{n-2}}
\geq 1-\epsilon_0.
\end{align*}
Since the density is bounded below by $\underline{\beta}_1$, 
we have that $\type^\dagger_n \geq \bar{\type} - \frac{\epsilon_0}{\underline{\beta}_1}$. 
For any $\epsilon_1 > 0$, let $N_1$ be an integer such that 
$\frac{\ability}{(n-1) \cdot \density(\type^\dagger_n)} \leq \epsilon_1$ for any $n\geq N_1$.
The expected utility of an agent with type $\bar{\type}$
is 
\begin{align*}
\efficientutiln{n}(\bar{\type})
= \dist^{n-1}(\type^\dagger_n) + \ability(\bar{\type} - \type^\dagger_n) \leq \dist^{n-2}(\type^\dagger_n)
+ \frac{\ability\cdot \epsilon_0}{\underline{\beta}_1}
\leq \epsilon_1 + \frac{\ability\cdot \epsilon_0}{\underline{\beta}_1}.
\end{align*}

Let $\type^\ddag_n$ be the type such that 
$\dist(\type^\ddag_n) = 1-\frac{1}{n}$. 
There exists $N_2$ such that $\type^\ddag_n \geq \type^\dag_n$
for any $n\geq N_2$. 
For any $\epsilon > 0$, let $\epsilon_1 = \frac{\epsilon}{2}$, 
$\epsilon_0 = \frac{\epsilon\underline{\beta}_1}{2\ability}$,
and $N = \max\{N_0,N_1,N_2\}$. For any $n\geq N$,
the expected effort of any agent is at least 
his effort from types above $\type^\ddag_n$, 
which is bounded below by 
\begin{align*}
(1-\dist(\type^\ddag_n)) \cdot (\efficientallocn{n}(\type^\ddag_n) - \efficientutiln{n}(\bar{\type}))
\geq \frac{1}{n}\rbr{\frac{1}{e} - \epsilon_1 - \frac{\ability\cdot \epsilon_0}{\underline{\beta}_1}}
= \frac{1}{n}\rbr{\frac{1}{e} - \epsilon}.
\end{align*}
Since the item is always allocated in equilibrium, the total utility is 
\begin{equation*}
n\cdot \expect[\type\sim \dist]{\efficientutiln{n}(\type)} 
\leq 1-\frac{1}{e}+\epsilon.\qedhere
\end{equation*}
\end{proof}

\begin{proof}[Proof of \cref{lem:optutil}]
Note that compared to the efficient allocation $\efficientallocn{n}$, 
the chosen allocation rule $\alloc_n$ only randomizes the allocation for types between $\hat{\type}_n$ and $\bar{\type}$. 
Therefore, we have 
\begin{align*}
n\cdot\expect[\type\sim \dist]{\type\cdot\alloc_n(\type)}
\geq n\cdot\expect[\type\sim \dist]{\type\cdot\efficientallocn{n}(\type)}
- (\bar{\type} - \hat{\type}_n).
\end{align*}
As in the proof of \cref{thm:scarce_resource}, we can show that $\lim_{n\rightarrow \infty} \hat{\type}_n=\bar{\type}$. 
By taking the limit of the above inequality, we have that
\begin{align*}
\lim_{n\rightarrow \infty} n\cdot\expect[\type\sim \dist]{\type\cdot\alloc_n(\type)}
\geq \lim_{n\rightarrow \infty} n\cdot\expect[\type\sim \dist]{\type\cdot\efficientallocn{n}(\type)}
= \bar{\type}.
\end{align*}
Thus, for any $\epsilon > 0$, 
there exists $N_1$ such that for any $n\geq N_1$, 
$n\cdot\expect[\type\sim \dist]{\type\cdot\alloc_n(\type)}\geq \bar{\type}-\epsilon$. 
\end{proof}

\section{Proof of \cref{thm:large_scale}}
\label{apx:A6_large_scale}
It is tempting to conjecture that when $z$ is large enough, $\efficientallocn{z}(\type)$ has an S shape (i.e., it is convex for small $\type$ and concave for large $\type$), which would naturally imply the order of the intervals as stated in our result. 
However, this is not true in general.\footnote{The second-order derivative of the allocation is 
\begin{align*}
\efficientallocn{z}''(\type) &= (zn-1)\cdot \binom{zn-2}{zk-1} (1-\dist(\type))^{zk-2}
\cdot (\dist(\type))^{z(n-k)-2}\cdot \\
&\qquad\rbr{\density^2(\type)(z(n-k)-1-(zn-2)\dist(\type))+\density'(\type)(1-\dist(\type))\dist(\type)}.
\end{align*}
No matter how large the parameter $z$ is, for types within $(\type_c-\epsilon_0,\type_c+\epsilon_0)$, the sign of the second-order derivative may change multiple times.}
To circumvent this inconvenience, note that for any small constant $\epsilon_0$,
when~$z$ is large enough, 
the interim efficient allocation has small slope (smaller than the marginal cost of effort $\ability$) outside the small interval $(\type_c-\epsilon_0,\type_c+\epsilon_0)$ centered at $\type_c$. 
Moreover, since the value of the efficient allocation changes a lot in this small interval, 
the agents will exert high effort in equilibrium if the items are allocated efficiently, 
leading to low expected utility for types around $\type_c$. 
We show that in the optimal contest, 
the principal randomizes the allocation around $\type_c$. 
In particular, the no-effort region, where the allocation is randomized, 
will cover the whole interval $(\type_c-\epsilon_0,\type_c+\epsilon_0)$. 
Since the derivative of the efficient allocation outside this region is at most $\ability$, 
the principal's objective value is maximized by the efficient allocation. 
We provide the formal proof below.

\begin{proof}[Proof of \cref{thm:large_scale}]
Since the distribution is continuous, the probability there is a tie for any two distinct types is $0$. 
Therefore, given the scale parameter $z$, the interim efficient allocation is 
\begin{align*}
\efficientallocn{z}(\type) = \prob[]{\type_{(nz-kz:nz-1)}\leq \type}
= \sum_{j=0}^{zk-1}\binom{zn-1}{j} 
\cdot (1-\dist(\type))^j\cdot (\dist(\type))^{zn-1-j},
\end{align*}
where $\type_{(nz-kz:nz-1)}$ is the $(nz-kz)$th order statistic, i.e., the $(nz-kz)$th smallest value in a sample of $nz-1$ observations, 
and the binomial coefficient $\binom{n}{k}$ is defined by 
$\binom{n}{k} = \frac{n!}{k!(n-k)!}$.

Recall that $\type_c$ is the cutoff type such that $1-\dist(\type_c) = \frac{k}{n}$. 
The derivative of the allocation is 
\begin{align*}
\efficientallocn{z}'(\type) = \density(\type)\cdot(zn-1)\cdot \binom{zn-2}{zk-1} (1-\dist(\type))^{zk-1}
\cdot (\dist(\type))^{z(n-k)-1}.
\end{align*}
Note that $\binom{zn-2}{zk-1} (1-\dist(\type))^{zk-1}
\cdot (\dist(\type))^{z(n-k)-1}$ is the probability that 
the binomial random variable $B(zn-2, 1-\dist(\type))$ equals $zk-1$.
When $1-\dist(\type) < \frac{k}{n}$, 
this probability becomes exponentially small as $zn$ increases, 
which implies that $\lim_{z\to\infty} \efficientallocn{z}'(\type) = 0$.
Therefore, for any $\epsilon_0 > 0$, 
there exists $Z_0$ such that for any $z\geq Z_0$, 
for any type $\type\not\in [\type_c-\epsilon_0, \type_c+\epsilon_0]$,
\begin{align*}
\efficientallocn{z}'(\type) \leq \ability.
\end{align*}
Again by Hoeffding's inequality, for any $\epsilon_1 > 0$, 
there exists $Z_1$ such that for any $z\geq Z_1$, 
\begin{align*}
\efficientallocn{z}(\type) \leq \epsilon_1
\end{align*}
for any type $\type\leq\type_c-\epsilon_0$
and 
\begin{align*}
\efficientallocn{z}(\type) \geq 1-\epsilon_1
\end{align*}
for any type $\type\geq\type_c+\epsilon_0$.
Intuitively, this is because $\lim_{z\rightarrow \infty}\efficientallocn{z}(\type)$ is a step function, i.e.,
\begin{equation*}
\lim_{z\rightarrow \infty}\efficientallocn{z}(\type)=\begin{cases}
    \begin{aligned}
       & 0 &\text{ if } \type<\type_c, \\
       & 1 &\text{ if } \type\geq\type_c.
    \end{aligned}
\end{cases}
\end{equation*}
Let $\tilde\type^{(1)} \triangleq \type_c - \epsilon_0 - \sqrt{\frac{8\epsilon_0\bar{\beta}_1}{\ability\underline{\beta}_1}}$. 
\begin{lemma}\label{lem:high_util}
For sufficiently large $z$, 
in the optimal contest $(\alloc_{\alpha,z},\util_{\alpha,z})$, we have 
$\util_{\alpha,z}(\tilde\type^{(1)}) > \efficientallocn{z}(\tilde\type^{(1)})$.
\end{lemma}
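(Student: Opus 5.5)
The plan is to argue by contradiction with a local improvement. Suppose that for arbitrarily large $z$ the optimal contest satisfies $\util_{\alpha,z}(\tilde\type^{(1)})\le\efficientallocn{z}(\tilde\type^{(1)})\le\epsilon_1$. The last inequality holds by the Hoeffding bound above, since $\tilde\type^{(1)}<\type_c-\epsilon_0$. Write $d:=\type_c+\epsilon_0-\tilde\type^{(1)}=2\epsilon_0+\sqrt{8\epsilon_0\bar\beta_1/(\ability\underline\beta_1)}$.

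\textbf{Step 1 (upper envelope on utilities).} By condition (1) of \cref{lem:monotone alloc-util implementation}, $\util_{\alpha,z}'\le\ability$. Hence
\[
\util_{\alpha,z}(\type)\le\epsilon_1+\ability(\type-\tilde\type^{(1)})\quad\text{for all } \type\ge\tilde\type^{(1)} .
\]
In particular, every type in $[\tilde\type^{(1)},\type_c+\epsilon_0]$ has utility at most $\epsilon_1+\ability d$.

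\textbf{Step 2 (locate the allocation mass).} Use \eqref{eq:feasibility symmetric} together with the step-function approximation of $\efficientallocn{z}$. The allocation mass that $\alloc_{\alpha,z}$ places on $[\tilde\type^{(1)},\bar\type]$ is at most that of $\efficientallocn{z}$. By the Hoeffding bounds, this equals $\tfrac{k}{n}$ up to errors of order $\epsilon_1$ plus $\bar\beta_1\cdot O(\epsilon_0)$ coming from the window $(\type_c-\epsilon_0,\type_c+\epsilon_0)$.

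\textbf{Step 3 (competing mechanism).} Define $(\alloc\primed,\util\primed)$ as follows. Set it equal to $(\alloc_{\alpha,z},\util_{\alpha,z})$ below $\tilde\type^{(1)}$ and above some $\hat\type$. On $[\tilde\type^{(1)},\hat\type]$, take a no-effort segment with $\alloc\primed=\util\primed$, slope $\ability$, starting from $\util_{\alpha,z}(\tilde\type^{(1)})$ and capped by the original allocation. Choose $\hat\type$ so that the mass $\int\alloc\primed\dd\dist$ over the modified interval equals the original mass. This keeps \eqref{eq:feasibility symmetric} intact: the cumulative constraint is unchanged above $\hat\type$ and at $\tilde\type^{(1)}$, and in between the new allocation front-loads mass onto lower types only where the original constraint was slack. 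I would check this slack claim using $\efficientallocn{z}\le\epsilon_1$ below $\type_c-\epsilon_0$. The pair also satisfies \eqref{eq:IC} by construction.

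\textbf{Step 4 (compare objectives).} On the utility side, the new mechanism gives every type in the modified interval utility equal to its allocation, with no effort. Compared with Step 1, the original mechanism must have induced effort of order $\ability(\type-\tilde\type^{(1)})$ on the types around $\type_c$ that carry allocation close to $1$. With density at least $\underline\beta_1$, this yields a utility gain of order $\underline\beta_1\ability d^2/2$. On the matching-efficiency side, the loss is at most (mass moved)$\times$(distance moved). I would bound this by roughly $\bar\beta_1\cdot 2\epsilon_0\cdot d$ plus $O(\epsilon_1)$.

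\textbf{Step 5 (close the contradiction).} The constant $\sqrt{8\epsilon_0\bar\beta_1/(\ability\underline\beta_1)}$ in $\tilde\type^{(1)}$ is exactly what makes the Step 4 gain strictly exceed the efficiency loss once $\epsilon_1$ (that is, $z$) is small relative to $\epsilon_0$. Since $\alpha<1$, the objective strictly improves, contradicting optimality.

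The main obstacle is Step 4: making the effort lower bound rigorous. The optimal mechanism need not be efficient near $\type_c$, so I cannot simply quote the efficient effort. Instead I would use Step 2 to show that a mass of order $k/n$ of allocation must sit at types within distance $d$ of $\tilde\type^{(1)}$ or beyond. Types holding near-full allocation but utility at most $\epsilon_1+\ability(\type-\tilde\type^{(1)})$ therefore pay effort close to $1-\epsilon_1-\ability(\type-\tilde\type^{(1)})$. Integrating this over the window gives the needed quadratic-in-$d$ bound.
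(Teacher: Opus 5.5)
\textbf{There is a genuine gap, and it is in the choice of competitor in Step 3.} Your modified contest agrees with the original below $\tilde\type^{(1)}$ and starts its ramp from $\util_{\alpha,z}(\tilde\type^{(1)})$. By condition (1) of \cref{lem:monotone alloc-util implementation}, its utility above $\tilde\type^{(1)}$ is therefore at most your own Step 1 envelope $\util_{\alpha,z}(\tilde\type^{(1)})+\ability(\type-\tilde\type^{(1)})$; on the modified interval it equals it. So the utility gain is bounded by (envelope minus $\util_{\alpha,z}$). That bound is zero whenever the original utility already rises at slope $\ability$ from $\tilde\type^{(1)}$.

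This is exactly a case the lemma must exclude. Consider the three-cutoff contest that has:
\begin{enumerate}[(i)]
\item no-tension below $\tilde\type^{(1)}$;
\item a no-effort ramp from $\efficientallocn{z}(\tilde\type^{(1)})$ on $(\tilde\type^{(1)},\type^{(2)})$, with the integral constraint binding;
\item an efficient region, and then no-tension.
\end{enumerate}
It satisfies the contradiction hypothesis with equality, and your Step 3 reproduces it exactly, so you obtain no contradiction.

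Step 4 conflates the original mechanism's effort with a utility gain. Eliminating effort by cutting the allocation down to the utility level leaves utility unchanged. The quantity $\underline{\beta}_1\ability d^2/2$ you call a utility gain is really the allocation mass being shifted. There are also two construction problems. First, ``capped by the original allocation'' is incompatible with mass preservation, since a pointwise smaller allocation cannot preserve mass without coinciding a.e. Second, keeping the original utility above $\hat\type$ creates a downward jump in $\util$ at $\hat\type$, which violates \eqref{eq:IC}.

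\textbf{The missing idea is to change the mechanism below $\tilde\type^{(1)}$, so that the utility level at $\tilde\type^{(1)}$ itself goes up.} The paper starts the no-effort region at $\hat\type^{(1)}=\type_c-\epsilon_2<\tilde\type^{(1)}$. Here $\epsilon_2$ is a new scale, with $\epsilon_0,\epsilon_1=o(\epsilon_2^4)$, so it is much larger than $\sqrt{\epsilon_0}$.

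\emph{Gain.} The new utility at $\tilde\type^{(1)}$ is at least $\ability(\tilde\type^{(1)}-\hat\type^{(1)})$, while the old one is at most $\epsilon_1$. The new utility has slope exactly $\ability$ on $(\tilde\type^{(1)},\hat\type^{(3)})$ and the old has slope at most $\ability$, so this gap persists on a set of non-vanishing mass. This gives a gain of at least $(1-\alpha)(\dist(\hat\type^{(3)})-\dist(\tilde\type^{(1)}))(\ability(\tilde\type^{(1)}-\hat\type^{(1)})-\epsilon_1)$, which is linear in $\epsilon_2$.

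\emph{Loss.} The efficiency loss is at most $\alpha\bar{\beta}_1(\hat\type^{(2)}-\hat\type^{(1)})^2$, which is quadratic in $\epsilon_2$. This uses the length bound $\hat\type^{(2)}\le \type_c+\epsilon_0+\frac{2\ability\bar{\beta}_1}{\underline{\beta}_1}(\type_c+\epsilon_0-\hat\type^{(1)})^2$, derived from the binding integral constraint.

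Linear beats quadratic for small $\epsilon_2$. Relatedly, the constant $\sqrt{8\epsilon_0\bar{\beta}_1/(\ability\underline{\beta}_1)}$ in $\tilde\type^{(1)}$ is not what balances gain against loss, contrary to your Step 5. It is chosen for the later step in the proof of \cref{thm:large_scale}, which forces the no-effort interval past $\type_c+\epsilon_0$.
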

We defer the proof of the lemma to the end of this section. 
Note that in the optimal contest $(\alloc_{\alpha,z},\util_{\alpha,z})$, 
$\util_{\alpha,z}(\tilde\type^{(1)}) > \efficientallocn{z}(\tilde\type^{(1)})$
implies that type $\tilde\type^{(1)}$ must belong to a no-effort interval. 
Let $\type^{(1)}< \tilde\type^{(1)}< \type^{(2)}$ be the endpoints of this no-effort interval. 
Let $\Theta_+$ be the set of types in $(\type^{(1)},\type^{(2)})$
such that $\efficientallocn{z}(\type) > \hat\alloc_{\alpha,z}(\type)$,
and let $\Theta_-$ be the set of types in $(\type^{(1)},\type^{(2)})$
such that $\efficientallocn{z}(\type) < \alloc_{\alpha,z}(\type)$.
Since the integration constraint binds within $(\type^{(1)},\type^{(2)})$, 
we have that 
\begin{align*}
\int_{\Theta_+}(\efficientallocn{z}(\type)-\alloc_{\alpha,z}(\type))\dd\dist(\type)
+ \int_{\Theta_-}(\efficientallocn{z}(\type)-\alloc_{\alpha,z}(\type))\dd\dist(\type) = 0.
\end{align*}
Note that 
\begin{align*}
&\int_{\Theta_-}(\efficientallocn{z}(\type)-\alloc_{\alpha,z}(\type))\dd\dist(\type)
\leq -\int_{\type^{(1)}}^{\type_c-\epsilon_0}
(\ability(\type - \type^{(1)})-\epsilon_1) \dd\dist(\type)\\
&\leq -\underline{\beta}_1 \cdot
\rbr{\frac{\ability}{2}\cdot(\type_c-\epsilon_0-\type^{(1)})^2 - \epsilon_1\cdot(\type_c-\epsilon_0-\type^{(1)})}.
\end{align*}
Similarly, 
\begin{align*}
\int_{\Theta_+}(\efficientallocn{z}(\type)-\alloc_{\alpha,z}(\type))\dd\dist(\type)
\leq \int_{\type_c-\epsilon_0}^{\type^{(2)}}
1 \dd\dist(\type)
\leq \bar{\beta}_1 \cdot
(\type^{(2)} - \type_c+\epsilon_0).
\end{align*}
Combining the inequalities above, for sufficiently small $\epsilon_1\leq \frac{\ability}{4}(\type_c-\epsilon_0-\type^{(1)})$, 
we must have 
\begin{align*}
{\type}^{(2)}\geq \type_c-\epsilon_0 
+ \frac{\ability\cdot\underline{\beta}_1}{4\bar{\beta}_1} \cdot(\type_c-\epsilon_0-\type^{(1)})^2
\geq \type_c+\epsilon_0. 
\end{align*}
We obtain the last inequality simply by substituting the bound for $\type^{(1)}$.
This implies that in the optimal contest, the no-effort region 
$(\type^{(1)},\type^{(2)})$ covers the whole interval $(\type_c-\epsilon_0,\type_c+\epsilon_0)$.
Note that since the derivative of the efficient allocation outside the no-effort region $(\type^{(1)},\type^{(2)})$ is at most $\ability$, 
the principal's objective is maximized by the efficient allocation. 
In particular, let $\type^{(3)}\geq \type^{(2)}$ be the type such that the linear extension of the utility function within the no-tension region intersects the efficient allocation rule. 
Then the interval $(\type^{(2)},\type^{(3)})$ is the efficient region, 
and the union of $(\underline{\type},\type^{(1)})$ 
and $(\type^{(3)},\bar{\type})$ is the no-tension region.
\end{proof}

\begin{proof}[Proof of \cref{lem:high_util}]
It is sufficient to show that any contest $(\tilde{\alloc}_{\alpha,z},\tilde{\util}_{\alpha,z})$
such that $\tilde{\util}_{\alpha,z}(\tilde\type^{(1)}) \leq \efficientallocn{z}(\tilde\type^{(1)})$
cannot be an optimal contest. 
We prove this by contradiction: given such a contest, we construct a contest $\hat{\alloc}_{\alpha,z},\hat{\util}_{\alpha,z}$ that yields a higher objective value.

Let $\epsilon_0,\epsilon_1,\epsilon_2>0$ be any numbers such that the following hold:\footnote{Notice that these inequalities can hold at the same time: if one chooses $\epsilon_0$ and $\epsilon_1$ that are ``small'' compared to $\epsilon_2$, for example, $\epsilon_0=o(\epsilon_2^4)$ and $\epsilon_1=o(\epsilon_2^4)$, then the last inequality holds because the left-hand side is of higher order than the right-hand side.}
\begin{align*}
    & 0<\epsilon_0\leq \min\lbr{\frac{\underline{\beta}_1}{10\ability\cdot\bar{\beta}_1},\epsilon_2^4}, \qquad\epsilon_0+2\sqrt{\frac{8\epsilon_0\bar{\beta}_1}{\ability\underline{\beta}_1}}\leq \epsilon_2, \\
   & 0<\epsilon_1\leq \min\{0.01,\epsilon_2^4\}, \qquad 0<\epsilon_2<\frac{\underline{\beta}_1}{10\ability\cdot\bar{\beta}_1},\\
   &\alpha\bar{\beta}_1 \cdot\rbr{(\epsilon_2+\epsilon_0)^2\cdot\frac{\bar{\beta}_1}{\underline{\beta}_1}+\epsilon_0+\epsilon_2}^2
< \frac{1}{2\ability}(1-\alpha) \cdot \rbr{\ability\rbr{\epsilon_2 - \epsilon_0- \sqrt{\frac{8\epsilon_0\bar{\beta}_1}{\ability\underline{\beta}_1}}} - \epsilon_1}. 
\end{align*}
Let $\hat\type^{(1)}\triangleq \type_c - \epsilon_2$. 
By our choice of $\epsilon_0$,
we have $\hat\type^{(1)} < \tilde\type^{(1)}$. 

Consider a contest  $(\hat{\alloc}_{\alpha,z},\hat{\util}_{\alpha,z})$ characterized by three cutoffs $\hat{\type}^{(1)}<\hat{\type}^{(2)}\leq\hat{\type}^{(3)}$ 
such that 
the union of $(\underline{\type}, \hat{\type}^{(1)})$ and $(\hat{\type}^{(3)},\bar{\type})$ is the no-tension region, 
$(\hat{\type}^{(1)},\hat{\type}^{(2)})$ is the no-effort region, and 
$(\hat{\type}^{(2)},\hat{\type}^{(3)})$ is the efficient region.

\noindent\textbf{Step 1:} In this step, we will show that if $\hat\type^{(1)}$ is chosen so that $\type_c-  \frac{\underline{\beta}_1}{10\ability\cdot\bar{\beta}_1}\leq \hat\type^{(1)}$,\footnote{Such a choice is possible because by the choice of $\epsilon_0,\epsilon_1,\epsilon_2$, we have  $\type_c-  \frac{\underline{\beta}_1}{10\ability\cdot\bar{\beta}_1}\leq \tilde\type^{(1)}$.} then the integration constraint for the no-effort interval imposes an upper bound on the length of the no-effort interval, i.e., $\hat{\type}^{(2)}\leq \tilde{\type}$, where $\tilde{\type}\triangleq \type_c+\epsilon_0 + \frac{2\ability\cdot\bar{\beta}_1}{\underline{\beta}_1}\cdot(\type_c+\epsilon_0-\hat\type^{(1)})^2$.

Let $\hat{\Theta}_+$ be the set of types in $(\hat\type^{(1)},\hat\type^{(2)})$
such that $\efficientallocn{z}(\type) > \hat\alloc_{\alpha,z}(\type)$,
and let $\hat{\Theta}_-$ be the set of types in $(\hat\type^{(1)},\hat\type^{(2)})$
such that $\efficientallocn{z}(\type) < \hat\alloc_{\alpha,z}(\type)$.
Since the integration constraint binds within $(\hat\type^{(1)},\hat\type^{(2)})$, 
we have that 
\begin{align*}
0=\,&\int_{\hat{\Theta}_+}(\efficientallocn{z}(\type)-\hat\alloc_{\alpha,z}(\type))\dd\dist(\type)
+ \int_{\hat{\Theta}_-}(\efficientallocn{z}(\type)-\hat\alloc_{\alpha,z}(\type))\dd\dist(\type)\\
\geq\,&\int_{\type_c+\epsilon_0}^{\hat\type^{(2)}}
(1-2\epsilon_1 - \ability(\type - \hat\type^{(1)})) \dd\dist(\type)
-\int_{\hat\type^{(1)}}^{\type_c+\epsilon_0}
\ability(\type - \hat\type^{(1)}) \dd\dist(\type).
\end{align*}
 
By our choice of $\hat\type^{(1)}$ and $\epsilon_0, \epsilon_1$, we have 
 $1-2\epsilon_1 - \ability(\type - \hat\type^{(1)})\geq \frac{1}{2}$ for any type $\type\leq \tilde{\type}$. 
Therefore,
\begin{align*}
&\int_{\type_c+\epsilon_0}^{\tilde{\type}}
(1-2\epsilon_1 - \ability(\type - \hat\type^{(1)})) \dd\dist(\type)
-\int_{\hat\type^{(1)}}^{\type_c+\epsilon_0}
\ability(\type - \hat\type^{(1)}) \dd\dist(\type)\\
\geq\,& \frac{\underline{\beta}_1}{2}(\tilde{\type}-\type_c-\epsilon_0)
- \ability\cdot \bar{\beta}_1 (\type_c+\epsilon_0-\hat\type^{(1)})^2
\geq 0.
\end{align*}
Combining the above two inequalities, we get the desired bound on $\hat{\type}^{(2)}$.

\noindent\textbf{Step 2:} Next we utilize the upper bound to show that the objective value from the contest $\hat{\alloc}_{\alpha,z},\hat{\util}_{\alpha,z}$ is higher than that from the contest $\tilde{\alloc}_{\alpha,z},\tilde{\util}_{\alpha,z}$ with $\tilde{\util}_{\alpha,z}(\tilde\type^{(1)})\leq \tilde{\alloc}_{\alpha,z}(\tilde\type^{(1)})$.
Note that $\efficientallocn{z}$ and $\hat\alloc_{\alpha,z}(\type)$ coincide at any type $\type$ outside the no-effort region. 
Therefore, the loss in efficiency compared to the efficient allocation rule is 
\begin{align*}
& \alpha \cdot \int_{\hat\type^{(1)}}^{\hat\type^{(2)}}\type\cdot \efficientallocn{z}\dd\dist(\type) 
- \alpha \cdot \int_{\hat\type^{(1)}}^{\hat\type^{(2)}}\type\cdot\hat{\alloc}_{\alpha,z}(\type)\dd\dist(\type)\\
=\,& \alpha \cdot\int_{\hat{\Theta}_+}\type\cdot(\efficientallocn{z}(\type)-\hat\alloc_{\alpha,z}(\type))\dd\dist(\type)
-\alpha \cdot\int_{\hat{\Theta}_-}\type\cdot(\efficientallocn{z}(\type)-\hat\alloc_{\alpha,z}(\type))\dd\dist(\type)\\
\leq\,& \alpha \cdot(\hat\type^{(2)} - \hat\type^{(1)})\cdot
\int_{\hat{\Theta}_+}(\efficientallocn{z}(\type)-\hat\alloc_{\alpha,z}(\type))\dd\dist(\type)\\
\leq\,& \alpha \cdot(\hat\type^{(2)}-\hat\type^{(1)})\cdot
(\dist(\hat\type^{(2)}) - \dist(\hat\type^{(1)}))
\leq \alpha\bar{\beta}_1 \cdot(\hat\type^{(2)}-\hat\type^{(1)})^2,
\end{align*}
where the second inequality holds because the interim allocations are bounded within $[0,1]$,
and the last inequality holds by the continuity assumption (\cref{asp:continuity}).

Moreover, note that the utility $\tilde\util_{\alpha,z}$ increases at a rate of at most $\ability$ after type $\tilde\type^{(1)}$, 
while the utility $\hat\util_{\alpha,z}$ increases at a rate of $\ability$ within the interval $(\tilde\type^{(1)},\hat\type^{(3)})$. 
Therefore, the gain in utility is at least 
\begin{align*}
    & (1-\alpha) \cdot \int_{\tilde\type^{(1)}}^{\hat\type^{(3)}}\hat{\util}_{\alpha,z}(\type)\dd\dist(\type)
    - (1-\alpha) \cdot \int_{\tilde\type^{(1)}}^{\hat\type^{(3)}} \tilde\util_{\alpha,z}(\type)\dd\dist(\type)\\
    \geq\,& (1-\alpha) \cdot (\dist(\hat\type^{(3)}) - \dist(\tilde\type^{(1)})) 
    \cdot (\hat{\util}_{\alpha,z}(\tilde\type^{(1)}) - \tilde\util_{\alpha,z}(\tilde\type^{(1)})) \\
    \geq\,& (1-\alpha) \cdot (\dist(\hat\type^{(3)}) - \dist(\tilde\type^{(1)})) 
    \cdot (\ability\cdot (\tilde\type^{(1)} - \hat\type^{(1)})- \epsilon_1) \\
    \geq\,& \frac{1}{2\ability}(1-\alpha) \cdot \underline{\beta}_1 \cdot (\ability\cdot (\tilde\type^{(1)} - \hat\type^{(1)})- \epsilon_1).
\end{align*}
Since the matching efficiency in the contest $(\tilde{\alloc}_{\alpha,z},\tilde{\util}_{\alpha,z})$ is bounded above by the efficient allocation rule, 
combining the inequalities, we have that 
\begin{align*}
&\text{Obj}_{\alpha}(\tilde{\alloc}_{\alpha,z},\tilde{\util}_{\alpha,z}) - \text{Obj}_{\alpha}(\hat{\alloc}_{\alpha,z},\hat{\util}_{\alpha,z}) \\
&\leq \alpha\bar{\beta}_1 \cdot(\hat\type^{(2)}-\hat\type^{(1)})^2
- \frac{1}{2\ability}(1-\alpha) \cdot \underline{\beta}_1 \cdot (\ability\cdot (\tilde\type^{(1)} - \hat\type^{(1)})- \epsilon_1)\\
&\leq \alpha\bar{\beta}_1 \cdot\rbr{(\epsilon_2+\epsilon_0)^2\cdot\frac{\bar{\beta}_1}{\underline{\beta}_1}+\epsilon_0+\epsilon_2}^2
- \frac{1}{2\ability}(1-\alpha) \cdot \rbr{\ability\rbr{\epsilon_2 - \epsilon_0- \sqrt{\frac{8\epsilon_0\bar{\beta}_1}{\ability\underline{\beta}_1}}} - \epsilon_1}<0.
\end{align*}
The last inequality comes from the choice of $\epsilon_0, \epsilon_1, \epsilon_2$. 
Therefore, 
the contest $(\tilde{\alloc}_{\alpha,z},\tilde{\util}_{\alpha,z})$ is not optimal. 
\end{proof}

\end{document}